\newcommand{\ownthmSpaceAbove}{5pt}
\newcommand{\ownthmSpaceBelow}{5pt}
\newcommand{\resetCurThmBraces}{%
\gdef\curThmBraceOpen{(}%
\gdef\curThmBraceClose{)}}
\newcommand{\removeThmBraces}{%
\gdef\curThmBraceOpen{}%
\gdef\curThmBraceClose{}}
\declaretheoremstyle[
    spaceabove=\ownthmSpaceAbove,
    spacebelow=\ownthmSpaceBelow,
    headpunct=.,
    postheadspace=.5em,
    notebraces={\curThmBraceOpen}{\curThmBraceClose},
    postheadhook={\resetCurThmBraces},
]{definition}
\declaretheoremstyle[
    style=definition,
    bodyfont=\itshape,
    notebraces={\curThmBraceOpen}{\curThmBraceClose},
    postheadhook={\resetCurThmBraces},
]{theorem}
  \renewcommand{\subsectionautorefname}{Section}%
\newcommand{\defaultshowkeysformat}[1]{%
\StrSubstitute{#1}{ }{\textvisiblespace}[\TEMP]%
\parbox[t]{\marginparwidth}{\raggedright\normalfont\small\ttfamily\(\{\){\color{red!50!black}\expandafter\seqsplit\expandafter{\TEMP}}\(\}\)}%
}
\renewcommand*\showkeyslabelformat[1]{%
\noexpandarg%
\defaultshowkeysformat{#1}%
}
\tikzset{                                    
  symbol/.style={
    draw=none,
    every to/.append style={
      edge node={node [sloped, allow upside down, auto=false]{$#1$}}}
  }
}
\setlist[enumerate,1]{label=(\arabic*),font=\normalfont,align=left,leftmargin=0pt,labelindent=0pt,listparindent=\parindent,labelwidth=0pt,itemindent=!,topsep=3pt,parsep=0pt,itemsep=3pt,start=1}
\setlist[enumerate,2]{label=(\alph*),font=\normalfont,labelindent=*,leftmargin=*,start=1}
\setlist[itemize]{labelindent=*,leftmargin=*,topsep=5pt,itemsep=3pt}
\setlist[description]{labelindent=*,leftmargin=*,itemindent=-1 em}
\newcommand{\takeout}[1]{\empty}
\theoremstyle{theorem}
\newtheorem{theorem}{Theorem}[section]
\newtheorem{corollary}[theorem]{Corollary}
\newtheorem{lemma}[theorem]{Lemma}
\newtheorem{proposition}[theorem]{Proposition}
\theoremstyle{definition}
\newtheorem{defn}[theorem]{Definition}
\newtheorem{expl}[theorem]{Example}
\newtheorem{rem}[theorem]{Remark}
\newtheorem{nota}[theorem]{Notation}
\newtheorem{ass}[theorem]{Assumption}
\newcommand{\syncTh}{\JSL(\A)/\mathsf{S}}
\newcommand{\Pos}{\mathsf{Pos}}
\newcommand{\Set}{\mathsf{Set}}
\newcommand{\T}{\mathbb{T}}
\newcommand{\M}{\mathbb{M}}
\newcommand{\C}{\mathscr{C}}
\newcommand{\A}{\mathcal{A}}
\newcommand{\FA}{\mathfrak{A}}
\newcommand{\partition}{\mathsf{psums}}
\newcommand{\summands}{\mathsf{smd}}
\newcommand{\arity}{\mathsf{ar}}
\renewcommand{\Im}{\mathrm{Im}}
\newcommand{\JSL}{\mathsf{JSL}}
\newcommand{\Alg}{\mathsf{Alg}}
\newcommand{\conv}{\mathsf{conv}}
\newcommand{\gS}{\textsf{PT}}
\newcommand{\sdist}{\mathcal S}
\newcommand{\down}{\mathop{\downarrow}}
\newcommand{\E}{\mathcal{E}}
\newcommand{\id}{\mathsf{id}}
\newcommand{\ext}[1]{\widehat{#1}}
\newcommand{\ol}[1]{\overline{#1}}
\newcommand{\monoto}{\rightarrowtail}
\newcommand{\epito}{\twoheadrightarrow}
\newcommand{\pow}{\mathscr P}
\newcommand{\Posf}{\Pos_\mathsf{f}}
\newcommand{\ari}{P}
\newcommand{\sub}{\mathsf{sub}}
\newcommand{\subs}{\mathsf{subs}}
\newcommand{\terms}{\mathscr{T}}
\newcommand{\termst}[1]{\terms_{\T,#1}}
\newcommand{\rterms}[1]{\mathsf{T}_{\Sigma,#1}}
\newcommand{\xra}[1]{\xrightarrow{~#1~}}
\newcommand{\N}{\mathds{N}}
\newcommand{\cat}[1]{\mathscr{#1}}
\newcommand{\catA}{\cat A}
\newcommand{\catB}{\cat B}
\title{Behavioural Preorders via Graded Monads}
 \author{\IEEEauthorblockN{%
     Chase Ford, %
     Stefan Milius,
     Lutz Schr\"{o}der}\thanks{Work by the first and the third author supported by the DFG within the Research and Training Group 2475 "Cybercrime and 
		Forensic Computing" (grant number 393541319/GRK2475/1-2019). Work of the second author supported by the DFG under projects MI 717/7-1.}
   \IEEEauthorblockA{Friedrich-Alexander-Universit\"{a}t Erlangen-N\"{u}rnberg, Germany}
	 }
\begin{document}
\FXRegisterAuthor{cf}{acf}{CF}
\FXRegisterAuthor{sm}{asm}{SM}
\FXRegisterAuthor{ls}{als}{LS}

\maketitle              

%
%

\begin{abstract}%
Like notions of process equivalence, behavioural preorders on processes come in many flavours, ranging from fine-grained comparisons such as ready simulation to coarse-grained ones such as trace inclusion. Often, such behavioural preorders are characterized in terms of theory inclusion in dedicated characteristic logics; e.g.\ simulation is characterized by theory inclusion in the positive fragment of Hennessy-Milner logic. We introduce a unified semantic framework for behavioural preorders and their characteristic logics in which we parametrize the system type as a functor on the category $\Pos$ of partially ordered sets following the paradigm of universal coalgebra, while behavioural preorders are captured as graded monads on~$\Pos$, in generalization of a previous approach to notions of process equivalence. We show that graded monads on~$\Pos$ are induced by a form of graded inequational theories that we introduce here. Moreover, we provide a general notion of modal logic compatible with a given graded behavioural preorder, along with a criterion for expressiveness, in the indicated sense of characterization of the behavioural preorder by theory inclusion. We illustrate our main result on various behavioural preorders on labelled transition systems and probabilistic transition systems.
\end{abstract}
%

%
\section{Introduction}
\noindent Notions of process equivalence, e.g.\ cast as notions of
state equivalence on labelled transition systems, vary on a broad
scale from bisimilarity to trace equivalence, referred to as the
linear-time branching-time spectrum~\cite{Glabbeek90}. Similar
phenomena arise in other system types, e.g.\ in probabilistic
transition systems~\cite{JouSmolka90}, where the spectrum ranges from
probabilistic bisimilarity to probabilistic trace equivalence. In the
present paper, we are concerned with spectra of behavioural
\emph{preorders} (rather than equivalences), in which a process~$A$ is
considered to be `above' a process~$B$ if~$A$ can, in some sense,
match all behaviours of~$B$. Well-known behavioural preorders on
labelled transition systems include simulation and variants thereof
such as ready or complete simulation, as well as various notions of
trace inclusion. 

Previous work~\cite{MPS15,DMS19} has shown that graded monads on the
category $\Set$ of sets provide a useful generic framework that
captures the most important equivalences on the linear-time
branching-time spectrum, as well as the mentioned equivalences on
probabilistic systems. Graded monads thus put process equivalences on
an algebraic footing. In the present paper, we extend this paradigm to
behavioural preorders on processes, using graded monads on the
category $\Pos$ of partial orders. We provide a notion of graded
ordered algebraic theory and show that such theories induce finitary
enriched graded monads on $\Pos$ in the same sense as standard
algebraic theories induce finitary monads on~$\Set$.\cfnote{Preceding
  sentence touched.} We illustrate this framework by showing that it
subsumes well-known notions of simulation and trace inclusion on
labelled transition systems and on probabilistic transition systems.

Both equivalences and preorders on processes can often be
characterized in terms of theory coincidence or theory inclusion in
dedicated modal logics, which are then called \emph{characteristic
logics}. The archetypal result of this type is the classical
Hennessy-Milner theorem, which states that states in finitely
branching labelled transition systems are bisimilar iff they satisfy
the same formulae of Hennessy-Milner logic, i.e.~if the theories of
the two states coincide~\cite{HennessyMilner85}; similarly, a
state~$y$ in a finitely branching labelled transition system simulates
a state~$x$ iff the theory of~$x$ in the positive fragment of
Hennessy-Milner logic is included in that of~$y$. We lift generic
results on characteristic logics for process
equivalences~\mbox{\cite{MPS15,DMS19}} to the setting of behavioural
preorders, obtaining a general criterion that covers numerous
characteristic logics for simulation- and trace-type behavioural
preorders on labelled transition systems and probabilistic transition
systems.

\subsection*{Related Work}
\noindent There have been a number of approaches to capturing process
equivalences coalgebraically, e.g.\ based on distributive laws
inducing liftings of functors to Kleisli~\cite{HasuoEA07} or
Eilenberg-Moore~\cite{JacobsEA15} categories of a monad split off from
the type functor, and corecursive algebras~\cite{JacobsEA18}, whose
precise relationship to graded monads has been discussed in previous
work~\cite{MPS15,DMS19}. Roughly speaking, these approaches cover
bisimulation-type equivalences and trace-type equivalences but do not
appear to subsume intermediate equivalences such as simulation
equivalence. Equivalences may be \emph{defined} by distinguishing suitable
modal logics~\cite{KlinRot15}, while we pursue the opposite approach
of designing characteristic logics for given equivalences or
preorders. Variations within trace-type equivalences, such as complete
or ready trace semantics, can be captured via so-called
decorations~\cite{BonchiEA16}. In recent work~\cite{KupkeRot20}, Kupke
and Rot present a highly general fibrational framework for coinductive
predicates and their characteristic logics. The focus on coinductive
predicates implies a certain degree of orthogonality to the
fundamentally inductive framework of graded monads; e.g.\ trace
equivalence is not a coinductive predicate. On the other hand, the
cited framework does cover simulation-type equivalences. The
fibrational setup implies a very high level of generality, while the
framework of graded monads aims at encapsulating as much proof work as
possible at the generic level. There is quite a number of coalgebraic
approaches to notions of simulation that are uniquely induced from
either a functor on $\Pos$ or a set functor equipped with a compatible
ordering, in combination with relation
liftings~\cite{Baltag00,HesselinkThijs00,Worrell00,HughesJacobs04,Cirstea06,FabregasEA09,Lev11,GorinSchroder13};
such coalgebraic notions of simulation typically do capture only
simulation-type preorders, and not, e.g., preorders of trace inclusion
type. At this level of generality, Kapulkin et al.~\cite{KKV12} give a
coalgebraic generic criterion for expressiveness of characteristic
logics for notions of simulation induced from a functor on $\Pos$
(while graded monads allow varying behavioural preorders on a given
system type). This criterion in fact turns out to be a special case of
our main result. Similarly, Baltag~\cite{Baltag00} characterizes
relator-based coalgebraic simulation for weak-pullback-preserving set
functors by preservation of a variant of Moss-style coalgebraic
logic~\cite{Moss99}; Cirstea~\cite{Cirstea06} provides a modularity
mechanism in essentially this setting.

Graded monads originally go back to Smirnov~\cite{Smi08}, and have
subsequently been generalized to indices from a monoidal
category~\cite{Kat14,Mellies17}. Our (graded) ordered algebraic
theories generalize an existing notion of theories with only discrete
arities~\cite{Bloom76}, which correspond to strongly finitary
monads~\cite{KKV12}, as well as a notion of inequational theories for
plain (i.e.~non-graded) monads on
posets~\cite{AdamekEA20}.

%
\section{Preliminaries}
\noindent We recall some basic definitions and results on partially
ordered sets and \emph{(universal) coalgebra}~\cite{Rut00}.
\subsection{Partially Ordered Sets}\label{sec:posets}
\noindent We refer to partially ordered sets as \emph{posets} as
usual, and write $\Pos$ for the category of posets and monotone
maps. We denote by $1$ the terminal object of $\Pos$, the one-element
poset, and $!\colon X\to 1$ denotes the unique map from a poset $X$
into~$1$. For every poset $X$ we denote by $|X|$ its underlying set
(and sometimes consider this as a discretely ordered poset). The
category~$\Pos$ is Cartesian closed, with the exponential $Y^X$ given
by the hom-set $\Pos(X, Y)$, i.e.\ the set of monotone maps
$f\colon X\rightarrow Y$, ordered point-wise. That is, for
$f,g\colon X\rightarrow Y$, we have $f\leq g$ iff $f(x)\leq g(x)$ for
all $x\in X$. We denote the unordered set of monotone functions
$X\to Y$ by $\Pos_0(X, Y)$, i.e., \;$\Pos_0(X, Y) = |\Pos(X, Y)|$. An
\emph{embedding} is a monotone map $f\colon X\rightarrow Y$ which is
also order-reflecting: $f(x)\leq f(y)$ implies $x\leq y$ (such~$f$ are
injective).  Thus, embeddings are essentially induced
subposets. Categorically, embeddings are precisely the regular
monomorphisms in $\Pos$.

\begin{defn}
  A functor $F\colon\Pos\rightarrow\Pos$ is \emph{locally monotone}
  (or \emph{enriched}) if for all posets $X, Y$ and all $f,g\in\Pos(X, Y)$,
  $f\leq g$ implies $Ff\leq Fg$.
\end{defn}
\noindent%
\takeout{The category $\Pos$ is  locally finitely presentable~\cite{AR94} 
with finitely presentable objects being precisely the finite posets. 
That is, every poset is the colimit of a directed diagram of finite
posets.}%
Just as the natural numbers $[n]:=\{1,\dots, n\}$ represent
all finite sets up to isomorphism, we fix a set 
\[
\Posf
\]
of finite posets 
representing all finite posets up to isomorphism. Without loss of
generality, we assume that the posets in $\Posf$ are carried by 
initial segments of the natural numbers. However, note that their order
is not in general the usual ordering of the natural numbers. We will
have occasion for the following notions:

\begin{defn}
Let $X$ be a poset.
\begin{enumerate}
\item A subset $S\subseteq X$ is \emph{convex} if for $x, y\in S$ and
  $z\in X$, $x\leq z\leq y$ implies $z\in S$. The \emph{convex hull}
  of a subset $S'\subseteq X$ is the set
  $\conv(S')=\{z\in X\mid \exists x,y\in S'.\,x\le z\le y\}$. A convex
  set $ S\subset X$ is \emph{finitely generated} if there is a finite
  subset $S'\subseteq S$ such that $S=\conv(S')$.

\item A subset $S\subseteq X$ is \emph{downwards closed} if for
  $x\in S$ and $y\in X$, $y\leq x$ implies $y\in S$. The
  \emph{downwards closure} of a set $S'\subseteq X$ is the set
  $\down S'=\{y\in X\mid \exists x\in S'\mid y\le x\}$. A
  downwards closed subset $S\subseteq X$ is \emph{finitely generated}
  if $S=\down S'$ for some finite set~$S'$.
\end{enumerate}
\end{defn}
\subsection{Coalgebras on $\Pos$}\label{S:coalgebra}
\noindent Coalgebras provide a generic notion of state-based
transition system. We briefly recall the general notion before giving a
more specialized treatment of coalgebras on $\Pos$.

\begin{defn} 
  Let $G\colon\C\to\C$ be a functor on a category~$\C$. A
  \emph{$G$-coalgebra} is a pair $(X, \gamma)$ where $X$ is a
  $\C$-object (the \emph{state space}) and $\gamma\colon X\to GX$ is a
  $\C$-morphism (the \emph{transition map}). A \emph{$G$-coalgebra
    morphism} from $(X, \gamma)$ to $(Y, \delta)$ is a $\C$-morphism
  $h\colon X\to Y$ such that $\delta\cdot h=Gh\cdot\gamma$. We denote
  by $\mathsf{Coalg}(G)$ the category of $G$-coalgebras and
  $G$-coalgebra morphisms. We often write $(X,\gamma)$ as
  $\gamma\colon X\to GX$ or just~$\gamma$.
\end{defn}
\noindent We will work exclusively with coalgebras on $\C=\Pos$. For
$G\colon\Pos\to\Pos$, a $G$-coalgebra consists of a poset~$X$ of
\emph{states} and a monotone function $\gamma\colon X\to GX$; we may
think of $\gamma(x)$ as a structured collection of \emph{successors}
of a state~$x\in X$.

\begin{expl}\label{E:convex}
  Let $\A$ be a (discretely ordered) set of \emph{actions}.
  \begin{enumerate}
  \item\label{E:convex:2} Let $\pow^{\down}_{\omega}$ be the
    functor taking each poset to its set of finitely generated
    downwards closed subsets, ordered by set inclusion. The action of
    $\pow^{\down}_{\omega}$ on monotone maps $f\colon X \to Y$ is
    given by taking the downwards closure of the direct image:
    $\pow^{\down}_{\omega}f(S) = \down f[S]$ for
    $S \in \pow^{\down}_{\omega}X$. A coalgebra for
    $\pow^{\down}_{\omega}(\A\times(-))$ is a standard
    finitely branching~\emph{$\A$-labelled transition system}
    (\emph{$\A$-LTS}, for short) that is additionally equipped with a
    partial order which is a simulation. In the case where $|\A|=1$,
    such structures are exactly the ordered transition systems
    that appear in work on well-structured transition systems
    (e.g.~\cite{SS14}). We note that we can turn every $\A$-LTS into a
    coalgebra for $\pow^{\down}_{\omega}(\A\times(-))$ by equipping
    it with the discrete ordering.
  \item\label{item:convex} Let $C_{\omega}\colon\Pos\rightarrow\Pos$
    be the functor taking each poset to its set of finitely generated
    convex subsets, equipped with the Egli-Milner
    ordering: $A\leq B$ in $C_{\omega}X$ iff
    \[
      \forall a\in A.\exists b\in B. (a\leq b) \land\forall b\in B.\exists a\in A. (a\leq b). 
    \]
    	 The action of $C_{\omega}$ on a monotone map
         $f\colon X\rightarrow Y$ is given by
    \[
      C_\omega f(\conv(s_1,\dots, s_n))=\conv(f(s_1),\dots, f(s_n))
    \]
    (one easily checks well-definedness). A
    $C_{\omega}(\A\times(-))$-coalgebra is thus an $\A$-LTS equipped
    with a partial order that is a bisimulation in the sense of Park
    and Milner~\cite{Mil89, Par81}. Again, these structures subsume
    standard unordered $\A$-LTS.
  \item\label{E:convex3} Let $\sdist$ be the functor taking each poset
    $X$ to its set of finitely supported subdistributions, i.e.,
    functions $\mu\colon X\to [0,1]$ such that
    $\sum_{x\in X}\mu(x)\leq 1$ and the set
    $\mathsf{supp}(\mu):=\{x\in X~|~\mu(x)\neq 0\}$ is finite; we
    describe the ordering of $\sdist(X)$ in~\autoref{E:induced}. An
    $\sdist(\A\times(-))$-coalgebra is a (generative) ordered
    probabilistic transition system with possible deadlock.
   
  \end{enumerate}
\end{expl}
\noindent Just as in the $\Set$-based setting, coalgebras on the category
of posets come with a natural notion of \emph{observational behaviour}
of states. However, due to the additional structure of an ordering of states, 
observational behaviour is a preorder on states in $G$-coalgebras rather 
than an equivalence relation.

\begin{defn}[$G$-simulation]
Let $x\in X, y\in Y$ be states in $G$-coalgebras $\gamma\colon X\to GX,
\delta\colon Y\to GY$. We say that $y$ \emph{$G$-simulates} $x$ if there 
exist $G$-coalgebra morphisms $f\colon (X, \gamma)\rightarrow (Z, \zeta)$,
$g\colon (Y, \delta)\rightarrow (Z, \zeta)$ such that $f(x)\leq g(y)$;
we say that~$x$ and $y$ are \emph{$G$-behaviourally equivalent} if $f,g$
can be chosen so that $f(x)= g(y)$.
\end{defn}
\takeout{   
\begin{expl}\label{E:trans}
  Building on \autoref{E:convex}, we now introduce a notion of
  \emph{$\A$-labelled transition system} ($\A$-LTS) in $\Pos$.
  \begin{enumerate}
  \item Fix a set $\A$ of \emph{actions}. By $\A\times(-)$ we denote
    the functor which maps a poset $X$ to the poset $\A\times X$ whose
    underlying partial order is given by the usual product ordering
    with $\A$ viewed as a discretely ordered poset. Thus, for all
    $(a, x),(b,y)\in\A\times X$, we put $(a, x)\leq (b, y)$ if and
    only if $a=b$ and $x\leq y$.

  \item By an \emph{$\A$-LTS}, we understand a coalgebra for the
    functor $C_{\omega}(\A\times(-))$. Such a coalgebra
    $\gamma\colon X\rightarrow C_{\omega}(\A\times X)$ assigns to each
    state $x\in X$ a finitely generated convex set
    $\gamma(x)\in C_{\omega}(\A\times X)$ of pairs $(a, y)$
    representing that $y$ is an $a$-\emph{successor} of $x$.
	
    It is not hard to see that for an $\mathcal{A}$-LTS $(X, \gamma)$,
    states $x, y \in X$ are bisimilar if $x\leq y$.  Indeed, the
    underlying partial ordering on the poset $X$ of states is itself a
    witnessing bisimulation.  In particular, the notions of similarity
    and behavioural equivalence for
    $C_{\omega}(\mathcal{A}\times(-))$-coalgebras coincide.
  \end{enumerate}
\end{expl}
}
\noindent We will be interested in an alternative
notion of $G$-simulation which only takes finite-step behaviours
into account, and is thus better suited to the
inherently inductive nature of the graded behavioural preorders
considered in~\autoref{S:gbp}. 
\begin{defn}[Finite-depth $G$-simulation]\label{D:fdsim}
For a $G$-coalgebra
$\gamma\colon X\rightarrow GX$, we inductively define the sequence
$\gamma_n\colon X\rightarrow G^n1$ of \emph{$n$-step behaviour maps}
by
\[
\gamma_0:=\; !\colon X\to 1; \quad \gamma_{n+1}:= G\gamma_n\cdot\gamma.  
\]
Given $G$-coalgebras $\gamma\colon X\to GX$,
$\delta\colon Y\rightarrow GY$, we say that $y\in Y$
$G$-\emph{simulates} $x\in X$ \emph{at finite depth} if
$\gamma_n(x)\leq\delta_n(y)$ for all $n\in\omega$.
\end{defn}
\noindent 
Recall (e.g.~\cite{AdamekR94}) that an endofunctor is \emph{finitary}
if it preserves filtered colimits. Intuitively, for an
endofunctor on $\Pos$, being finitary means that the functor is
determined by its action on finite posets and monotone maps
between them. All functors in~\autoref{E:convex} fit the following
profile:
\begin{ass}\label{ass:functor}
  We fix from now on a finitary enriched functor $G\colon\Pos\to\Pos$
  that preserves epis and regular monos. As shown by
  Ad\'{a}mek~\cite[Thm.~4.6]{Ada03}, in extension of a well-known
  result by Worrell~\cite{Wor05}, \emph{finite depth $G$-simulation
  coincides with $G$-simulation} under these assumptions.
\end{ass}

\section{Graded Monads and Graded Algebras}\label{sec:graded}

\noindent 
We recall the notion of a graded monad~\cite{Smi08,MPS15,DMS19} with
grades in the monoid $(\N, +, 0)$. 
Ultimately, this amounts to restricting to finite-depth behavioural semantics
along the lines of finite-depth $G$-similarity (\autoref{D:fdsim}).

\begin{defn}[Graded Monads]\label{D:gm}
A \emph{graded monad} $\M$ on a category $\C$ 
consists of a family 
$(M_n\colon\mathscr{C}\rightarrow\mathscr{C})_{n<\omega}$
of functors, 
a natural transformation $\eta\colon\id_{\mathscr{C}}\to M_0$
(the \emph{unit}), and a family 
\[
\mu^{n,k}\colon M_nM_k\to M_{n+k} \quad (n,k<\omega)
\]
of natural transformations (the \emph{multiplication}). This data
is subject to \emph{unit laws} stating that the triangles in 
the diagram
\[
  \begin{tikzcd}[column sep=30]
    &
    M_nX
    \arrow[ld, "M_n\eta_X"'] \arrow[rd, "\eta_{M_nX}"]  \arrow[d, "\id"]
    \\
    M_nM_0X
    \arrow[r, "{\mu^{n,0}}"]
    &
    M_nX
    &
    M_0M_nX \arrow[l, "{\mu^{0,n}}"']
  \end{tikzcd}
\]
\takeout{
\[
\mu^{0,n}\cdot \eta M_n=\id_{M_n}=\mu^{n,0}\cdot M_n\eta
\]
}
commute for all $n<\omega$ and the \emph{associative laws}
\begin{center}
  \begin{tikzcd}[column sep = 40]
    M_nM_kM_m
    \arrow[r, "{M_n\mu^{k,m}}"]
    \arrow[d, "{\mu^{n,k}M_m}"']
    &
    M_{n}M_{k+m} \arrow[d, "{\mu^{n, k+m}}"]
    \\
    M_{n+k}M_m \arrow[r, "{\mu^{n+k,m}}"]
    &
    M_{n+k+m}                               
\end{tikzcd}
\end{center}
for all $n,k,m<\omega$. The graded monad $\M$ is \emph{finitary} if
every~$M_n$ is finitary. When $\C=\Pos$,  $\M$
is \emph{enriched} if every~$M_n$ is locally monotone.
\end{defn}

\begin{expl}\label{E:gradedmonad}
  We recall some basic examples of graded monads for later use in our
  examples of graded behavioural preorders. Further examples will be
  given once the notion of graded theory is in place
  (\autoref{E:induced}).
\begin{enumerate} 
\item\label{E:inducedmonad} The $n$-fold composition $M_n=G^n$ of the
  functor~$G$ defines a graded monad with unit $\eta=\id$ and
  multiplication $\mu^{n,k}= \id_{G^{n+k}}$~\cite{MPS15}.

\item Every enriched monad $(T, \eta, \mu)$ on $\Pos$ defines a graded
  monad with $M_nX:= T(\A^n\times X)$ \cite{MPS15}. Graded monads of
  this type capture trace-type equivalences and preorders; we will
  later see an example capturing probabilistic trace inclusion,
  with~$T$ being a subdistribution monad on~$\Pos$.
  
\item\label{item:Kleisli} Given a functor~$F$ and a monad~$T$
  on~$\Pos$ and a monad-over-functor distributive law
  $\lambda\colon FT\to TF$ -- a so-called \emph{Kleisli law} -- we
  obtain a graded monad~$\M$ with $M_nX=TF^nX$. In fact, the previous
  example is an instance of this construction, being induced by a
  distributive law $A\times T(-)\to T(A\times(-))$~\cite{MPS15}. This
  relates to Kleisli-style coalgebraic trace semantics on set
  coalgebras~\cite{HasuoEA07} but works beyond sets.
\item\label{item:EM} Similarly, given~$F$,~$T$ as above 
  and a functor-over-monad distributive law $\lambda\colon TF\to FT$
  -- a so-called \emph{Eilenberg-Moore law} -- we obtain a graded
  monad~$\M$ with $M_nX=F^nTX$~\cite{MPS15}. This relates to
  Eilenberg-Moore-style coalgebraic trace semantics~\cite{JacobsEA15}
  (for which a typical example on the category of sets would have
  $T=\pow$ and $F=X^A$, so $FT$-coalgebras are labelled transition
  systems and $F^nT1\cong(\pow 1)^{A^n}$ consists of
  sets of length-$n$
  traces).  
\end{enumerate}
\end{expl}
\subsection*{Graded Algebras}

\noindent 
Just as in the case of plain monads, graded monads enjoy
both Eilenberg-Moore and Kleisli-style constructions~\cite{MPS15,FKM16}. In
particular, they come with a notion of \emph{graded algebras}~\cite{MPS15}, which
will be of central importance in the semantics of graded logics.

\begin{defn}\label{D:galg}
  Let $n\in\omega$ and let $\mathbb{M}$ be a graded monad on a
  category $\C$. An \emph{$M_n$-algebra}
  $A= ((A_k)_{k\leq n}, (a^{m,k})_{m+k\leq n})$ consists of a family
  $(A_k)_{k\leq n}$  of $\C$-objects  (the \emph{carriers} of~$A$) and a
  family
  \[
    a^{m,k}\colon M_mA_k\rightarrow A_{m+k} \quad (m+k\leq n)
  \]
  of morphisms in $\C$ (the \emph{structure maps} of~$A$) such that
  $a^{0,m}\cdot \eta_{A_m}=\id_{A_m}$ for all $m\leq n$ and, whenever
  $m+k+r\leq n$, the diagram below commutes:
  \begin{equation}\label{D:gradalg}
    \begin{tikzcd}[column sep = 40]
      M_mM_rA_k
      \arrow[r, "{M_ma^{rk}}"]
      \arrow[d, "{\mu^{mr}_{A_k}}"']
      &
      M_mA_{r+k} \arrow[d, "{a^{m, r+k}}"]
      \\
      M_{m+r}A_k \arrow[r, "{a^{m+r, k}}"]
      &
      A_{m+r+k}                           
    \end{tikzcd}
  \end{equation}
  (from now on, we omit commas in superscripts of
  $a,\mu$ when confusion is unlikely).  A \emph{morphism} of
$M_n$-algebras from $A$ to $((B_k)_{k\leq n}, (b^{mk})_{m+k\leq n})$
is a family $f_k\colon A_k\rightarrow B_k$ of $\C$-morphisms  such
that $b^{mk}\cdot M_mf_k=f_{m+k}\cdot a^{mk}$ whenever $m+k\leq n$.
  
$M_{\omega}$-algebras and their morphisms are defined similarly by
allowing indices to range over all $m,r,k\in\omega$.
\end{defn}
\removeThmBraces \noindent It is immediate from the definitions that for all
$l,n,X$, the family $(M_{l+k}X)_{k\le n}$, with the multiplication
maps as structure maps, is an $M_n$-algebra. Indeed,
$((M_kX)_{k\leq n}, (\mu^{mk})_{m+k\leq n})$ is the free $M_n$-algebra
on $X$, w.r.t.\ the forgetful functor which maps an $M_n$-algebra
$((A_m), (a^{mk}))$ to $A_0$~\cite[Prop.~6.2]{MPS15}. The monad
induced by this adjunction is~$M_0$.
\begin{rem}
  Graded monads in the above sense go back to
  Smirnov~\cite{Smi08}. They are a special instance of the more
  general notion of graded monads used by Katsumata~\cite{Kat14},
  Fujii et al.~\cite{FKM16}, and Mellies~\cite{Mellies17}. Their
  notion is that of a \emph{lax monoidal action}
  $\cat M \times \cat C \to \cat C$, where $\cat C$ is a category and
  $\cat M$ a monoidal category (cf.~\cite{JanelidzeK01}). Taking
  $\cat M$ to be the monoid $(\N, + ,0)$, considered as a discrete
  monoidal category, we obtain our notion of graded monad. The notion
  of $M_\omega$-algebras, originally defined for our present notion of
  graded monad~\cite{MPS15}, was later extended to graded
  Eilenberg-Moore algebras in the above-mentioned more general
  setting~\cite[Def.~1]{FKM16}.
\end{rem}

%
\section{Presentations of Graded Monads on $\Pos$}\label{S:presentations}
\noindent From the perspective of classical (universal) algebra, a
simple notion of ordered algebraic theory arises by just replacing 
equations by inequalities. An ordered algebra is then
understood as a poset $A$ equipped with an $n$-ary monotone function
$A^n\to A$ for each $n$-ary operation symbol. Varieties of ordered
algebras in this sense were introduced by Bloom~\mbox{\cite{Bloom76, 
BW83}}; they correspond precisely to \emph{strongly finitary}
monads on $\Pos$~\cite{KV17} in the sense of Kelly and
Lack~\cite{KL93}.

From a categorical perspective, there is interest in an alternative
notion of ordered theory which captures \emph{all} finitary monads on
$\Pos$, in analogy with the classic correspondence between varieties
of algebras and finitary monads on $\Set$. In joint work with
Ad\'{a}mek~\cite{AdamekEA20}, we have introduced a notion of ordered
algebraic theory specified by operations with finite posets as arities
and inequations that depend on ordered contexts -- an idea tracing
back to work by Kelly and Power~\cite{KP93}. These theories are indeed
in correspondence with \emph{all} finitary monads on $\Pos$.

We introduce a graded extension of the latter notion of ordered
algebraic theories; we show that every graded theory~$\T$ induces a
graded monad $M^{\T}$ on $\Pos$. As a key step in this construction,
we introduce an \emph{inequational logic} for a given graded theory,
which we then use in a free model construction. We illustrate these
devices through examples motivated from the perspective of process
semantics. It happens that our current examples do not mention
properly ordered arities; they do however involve inequalities with
properly ordered contexts.

\subsection{Graded Theories}

\noindent We begin by introducing a notion of graded signature along
with a corresponding notion of graded ordered algebra. The salient
feature of a graded signature is that its operations are equipped with
a \emph{depth}, which should be understood as a measure of how many
steps its operations look ahead in the transition structure. We give
the general definition first but note already now that we will later
restrict attention to the case where all operations have depth
either~$0$ or~$1$.

\begin{defn}
  \begin{enumerate}
  \item A \emph{graded signature} $\Sigma$ consists of a family of
    sets $\Sigma(\ari, n)$ of \emph{depth-$n$ operations of arity
      $\ari$}, where $\ari\in\Posf$ and $n\in\omega$.  Given an
    operation $\sigma\in\Sigma$, we typically denote its arity and
    depth by $\arity(\sigma)$ and $d(\sigma)$, respectively.
	
  \item Let $n\in\omega$. A \emph{$(\Sigma, n)$-algebra} is a
    structure $A=((A_k)_{k\leq n}, (-)^A)$ consisting of a family
    $(A_k)_{k\leq n}$ of posets and a family
    \[
      \sigma^A_k\colon \Pos(\arity(\sigma), A_k)\to A_{d(\sigma)+k},
      \qquad
      \text{$d(\sigma)+k\leq n$}
    \]
    of monotone functions for all $\sigma\in\Sigma$.
    A \emph{homomorphism of $(\Sigma, n)$-algebras} is a family
    $(h_k\colon A_k\to B_k)_{k\leq n}$ of monotone functions such that
    $h_{k+d(\sigma)}\cdot \sigma^A_k = \sigma^B_k\cdot \Pos(\arity(\sigma), h_k)$ for 
    all operations $\sigma\in\Sigma$ and $k + d(\sigma) \leq n$. We denote by
    $\Alg(\Sigma, n)$ the category of $(\Sigma, n)$-algebras and their
    homomorphisms. 
    
    The notions of $(\Sigma, \omega)$-\emph{algebras} and 
    $(\Sigma, \omega)$-\emph{homomorphisms} are defined similarly, only 
    now~$k$ ranges over $\omega$.
  \end{enumerate}
\end{defn}

\noindent Intuitively, the carrier $A_k$ consists of the algebra
elements which have already absorbed operations up to depth~$k$. In
particular, $(\Sigma, n)$-algebras are meant to interpret operations
of accumulated depth at most~$n$. We fix a graded signature~$\Sigma$ for the
remainder of this section. We now introduce a corresponding notion of
\emph{uniform-depth $\Sigma$-terms}, which we inductively generate according
to a similar intuition as described above:

\begin{defn}
  Given a poset $\Gamma$ of \emph{variables} (such posets are called
  \emph{context} in the sequel)\smnote{So \emph{context} includes
    variables, and we do not need to speak of ``context of variables''
    (there are \emph{no} other contexts) in our text.}, the sets
  $\mathsf{T}_{\Sigma, k}(\Gamma)$ of \emph{$\Sigma$-terms of
    (uniform) depth~$k$ (in context $\Gamma$)} are defined recursively
  as follows:
    \begin{itemize}
    \item every variable $x\in \Gamma$ is a term of depth~$0$.
      
    \item Given an operation $\sigma\in\Sigma(\ari, k)$ and a function
      $f\colon |\ari|\to\mathsf{T}_{\Sigma, m}(\Gamma)$, $\sigma(f)$ is
      a term of depth~($k+m)$.
    \end{itemize}
\end{defn}
\begin{rem}\label{R:constants}
  Note that constants, being operations whose arity is the empty
  poset, are terms of every uniform depth $k \geq d(\sigma) $ via the
  unique empty functions
  $f\colon \emptyset \to \mathsf{T}_{\Sigma, m}(\Gamma)$
  $(m \in \omega)$.
\end{rem}

\begin{nota}
  It will sometimes be more convenient to view terms of the shape
  $\sigma(f)$ as expressions $\sigma(t_1,\dots, t_n)$ where
  $f\colon|\arity(\sigma)|\to\mathsf{T}_{\Sigma, m}(\Gamma)$ is the
  assignment $i\mapsto t_i$ for all $i\in\arity(\sigma)$. In this case,
  we simply write $\sigma(t_i)$. We will pass freely between these 
  presentations without further mention.
\end{nota}

\begin{expl}\label{E:terms}
  Consider the graded signature $\Sigma$ consisting of a depth-0
  operation $+$ whose arity is the discrete poset $\{0, 1\}$ and a
  depth-$1$ operation $a$ of arity $\{0\}$. The $\Sigma$-terms
  of depth~$0$ in context~$\Gamma$ are finite sums of variables
  from~$\Gamma$, such as $(x+y)+x$. Terms of depth~$n+1$ are finite
  sums of terms $a(s)$ where~$s$ is a term of depth~$n$. E.g.,
  $x+a(y)$ is \emph{not} a term of any uniform depth since $x$
  and $a(y)$ do not share a common depth.
\end{expl}

\noindent Due to the nature of partially ordered arities, the evaluation 
of $\Sigma$-terms in a $(\Sigma, n)$-algebras is given by a partial map.

\begin{defn}
  Let $A$ be a $(\Sigma, n)$-algebra. Given a context~$\Gamma$ and a
  monotone map $\iota\colon \Gamma\to A_m$ where $m\leq n$, the
  \emph{evaluation map} is the family
\[
\iota^\#_k\colon\mathsf{T}_{\Sigma, k}(\Gamma)\to |A_{m+k}| \quad (m+k\leq n)
\]
of partial maps defined recursively by 
\begin{enumerate}
\item $\iota^\#_0(x)=\iota(x)$ for  $x\in|\Gamma|$, and

\item $\iota^\#_k(\sigma(f))$ is defined for $\sigma\in\Sigma$ and
  $f\colon|\arity(\sigma)|\to\mathsf{T}_{\Sigma, q}(\Gamma)$ such that
  $q+d(\sigma)= k$ (so $\sigma(f)\in\mathsf{T}_{\Sigma, k}$) iff
	\begin{itemize}
	\item all $\iota^\#_q(f(i))$ are defined, and
	\item $i\leq j$ in $\arity(\sigma)$ implies $\iota_q^\#(f(i))\leq\iota_q^\#(f(j))$
		in $A_{q}$;
        \end{itemize}
        in this case, $\iota_k^\#(\sigma(f))=\sigma^A_q(\iota_q^\#\cdot f)$.
\end{enumerate}
\end{defn}

\begin{defn}\label{D:theory}
\begin{enumerate}
\item Let $k\in\omega$. A \emph{depth-$k$ $\Sigma$-inequation (in
    context $\Gamma$)} is a pair $(s, t)$ of depth-$k$ $\Sigma$-terms
  in context $\Gamma$, denoted by $\Gamma\vdash_k s\leq t$.  When the
  depth of $(s, t)$ is not important, we also speak of
  ($\Sigma$-)\emph{inequations}. When~$\Gamma$ is discrete, we often
  omit its mention and just write $s\leq t$. We write
  $\Gamma\vdash_k s=t$ for the conjunction of the inequations
  $\Gamma\vdash_k s\leq t$ and $\Gamma\vdash_k t\leq s$.

\item 
   A $(\Sigma, n)$-algebra~$A$ \emph{satisfies} a
  $\Sigma$-inequation  $\Gamma\vdash_k s\leq t$ if for every
  monotone map $\iota\colon\Gamma\to A_m$ such that
  $m+k\leq n$, both $\iota_k^\#(s)$ and $\iota_k^\#(t)$ are defined
  and $\iota_k^\#(s)\leq\iota_k^\#(t)$.

\item 
  A \emph{graded theory} $\T=(\Sigma,\E)$ consists of a graded
  signature~$\Sigma$ and a set $\E$ of $\Sigma$-inequations, 
  the \emph{axioms} of $\T$.
  
\item A $(\T, n)$-\emph{model}, for $n \leq \omega$, is a
  $(\Sigma, n)$-algebra that satisfies every axiom in~$\E$.
  The \emph{category of $(\T,n)$-models} is the full subcategory of
  $\Alg(\Sigma,n)$ given by all $(T,n)$-models.\smnote{And this is 
  \emph{all} we need to say; no special notation is needed!}
\end{enumerate}
\end{defn}

\noindent 
In particular, a $(\Sigma, n)$-algebra $A$ satisfies
$\Gamma\vdash_k t\leq t$ iff $\iota^\#(t)$ is defined for every
monotone map $\iota\colon\Gamma\to A_m$ such that
$m+k\leq n$. This motivates the following notation:

\begin{nota}
We abbreviate $\Gamma\vdash_k t\leq t$ by 
$\Gamma\vdash_k \down  t$. 
\end{nota}

\begin{rem}
  Let $\Sigma$ be a signature such that every operation has depth
  $0$. Then a $(\Sigma, 0)$-algebra is precisely an ordered algebra in
  the sense of Ad\'{a}mek et al.~\cite{AdamekEA20}; in their
  terminology, a variety of coherent $\Sigma$-algebras is the category
  of $(\T,0)$-models for a graded theory $\T(\Sigma, \E)$, where $\E$
  is a set of (necessarily depth-0) $\Sigma$-inequations. These
  categories of algebras are in bijective correspondence with finitary
  enriched monads $T\colon\Pos\to\Pos$~\cite[Prop.~4.6]{AdamekEA20}.
\end{rem}

\noindent In the examples, we restrict our attention to a class of
graded theories in which every operation and axiom has depth at most
1; such theories, called \emph{\mbox{depth-1} graded theories}, were
previously identified~\cite{MPS15} as a key ingredient in the
framework of graded logics for behavioural equivalences of coalgebra
for set functors. We return to this point in~\autoref{S:depth1}.

\begin{expl}\label{E:theory}
  Let $\A$ be a set of \emph{actions}. We give selected examples of
  graded theories for nondeterministic and probabilistic $\A$-labelled
  transition systems; we describe the arising graded monads
  in~\autoref{E:induced}, and the ensuing behavioural preorders (some
  of which in fact have the character of equivalences) in
  \autoref{S:gbp}.
  \begin{enumerate}
\item\label{item:theory-jsl} The graded theory $\JSL(\A)$ (for
  \emph{join semilattice}) has \mbox{depth-$1$} \emph{choice
    operations} $a_1(-)+\cdots+a_n(-)$ of (discrete) arity $n\ge 0$
  for $a_1,\dots, a_n\in\A$; in case $n=0$, we write the arising
  depth-$1$ constant as~$0$.
  The axioms of $\JSL(\A)$ are all depth-1 equations
  \begin{equation*}
    a_1(x_1)+\dots+ a_n(x_n)=b_1(y_1)+\dots+b_k(y_k)
  \end{equation*}
  (coded as pairs of inequations) where
  $\{(a_1,$ $x_1),\dots,$ $(a_m,$ $x_m)\}$ $=\{(b_1,y_1),\dots,(b_k,y_k)\}$.
  Thus, $\JSL(\A)$ combines nondeterminism and $\A$-labelled actions
  without interaction. We will see that this theory captures
  bisimilarity on $\A$-LTS.

\item\label{item:theory-down}
The graded theory $\JSL^{\down} (\A)$ refines the theory
$\JSL(\A)$ through the addition of the depth-1 axiom
schema
\begin{equation*}
  x\leq y\vdash_1 a(x)+a(y)+t = a(y) +t 
\end{equation*}
(with non-discrete context) where $a\in\A$ and~$t$ represents a
remaining formal sum $b_1(z_1)+\dots+b_k(z_k)$, $k\ge 0$. It
effectively specifies that the actions are monotone w.r.t.\ the
ordering induced by the join semilattice structure~$+$, equivalently
that this ordering coincides with the one that comes from living over
posets; we will see that this property relates to similarity.

\item\label{item:theory-sync} The graded theory $\syncTh$ (with
  $\mathsf{S}$ standing for \emph{synchrony}) extends $\JSL(\A)$ by
  giving the constant~$0$ depth~$0$ (instead of depth~$1$) and adding
  the depth-1 axiom scheme
  \begin{equation*}
    a(0)+t= t 
  \end{equation*}
  where~$t$ represents a remaining formal sum
  $b_1(z_1)+\dots+b_k(z_k)$, $k\ge 0$. That is, deadlock is preserved
  by the actions. We will see that this theory relates to a notion of
  process equivalence between bisimilarity and trace equivalence, in
  which processes are identified if they have the same uniform-depth
  finite computation trees.

\item\label{item:theory-gsubconvex}
\label{item:theory-subconvex} Additionally letting actions
  distribute over joins relates to trace-type equivalences and
  preorders~\cite{DMS19}. We skip the nondeterministic version and
  instead look at probabilistic traces. We recall the (plain)
  algebraic theory of subconvex algebras~\cite[Definition
  2.7]{PumplunRohrl84} (also known as positive convex
  modules~\cite{Pumpluen03}); we will use this theory to capture
  finite probability subdistributions (which are defined like finite
  distributions except the global mass is only required to be at
  most~$1$ rather than exactly~$1$). Its operations are formal
  \emph{subconvex combinations} $\sum_{i=1}^np_i\cdot(-)$ for
  $\sum p_i\le 1$, and its equational axioms reflect the laws of
  (plain) monad algebras: The equation
  $\sum \delta_{ik}\cdot x_k=x_i$, with $\delta_{ik}$ being Kronecker
  delta ($\delta_{ik}=1$ if $i=k$, and $\delta_{ik}=0$ otherwise),
  reflects compatibility with the unit, and the equation scheme
\[\textstyle
  \sum_{i=1}^n p_i\cdot\sum_{k=1}^m q_{ik}\cdot x
  _k=\sum_{k=1}^m\big(\sum_{i=1}^n p_i q_{ik})\cdot x_k
\]
reflects compatibility with the monad multiplication. We form the
ordered algebraic theory $\mathbb{S}$ of ordered subconvex algebras 
by further imposing inequations
\[\textstyle
  \sum_{i=1}^n p_i\cdot x_i\leq \sum_{i=1}^n
  q_i\cdot x_i\quad (\text{$p_i\leq q_i$ for all $i\leq n$}).
\]
The graded theory $\gS(\A)$ of probabilistic trace inclusion has
subconvex combinations as depth-$0$ operations and actions $a(-)$ as
unary depth-$1$ operations; besides the mentioned inequations of
ordered subconvex algebras (which are now depth-$0$ axioms), we
include depth-1 equations stating that actions distribute over
subconvex combinations, i.e.~for $\sum_{i=1}^k p_i\le 1$ and $a\in\A$,
we postulate depth-$1$ equations
\begin{equation*}\textstyle
  a(\sum_{i=1}^k p_i \cdot x_i) = \sum_{i=1}^kp_i\cdot a(x_i).
\end{equation*}
\end{enumerate}
\end{expl}

\subsection{Logic of Inequations in Context}\label{S:LoIC}

\noindent Let $\T$ be a graded $\Sigma$-theory with axiom set $\E$. We
introduce a \emph{logic of inequations in context} for $\T$: we give a
system of rules for deriving inequations in context which is both
sound (\autoref{T:sound}) and complete (\autoref{T:complete}) for the
semantics of graded $\Sigma$-terms in $(\T, n)$-models. We first fix 
some syntactic notions:

\begin{defn}
\begin{enumerate}
\item A \emph{uniform substitution} is a function
  $\gamma\colon|\Delta|\rightarrow\mathsf{T}_{\Sigma, k}(\Gamma)$
  where $k\in\omega$ and $\Gamma, \Delta$ are contexts.
	
  \item The set $\sub(t)$ of \emph{subterms of
      $t\in\mathsf{T}_{\Sigma, k}(\Gamma)$} is defined recursively in the expected way:
    \begin{align*}
      \sub(x)
      &
      =\{x\}\text{ for all }x\in\Gamma;
      \\
      \sub(\sigma(f))
      &\textstyle
      =\{\sigma(f)\}\cup\bigcup_{i\in\arity(\sigma)}\sub(f(i)).
    \end{align*}
    For terms $s, t\in \mathsf{T}_{\Sigma, k}(\Gamma)$, we
    write $\sub(s,t)$ for $\sub(s) \cup \sub(t)$.
  \end{enumerate}
\end{defn}

\noindent The requirement that uniform substitutions map all variables
to terms of the same uniform depth ensures that they can be
meaningfully applied to uniform-depth terms: Applying
$\gamma\colon|\Delta|\to\mathsf{T}_{\Sigma, k}(\Gamma)$ to a
term~$t\in\mathsf{T}_{\Sigma, p}(\Delta)$ results in a term
$\ol\gamma(t)\in\mathsf{T}_{\Sigma, k+p}(\Gamma)$, recursively defined
by
\begin{enumerate}
\item $\ol\gamma(x) = \gamma(x) \in \mathsf{T}_{\Sigma,k}(\Gamma)$
  for $x \in \Delta$,
  
\item  $\ol\gamma(\sigma(f)) = \sigma(\overline{\gamma} \cdot f) \in
  \mathsf{T}_{\Sigma,k+m+n}(\Gamma)$ for $\sigma \in \Sigma(\ari, m)$
  and  $f\colon |P| \to \mathsf{T}_{\Sigma,n}(\Gamma)$.
\end{enumerate}
\takeout{
\begin{align*}
  \ol\gamma(x) 
  &:= \gamma(x) \in \mathsf{T}_{\Sigma,k}(\Gamma) &
  \text{for $x \in \Delta$},
  \\
  \overline{\gamma}(\sigma(t_1,\dots, t_n))
  & :=
  \sigma(\overline{\gamma}(t_1),\dots, \overline{\gamma}_p(t_n)),
\end{align*}
where $d(\sigma) + p = m$.}
%
%

\begin{defn}\label{D:Tterms}
  The \emph{logic of inequations in context} is given by the set of
  rules shown in \autoref{fig:rules}. In the notation,
  $\Gamma, \Delta$ are contexts. The side conditions $(+)$ for
  ($\mathsf{Ar}$) and ($\mathsf{Mon}$) state that
  $f,g\colon |\arity(\sigma)|\to\mathsf{T}_{\Sigma, k}(\Gamma)$.
  Condition~$(*)$ states that $\Delta\vdash_n s\leq t\in\E$ is an
  axiom and $\gamma\colon |\Delta| \to\mathsf{T}_{\Sigma,k}(\Gamma)$ a
  uniform substitution. Condition $(**)$ is as follows:
  $\gamma\colon |\Delta|
  \to\mathsf{T}_{\Sigma,k}(\Gamma)$ 
  is a uniform substitution and for some axiom $\Delta\vdash_n s\leq
  t\in\E$ there is a subterm of the form
  $\sigma(f)\in\sub(s,t)$, where
  $f\colon\arity(\sigma)\to\mathsf{T}_{\Sigma,m}(\Delta)$, such that
  $u=f(i)$ and $v=f(j)$ for some $i\leq j$ in $\arity(\sigma)$.
\end{defn}

\begin{figure}
\begin{framed} 
\begin{gather*}
  (\mathsf{Var})
  \;
  \frac{}{\Gamma\vdash_0 x\leq y}
  \;
  \text{$x\leq y$ in $\Gamma$} 
  \\[1.6mm] 
  (\mathsf{Ar})
  \;
  \frac{\{\Gamma\vdash_k f(i)\leq f(j)\mid i\leq j\in
    {\arity(\sigma)}\}}{\Gamma\vdash_{k+d(\sigma)}\down \sigma(f)}
  \;
  (+)
  \\[1.6mm]
  (\mathsf{Trans})
  \;
  \frac{\Gamma\vdash_k s\leq t \qquad \Gamma\vdash_k t\leq u}{\Gamma\vdash_k s\leq u} 
  \\[1.6mm]
  (\mathsf{Mon})
  \; \frac{%
    \begin{array}{@{}c@{}}
      \{\Gamma\vdash_k f(i)\leq g(i)\mid i\in\arity(\sigma)\}\\
      \Gamma\vdash_{k+d(\sigma)} \down \sigma(f) \qquad \Gamma\vdash_{k+d(\sigma)} \down \sigma(g) 
    \end{array}}
  {\Gamma\vdash_{k+d(\sigma)} \sigma(f)\leq\sigma(g)} \; (+)
  \\[1.6mm] 
  (\mathsf{Ax1})
  \;
  \frac{\{\Gamma\vdash_k\gamma(x)\leq\gamma(y)\mid \Delta\vdash_0
    x\leq y\}}{\Gamma\vdash_{n+k} \ol\gamma(s)\leq \ol\gamma(t)}
  \;(*)
  \\[1.6mm]
  (\mathsf{Ax2})
  \;
  \frac{\{\Gamma\vdash_k\gamma(x)\leq\gamma(y)\mid \Delta\vdash_0 x\leq y\}}
  {\Gamma\vdash_{m+k} \ol\gamma(u)\leq \ol\gamma(v)} \;(**)
  \end{gather*}
\end{framed}
\caption{System of rules for graded inequational reasoning, see 
	      \autoref{D:Tterms} for details on the side conditions $(+)$, $(*)$, and $(**)$.}
\label{fig:rules}
\end{figure}
\noindent As one might expect, the logic of inequations is more subtle
than equational logic. We briefly discuss the salient rules.  The rule
$(\mathsf{Ar})$ declares that the arities of operation symbols define
their domain of definition (cf.~\autoref{C:arities}), while
$(\mathsf{Mon})$ reflects that operations are monotone. The rule
$(\mathsf{Ax1})$ allows introducing axioms, instantiated using uniform
substitions that derivably respect the context inequations. It is
supplemented by the rule
$(\mathsf{Ax2})$ which says that every axiom implicitly includes the
statement that in all occurrences of
operations~$\sigma$ in the axiom, the arguments satisfy the
inequalities in the arity
of~$\sigma$; again, these implicit constraints may be instantiated
using uniform substitions that derivably respect the context.

The combination of the rules $(\mathsf{Ax1})$ and $(\mathsf{Ax2})$ 
features prominently in ensuring that each term occurring in
a given axiom is derivably defined (we make this precise
in~\autoref{P:subterms}). The key is the following lemma:

\begin{lemma}\label{L:subterms}
  Assume that $\Gamma\vdash_k s\leq t$ is derivable and let
  $\sigma(f)\in\sub(s, t)$ where
  $f\colon|\arity(\sigma)|\to \mathsf{T}_{\Sigma, n}(\Gamma)$. Then
  $\Gamma\vdash_n f(i)\leq f(j)$ is derivable for all $i\leq j$ in
  $\arity(\sigma)$.
\end{lemma}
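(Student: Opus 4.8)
The plan is to proceed by induction on the derivation of $\Gamma\vdash_k s\leq t$, following the rules in \autoref{fig:rules}. The statement to be proved is a property of the pair $(s,t)$ together with the derivation witnessing $\Gamma\vdash_k s\leq t$; concretely, for every subterm occurrence $\sigma(f)\in\sub(s,t)$ we must derive the arity inequations $\Gamma\vdash_n f(i)\leq f(j)$ for $i\leq j$ in $\arity(\sigma)$. I would phrase the inductive hypothesis exactly as the statement of the lemma, applied to all premises of the last rule used.

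First I would handle the base case $(\mathsf{Var})$: here $s=x$, $t=y$ are variables, so $\sub(s,t)=\{x,y\}$ contains no term of the form $\sigma(f)$, and there is nothing to prove. For the inductive step I would go rule by rule. For $(\mathsf{Trans})$ the subterms of the conclusion are among those of the two premises $\Gamma\vdash_k s\leq t$ and $\Gamma\vdash_k t\leq u$, so the claim follows directly from the induction hypothesis. For $(\mathsf{Ar})$, the conclusion is $\Gamma\vdash_{k+d(\sigma)}\down\sigma(f)$, i.e.\ $\sigma(f)\leq\sigma(f)$; the subterms are $\sigma(f)$ itself together with the subterms of the $f(i)$. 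For the outermost occurrence $\sigma(f)$ the required inequations $\Gamma\vdash_k f(i)\leq f(j)$ are precisely the premises of the rule, hence derivable; for subterm occurrences inside some $f(i)$ we apply the induction hypothesis to the premise $\Gamma\vdash_k f(i)\leq f(j)$ (or, if $\arity(\sigma)$ is discrete so there are no such premises, to the trivially available — actually in that edge case one must be slightly careful, see below). The rule $(\mathsf{Mon})$ is similar: its conclusion is $\sigma(f)\leq\sigma(g)$, whose subterm occurrences are $\sigma(f)$, $\sigma(g)$, and subterms of the $f(i)$ and $g(i)$; for the two outermost occurrences the needed arity inequations $\Gamma\vdash_k f(i)\leq f(j)$ and $\Gamma\vdash_k g(i)\leq g(j)$ come from the premises $\Gamma\vdash_{k+d(\sigma)}\down\sigma(f)$ and $\Gamma\vdash_{k+d(\sigma)}\down\sigma(g)$ by the induction hypothesis (these premises are themselves derivable and their only relevant outer subterm is $\sigma(f)$ resp.\ $\sigma(g)$), and for the inner occurrences we use the induction hypothesis on the premises $\Gamma\vdash_k f(i)\leq g(i)$.

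The interesting cases are $(\mathsf{Ax1})$ and $(\mathsf{Ax2})$, and this is where I expect the main obstacle to lie. In $(\mathsf{Ax1})$ the conclusion is $\Gamma\vdash_{n+k}\ol\gamma(s)\leq\ol\gamma(t)$ for an axiom $\Delta\vdash_n s\leq t\in\E$ and a uniform substitution $\gamma\colon|\Delta|\to\mathsf T_{\Sigma,k}(\Gamma)$, with premises $\Gamma\vdash_k\gamma(x)\leq\gamma(y)$ for each context inequation $x\leq y$ of $\Delta$. A subterm occurrence of the form $\sigma(f)$ in $\ol\gamma(s)\leq\ol\gamma(t)$ is of one of two kinds: (i) it lies entirely within some $\gamma(x)$, $x\in\Delta$ — but $\gamma(x)\in\mathsf T_{\Sigma,k}(\Gamma)$ is a term for which no derivation is yet available, so here one must invoke a separately established fact that every term $t'\in\mathsf T_{\Sigma,k}(\Gamma)$ satisfies derivably $\Gamma\vdash_k\down t'$, whose own derivation (built purely from $(\mathsf{Var})$ and $(\mathsf{Ar})$, by induction on $t'$) supplies exactly the arity inequations inside $t'$; or (ii) the occurrence is the image under $\ol\gamma$ of a genuine subterm occurrence $\sigma(f_0)\in\sub(s,t)$ of the axiom, with $f=\ol\gamma\cdot f_0$, so that $f(i)=\ol\gamma(f_0(i))$. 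In case (ii) I must derive $\Gamma\vdash_n f(i)\leq f(j) = \Gamma\vdash_{n'+k}\ol\gamma(f_0(i))\leq\ol\gamma(f_0(j))$ for $i\leq j$ in $\arity(\sigma)$; but $f_0(i)=u$, $f_0(j)=v$ in the statement of side condition $(**)$, so this is exactly an instance of the rule $(\mathsf{Ax2})$ applied with the same substitution $\gamma$ and the same premises $\Gamma\vdash_k\gamma(x)\leq\gamma(y)$. The treatment of $(\mathsf{Ax2})$ is entirely analogous, replacing $s,t$ by the terms $u,v$ and noting that a nested subterm $\sigma(f)\in\sub(u,v)$ again corresponds either to a subterm inside some $\gamma(x)$ (handled by the $\down t'$ fact) or to $\ol\gamma$ applied to a subterm of $u$ or $v$, which is itself a subterm of the axiom $s\leq t$, so another application of $(\mathsf{Ax2})$ with the same data does the job. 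The crux, then, is organizing the mutual appeal between $(\mathsf{Ax1})$, $(\mathsf{Ax2})$, and the auxiliary "every term is derivably defined" lemma so that no circularity arises; I would isolate that auxiliary fact as a preliminary observation (proved by a straightforward term induction using only $(\mathsf{Var})$ and $(\mathsf{Ar})$) and then run the main induction on derivations as above.
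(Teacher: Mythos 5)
Your overall strategy (induction on derivations, with the image-of-an-axiom-subterm case discharged by an application of $(\mathsf{Ax2})$ with the same substitution and the same premises) is exactly the paper's, and your treatment of $(\mathsf{Var})$, $(\mathsf{Trans})$, $(\mathsf{Ar})$, $(\mathsf{Mon})$ and of case~(ii) of $(\mathsf{Ax1})$/$(\mathsf{Ax2})$ is correct. Incidentally, the ``edge case'' you flag for $(\mathsf{Ar})$ with discrete arities does not arise: since $\arity(\sigma)$ is a poset, $i\leq i$ always holds, so $\Gamma\vdash_k \down f(i)$ is among the premises for every~$i$, and the induction hypothesis applied to those premises covers the subterm occurrences inside each $f(i)$.

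The genuine gap is in your case~(i) of $(\mathsf{Ax1})$, where $\sigma(f)$ lies inside some $\gamma(x)$. The auxiliary fact you propose to isolate --- that \emph{every} raw term $t'\in\mathsf{T}_{\Sigma,k}(\Gamma)$ satisfies $\Gamma\vdash_k\down t'$ derivably --- is false in general. If $\tau$ is an operation with a non-discrete arity and $g(i), g(j)$ are instantiated by incomparable elements of the context for some $i< j$ in $\arity(\tau)$, then $\Gamma\vdash\down\tau(g)$ is not derivable; this is precisely why the paper distinguishes the $\T$-defined terms $\terms_{\T,k}(\Gamma)$ from the raw terms $\mathsf{T}_{\Sigma,k}(\Gamma)$ and why term evaluation is only a partial map. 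Your proposed proof of the auxiliary fact ``by term induction using only $(\mathsf{Var})$ and $(\mathsf{Ar})$'' breaks at the $(\mathsf{Ar})$ step, because the required premises $\Gamma\vdash f(i)\leq f(j)$ for $i<j$ are simply not available. The correct move --- and the one the paper makes --- is to observe that $x\leq x$ holds in the (reflexive) context $\Delta$, so $\Gamma\vdash_k\gamma(x)\leq\gamma(x)$ is itself one of the \emph{premises} of the given $(\mathsf{Ax1})$ instance and hence comes equipped with a derivation to which the induction hypothesis applies; this yields the claim for all $\sigma(f)\in\sub(\gamma(x))$, with no appeal to definedness of arbitrary raw terms. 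This also dissolves the circularity you worry about at the end: no auxiliary definedness lemma is needed, and the appeal to $(\mathsf{Ax2})$ in case~(ii) is a single rule application on already-derived premises, not a nested induction.
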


\begin{corollary}\label{C:arities}
  Let $\sigma\in\Sigma$, and let
  $f\colon|\arity(\sigma)|\to\mathsf{T}_{\Sigma, k}(\Gamma).$ Then
  $\Gamma\vdash_{k+d(\sigma)}\down \sigma(f)$ is derivable iff
  $\Gamma\vdash_{k} f(i)\leq f(j)$ is derivable for all $i\leq j$ in
  $\arity(\sigma)$.
\end{corollary}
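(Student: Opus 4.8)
The plan is to read off both directions essentially from the rules in \autoref{fig:rules} together with \autoref{L:subterms}. For the ``only if'' direction, suppose $\Gamma\vdash_{k+d(\sigma)}\down\sigma(f)$ is derivable; by definition this is the inequation $\Gamma\vdash_{k+d(\sigma)}\sigma(f)\leq\sigma(f)$, and $\sigma(f)\in\sub(\sigma(f),\sigma(f))$ trivially. Applying \autoref{L:subterms} to this derivable inequation with the distinguished subterm $\sigma(f)$ yields exactly that $\Gamma\vdash_k f(i)\leq f(j)$ is derivable for all $i\leq j$ in $\arity(\sigma)$, which is what is claimed.

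For the ``if'' direction, suppose $\Gamma\vdash_k f(i)\leq f(j)$ is derivable for all $i\leq j$ in $\arity(\sigma)$. These are precisely the premises of the rule $(\mathsf{Ar})$ (with side condition $(+)$ satisfied since $f\colon|\arity(\sigma)|\to\mathsf{T}_{\Sigma,k}(\Gamma)$), so a single application of $(\mathsf{Ar})$ derives $\Gamma\vdash_{k+d(\sigma)}\down\sigma(f)$.

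This is a short argument; the only point requiring care is that \autoref{L:subterms} is applied in the degenerate case $s=t=\sigma(f)$, so it is worth checking that the lemma's statement genuinely allows $\sigma(f)$ itself (and not merely proper subterms) to be the witness $\sigma(f)\in\sub(s,t)$ — which it does, since $\sub(\sigma(f))\ni\sigma(f)$ by definition of $\sub$. Strictly speaking, no real obstacle arises here, as the corollary is designed precisely to package the two halves of the interaction between the definedness judgement $\down$ and the arity constraints; the substantive work has already been done in \autoref{L:subterms} (itself relying on the interplay of $(\mathsf{Ax1})$ and $(\mathsf{Ax2})$).
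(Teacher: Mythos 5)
Your proof is correct and follows exactly the route the paper intends (the corollary is stated without proof as an immediate consequence of \autoref{L:subterms}): the ``only if'' direction instantiates \autoref{L:subterms} with $s=t=\sigma(f)$, using that $\sigma(f)\in\sub(\sigma(f))$ by definition, and the ``if'' direction is a single application of $(\mathsf{Ar})$. Your explicit check that the lemma permits the degenerate witness $s=t=\sigma(f)$ is the only point of care, and you handle it correctly.
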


\begin{proposition}[Subterm Rule]\label{P:subterms}
Assume that $\Gamma\vdash_k s\leq t$ is derivable and let 
$u\in\sub(s, t)$ have depth~$m$. Then $\Gamma\vdash_m\down u$ is 
derivable. In particular, $\Gamma\vdash_k\down s$ and 
$\Gamma\vdash_k\down t$ are derivable.
That is, the following rule is admissible:
\[
  (\mathsf{Sub})
  \frac{\Gamma\vdash_k s \leq t}{\Gamma \vdash_m\down u}
  \; u \in \sub(s,t)\cap\mathsf{T}_{\Sigma, m}(\Gamma)
\]
\end{proposition}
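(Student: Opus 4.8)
The plan is to derive the Subterm Rule $(\mathsf{Sub})$ from \autoref{L:subterms} and \autoref{C:arities} by a straightforward structural induction on the subterm $u$, with the base case handled directly and the induction step handled by a single application of $(\mathsf{Ar})$. First, suppose $\Gamma\vdash_k s\leq t$ is derivable and $u\in\sub(s,t)$ has depth $m$. If $u$ is a variable, then $m=0$ and $u\in\Gamma$, so $\Gamma\vdash_0 u\leq u$ is an instance of $(\mathsf{Var})$ (with $x=y=u$), which is exactly $\Gamma\vdash_0\down u$. If $u=\sigma(f)$ with $f\colon|\arity(\sigma)|\to\mathsf{T}_{\Sigma, n}(\Gamma)$ and $m=n+d(\sigma)$, then \autoref{L:subterms} gives that $\Gamma\vdash_n f(i)\leq f(j)$ is derivable for all $i\leq j$ in $\arity(\sigma)$; hence by $(\mathsf{Ar})$ (or, equivalently, by the ``if'' direction of \autoref{C:arities}) we conclude that $\Gamma\vdash_{n+d(\sigma)}\down\sigma(f)$, i.e.\ $\Gamma\vdash_m\down u$, is derivable.

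Note that this argument does not actually require a nested induction on the structure of $u$: \autoref{L:subterms} already provides, for a single operation node $\sigma(f)\in\sub(s,t)$, the derivability of all the arity-inequations $\Gamma\vdash_n f(i)\leq f(j)$, and a subterm $u=\sigma(f)$ of $s$ or $t$ is itself of this form, so one application of $(\mathsf{Ar})$ suffices. Thus the body of the proof is essentially the case distinction above together with one invocation each of \autoref{L:subterms} and the rule $(\mathsf{Ar})$. The ``in particular'' clause follows by taking $u=s$ (of depth $k$) and $u=t$ (of depth $k$), both of which lie in $\sub(s,t)\cap\mathsf{T}_{\Sigma,k}(\Gamma)$; the admissibility of $(\mathsf{Sub})$ is then just a restatement of what has been proved.

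I expect no real obstacle here, since the genuinely subtle combinatorics -- the interaction of $(\mathsf{Ax1})$ and $(\mathsf{Ax2})$ needed to propagate definedness through the subterms occurring in the axioms themselves -- has been isolated and discharged in \autoref{L:subterms}. The only point that warrants a line of care is making sure the depth bookkeeping is right: a subterm $u$ of $s$ or $t$ has a well-defined uniform depth $m$ (since $s,t$ do), and when $u=\sigma(f)$ the recursive structure of $\mathsf{T}_{\Sigma,-}(\Gamma)$ forces $m=n+d(\sigma)$ where $n$ is the common depth of the immediate arguments $f(i)$; this is exactly the index appearing in the conclusion of $(\mathsf{Ar})$, so the rule applies on the nose.
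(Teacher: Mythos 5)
Your proof is correct and follows exactly the same route as the paper's: a case distinction on whether $u$ is a variable (handled by $(\mathsf{Var})$ via reflexivity of $\Gamma$) or of the form $\sigma(f)$ (handled by one invocation of \autoref{L:subterms} followed by $(\mathsf{Ar})$). Your observation that no nested induction is needed because \autoref{L:subterms} already applies to every operation node in $\sub(s,t)$ matches the paper's treatment precisely.
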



\subsection{Free Models of Graded Theories}\label{S:freemodel}

\noindent Given a graded theory $\T$, a poset $X$, and 
$n\leq \omega$, we describe a construction of a free $(\T,n)$-model
$FX$ on $X$.

\begin{defn}
  Let $\T$ be a graded theory and let $X$ be a poset. The
  \emph{set $\terms_{\T, k}(X)$ of depth-$k$ $\T$-defined terms in $X$}
  is the set of terms
  $t\in\mathsf{T}_{\Sigma, k}(X)$ such that $X\vdash_k\down t$ 
  is derivable. \emph{Derivable inequality} is the relation $\leq$ on 
  $\terms_{\T, k}(X)$ given by putting $s\leq t$ iff $X\vdash_k s\leq t$ 
  is derivable.
\end{defn}
%

\noindent 
Derivable inequality is clearly reflexive and transitive, with the latter
following immediately from the rule $(\mathsf{Trans})$. In short, we
have the following:

\begin{proposition}
  The set $(\terms_{\T, k}(X), \leq)$ is a preorder.
\end{proposition}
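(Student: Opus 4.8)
The plan is to verify the two defining properties of a preorder on the set $\terms_{\T, k}(X)$: reflexivity and transitivity. Recall that $\terms_{\T, k}(X)$ consists of those terms $t\in\mathsf{T}_{\Sigma, k}(X)$ for which $X\vdash_k\down t$ is derivable, and that $s\leq t$ is declared to hold precisely when $X\vdash_k s\leq t$ is derivable.

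For transitivity I would argue directly from the proof system: if $X\vdash_k s\leq t$ and $X\vdash_k t\leq u$ are both derivable, then an application of the rule $(\mathsf{Trans})$ yields a derivation of $X\vdash_k s\leq u$, so $s\leq u$ holds in the relevant sense. This is immediate.

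For reflexivity, the point to be careful about is that $s\leq s$ is by definition derivability of $X\vdash_k s\leq s$, which (by the $\down$-notation) is exactly $X\vdash_k\down s$; but derivability of this judgement is precisely the membership condition defining $\terms_{\T, k}(X)$. Hence every element $s$ of $\terms_{\T, k}(X)$ satisfies $s\leq s$ by construction, so the relation is reflexive on its carrier. (One could alternatively invoke \autoref{P:subterms}, since $s\in\sub(s)$, but the membership condition already gives it directly.)

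The only mild subtlety — hardly an obstacle — is making sure that the relation is well-defined as a binary relation \emph{on} $\terms_{\T, k}(X)$, i.e.\ that whenever $X\vdash_k s\leq t$ is derivable with $s,t\in\mathsf{T}_{\Sigma, k}(X)$, both $s$ and $t$ actually lie in $\terms_{\T, k}(X)$. This is exactly the content of the Subterm Rule $(\mathsf{Sub})$ from \autoref{P:subterms} applied to $u:=s$ and $u:=t$ (both trivially in $\sub(s,t)$), which gives derivability of $X\vdash_k\down s$ and $X\vdash_k\down t$. With that observation in place, reflexivity and transitivity as above complete the proof that $(\terms_{\T, k}(X), \leq)$ is a preorder.
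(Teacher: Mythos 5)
Your proof is correct and matches the paper's (one-line) argument: transitivity is immediate from $(\mathsf{Trans})$, and reflexivity holds because membership in $\terms_{\T,k}(X)$ is by definition derivability of $X\vdash_k\down t$, i.e.\ of $X\vdash_k t\leq t$. Your additional remark that the Subterm Rule guarantees the relation is well-defined on the carrier is a sensible (if not strictly needed) observation that the paper leaves implicit.
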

%
%
\smnote{Do not move the proof needs the previous proposition.}
\begin{theorem}[Soundness]\label{T:sound}
  Let $n\leq\omega$. If an inequation in context is derivable, then it
  is satisfied by every $(\T,n)$-model.
\end{theorem}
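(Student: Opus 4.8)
The plan is to proceed by rule induction on the derivation of an inequation in context $\Gamma \vdash_k s \le t$, showing that every rule of the logic in \autoref{fig:rules} preserves satisfaction by an arbitrary fixed $(\T,n)$-model $A$. Concretely, I would prove the following strengthened statement by induction on derivations: for every $(\T,n)$-model $A$, every $m$ with $m+k \le n$, and every monotone map $\iota\colon\Gamma\to A_m$, if $\Gamma\vdash_k s\le t$ is derivable then $\iota_k^\#(s)$ and $\iota_k^\#(t)$ are both defined and $\iota_k^\#(s)\le\iota_k^\#(t)$ in $A_{m+k}$. Note that this strengthening automatically packages the definedness claims that are built into the semantics of inequations (\autoref{D:theory}(2)); in particular, once this is established, the case $s=t$ recovers soundness of $(\mathsf{Sub})$ via \autoref{P:subterms}. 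It is here that the preceding proposition (that derivable inequality is a preorder) is used: transitivity of $\le$ on the $A_{m+k}$ is needed to handle the $(\mathsf{Trans})$ case, and more importantly the inductive structure means we must already know the relevant terms are ``defined enough'' to form a preorder at each depth.

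Next I would go through the rules. The base case $(\mathsf{Var})$ is immediate: $\iota_0^\#(x)=\iota(x)$, $\iota_0^\#(y)=\iota(y)$, and $\iota$ is monotone, so $x\le y$ in $\Gamma$ gives $\iota(x)\le\iota(y)$. For $(\mathsf{Ar})$ with $\sigma\in\Sigma(P,d)$ and $f\colon|P|\to\mathsf{T}_{\Sigma,k}(\Gamma)$: the induction hypothesis applied to each premise $\Gamma\vdash_k f(i)\le f(j)$ (for $i\le j$ in $P$) gives that all $\iota_k^\#(f(i))$ are defined and that $i\le j$ implies $\iota_k^\#(f(i))\le\iota_k^\#(f(j))$; these are exactly the two conditions in the definition of the evaluation map that make $\iota_{k+d}^\#(\sigma(f))=\sigma^A_k(\iota_k^\#\cdot f)$ defined, which is the conclusion $\Gamma\vdash_{k+d}\down\sigma(f)$. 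For $(\mathsf{Mon})$ one additionally uses that $\sigma^A_k\colon\Pos(P,A_k)\to A_{d+k}$ is monotone: the premises $\Gamma\vdash_k f(i)\le g(i)$ give (by IH) $\iota_k^\#(f(i))\le\iota_k^\#(g(i))$ pointwise, hence $\iota_k^\#\cdot f\le\iota_k^\#\cdot g$ in $\Pos(P,A_k)$ (both are well-defined monotone maps $|P|\to A_k$ by the $\down$-premises and IH), and monotonicity of $\sigma^A_k$ yields $\sigma^A_k(\iota_k^\#\cdot f)\le\sigma^A_k(\iota_k^\#\cdot g)$. The $(\mathsf{Trans})$ case is transitivity in $A_{m+k}$ as noted.

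The main obstacle is the pair of axiom rules $(\mathsf{Ax1})$ and $(\mathsf{Ax2})$, which require a substitution lemma: if $\gamma\colon|\Delta|\to\mathsf{T}_{\Sigma,k}(\Gamma)$ is a uniform substitution and $\iota\colon\Gamma\to A_m$ is monotone with $m+k+n\le n'\le n$ (where $n$ here is the axiom's depth), then for any term $w\in\mathsf{T}_{\Sigma,n}(\Delta)$, $\iota_{k+n}^\#(\ol\gamma(w))$ is defined whenever $\iota_k^\#\cdot\gamma\colon|\Delta|\to A_{m+k}$ is a well-defined monotone map, and in that case $\iota_{k+n}^\#(\ol\gamma(w))=(\iota_k^\#\cdot\gamma)_n^\#(w)$ --- i.e., evaluating after substituting equals substituting into the evaluation. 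This is proved by a straightforward structural induction on $w$, using the recursive definitions of $\ol\gamma$ and of $\iota^\#$. Granting this lemma, $(\mathsf{Ax1})$ goes through as follows: the premises $\Gamma\vdash_k\gamma(x)\le\gamma(y)$ (for $x\le y$ in $\Delta$) give, via IH, that $\kappa:=\iota_k^\#\cdot\gamma$ is a well-defined monotone map $\Delta\to A_{m+k}$; since $A$ is a $(\T,n)$-model it satisfies the axiom $\Delta\vdash_n s\le t$, so $\kappa_n^\#(s)\le\kappa_n^\#(t)$ in $A_{m+k+n}$; by the substitution lemma this is $\iota_{k+n}^\#(\ol\gamma(s))\le\iota_{k+n}^\#(\ol\gamma(t))$, which is the conclusion. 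For $(\mathsf{Ax2})$ one argues similarly but invokes, instead of satisfaction of the axiom itself, the fact that $A$ being a model means every subterm occurrence $\sigma(f)\in\sub(s,t)$ in the axiom is defined under any monotone valuation --- concretely, \autoref{L:subterms} together with the already-verified soundness of $(\mathsf{Ar})$/$(\mathsf{Mon})$ type reasoning shows $\kappa_n^\#$ respects the arity inequalities at that occurrence, so $\kappa_m^\#(u)\le\kappa_m^\#(v)$ for the relevant $u=f(i)$, $v=f(j)$ with $i\le j$; applying the substitution lemma again transports this to the desired conclusion $\iota_{m+k}^\#(\ol\gamma(u))\le\iota_{m+k}^\#(\ol\gamma(v))$. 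Some care with bookkeeping of depth indices ($k$ vs.\ the axiom depth $n$ vs.\ the subterm depth $m$, and the constraint $m+k\le n'$ from the outer model) is the only genuinely fiddly part.
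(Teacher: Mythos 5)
Your proposal is correct and follows essentially the same route as the paper: a rule induction on derivations establishing that the evaluation maps $\iota^\#_k$ restrict to total monotone maps on $\T$-defined terms, with the $(\mathsf{Var})$/$(\mathsf{Ar})$/$(\mathsf{Mon})$/$(\mathsf{Trans})$ cases handled by monotonicity and transitivity in $A$, and the axiom cases handled via exactly the substitution lemma you identify ($\iota^\#_{\ell+k}(\ol\gamma(t)) = (\iota^\#_k\cdot\gamma)^\#_\ell(t)$, which is \autoref{L:subst} in the appendix), including the observation that for $(\mathsf{Ax2})$ one uses definedness of $\sigma(f)$ under any monotone valuation to extract $h^\#_m(u)\le h^\#_m(v)$. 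The only cosmetic difference is that the paper derives this last point by unravelling the definition of $h^\#$ rather than via \autoref{L:subterms}, but the substance is identical.
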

%
%
\begin{defn}\label{D:freealgebra}
  Let $X$ be a poset. The $(\Sigma, n)$-algebra $FX$ has carriers
  $(FX)_k:=\terms_{\T, k}(X)/\mathord{\sim}$ for $k\leq n$ where
  $\sim$ is the equivalence relation defined by \emph{derivable
    equality}: $s \sim t$ iff $X\vdash_k s\leq t$ and
  $X\vdash_k t\leq s$ are derivable. Now we define, for every~$\sigma$
  in~$\Sigma$, the family of maps $\sigma^{FX}_k$ by
  \begin{align*}
    \sigma^{FX}_k\colon\Pos(\arity(\sigma), (FX)_k)
    &
    \to (FX)_{k+d(\sigma)}  \quad (k+d(\sigma)\leq n)
    \\
    f
    &
    \mapsto [\sigma(u_k\cdot f)]_{k+d(\sigma)},
  \end{align*}
  where $[t]_k$ denotes the equivalence class of $t\in (FX)_k$ and
  $u_k\colon (FX)_k \monoto \terms_{\T,k}(X)$ is some splitting of the
  canonical quotient map $q_k\colon \terms_{\T,k}(X) \epito (FX)_k$;
  that is, $q_k\cdot u_k = \id_{(FX)_k}$. We define
  $\eta_X\colon X\to (FX)_0$ by $\eta_X(x) = [x]_0$.

  For $n= \omega$, we define $FX$ similarly, only now $k$ and
  $k+d(\sigma)$ range over $\omega$.
\end{defn}
\begin{lemma}\label{L:welldefined}
  The maps $\sigma^{FX}_k$ are independent of the choice of $u_k$ and
  monotone.
\end{lemma}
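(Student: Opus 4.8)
The plan is to verify the two claims in Lemma~\ref{L:welldefined} separately: independence of the choice of splitting $u_k$, and monotonicity of $\sigma^{FX}_k$. Both will reduce to applications of the inequational logic of Figure~\ref{fig:rules}, together with the observation that derivable equality $\sim$ is a congruence in the appropriate sense. The key technical fact underlying both parts is that the operation $\sigma$ on terms is compatible with derivable (in)equality of arguments: if $f, g\colon |\arity(\sigma)| \to \terms_{\T,k}(X)$ agree pointwise up to $\sim$, or satisfy $f(i) \leq g(i)$ derivably for all $i$, then $\sigma(f)$ and $\sigma(g)$ are related correspondingly --- \emph{provided} both $\sigma(f)$ and $\sigma(g)$ are derivably defined.

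\textbf{Independence of the splitting.} Suppose $u_k, u_k'$ are two splittings of $q_k$, so $q_k \cdot u_k = q_k \cdot u_k' = \id_{(FX)_k}$; this means precisely that $u_k(c) \sim u_k'(c)$ for every $c \in (FX)_k$. Fix $f \in \Pos(\arity(\sigma), (FX)_k)$. First I would check that $\sigma(u_k \cdot f)$ is derivably defined: since $f$ is monotone into the preorder $(\terms_{\T,k}(X),\leq)/\!\sim$ and $u_k$ lands in each $\sim$-class, we get $X \vdash_k (u_k \cdot f)(i) \leq (u_k \cdot f)(j)$ derivable for $i \leq j$ in $\arity(\sigma)$ (unpacking the ordering on $(FX)_k$, which is the quotient preorder --- this should already be implicit from Definition~\ref{D:freealgebra} and the preceding proposition), hence by ($\mathsf{Ar}$) we obtain $X \vdash_{k+d(\sigma)} \down\sigma(u_k\cdot f)$; likewise for $u_k'$. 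Now, because $(u_k \cdot f)(i) \sim (u_k' \cdot f)(i)$ for each $i$, i.e.\ $X \vdash_k (u_k\cdot f)(i) \leq (u_k'\cdot f)(i)$ and the reverse are both derivable, applying ($\mathsf{Mon}$) in both directions --- using the two derived definedness judgments as the side premises --- yields $X \vdash_{k+d(\sigma)} \sigma(u_k\cdot f) \leq \sigma(u_k'\cdot f)$ and its converse. Hence $\sigma(u_k\cdot f) \sim \sigma(u_k'\cdot f)$, so $[\sigma(u_k\cdot f)]_{k+d(\sigma)} = [\sigma(u_k'\cdot f)]_{k+d(\sigma)}$, proving independence.

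\textbf{Monotonicity.} Let $f \leq g$ in $\Pos(\arity(\sigma), (FX)_k)$, i.e.\ $f(i) \leq g(i)$ in $(FX)_k$ for all $i \in \arity(\sigma)$. Unpacking the quotient preorder, this means $X \vdash_k (u_k\cdot f)(i) \leq (u_k\cdot g)(i)$ is derivable for all $i$. As above, both $\sigma(u_k\cdot f)$ and $\sigma(u_k\cdot g)$ are derivably defined by ($\mathsf{Ar}$). Then a single application of ($\mathsf{Mon}$) --- with premises the pointwise inequalities $X\vdash_k (u_k\cdot f)(i)\leq(u_k\cdot g)(i)$ together with $X\vdash_{k+d(\sigma)}\down\sigma(u_k\cdot f)$ and $X\vdash_{k+d(\sigma)}\down\sigma(u_k\cdot g)$ --- gives $X \vdash_{k+d(\sigma)} \sigma(u_k\cdot f) \leq \sigma(u_k\cdot g)$, i.e.\ $\sigma^{FX}_k(f) \leq \sigma^{FX}_k(g)$ in $(FX)_{k+d(\sigma)}$.

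\textbf{Main obstacle.} The arguments are short modular invocations of the proof rules, so the only real subtlety is bookkeeping: one must be careful that the ordering on $(FX)_k = \terms_{\T,k}(X)/\!\sim$ is the quotient of derivable inequality (so that a monotone map $f$ into it really does yield \emph{derivable} inequalities among representatives $u_k \cdot f$), and that ($\mathsf{Mon}$) is genuinely applicable here because its side premises $\down\sigma(-)$ are exactly what ($\mathsf{Ar}$) supplies from the pointwise arity inequalities. I expect no deep difficulty; the delicate point worth stating explicitly is that $f$ being monotone \emph{into the preorder} is what licenses the use of ($\mathsf{Ar}$), and this is precisely why arities are allowed to be properly ordered posets. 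If the ordering on $(FX)_k$ has not yet been pinned down in the text at this stage, I would record it as a one-line observation at the start of the proof.
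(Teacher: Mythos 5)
Your proof is correct and follows essentially the same route as the paper's: establish derivable definedness of $\sigma(u_k\cdot f)$ via $(\mathsf{Ar})$ from the pointwise arity inequalities, then obtain independence of the splitting by two applications of $(\mathsf{Mon})$ (one in each direction) and monotonicity by a single application of $(\mathsf{Mon})$. The bookkeeping point you flag --- that the order on $(FX)_k$ is the quotient of derivable inequality, so monotone maps into it yield derivable inequalities among representatives --- is exactly the observation the paper also relies on.
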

\begin{proposition}\label{P:free}
  For every $n \leq \omega$, the $(\Sigma, n)$-algebra $FX$ is the
  free $(\T, n)$-model on $X$.
\end{proposition}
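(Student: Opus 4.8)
The plan is to establish the two halves of the claimed universal property separately: first that $FX$ really is a $(\T,n)$-model, and second that for every $(\T,n)$-model $A$ and every monotone map $h\colon X\to A_0$ there is a unique homomorphism $\hat h\colon FX\to A$ of $(\Sigma,n)$-algebras with $\hat h_0\cdot\eta_X=h$. That $FX$ is at least a $(\Sigma,n)$-algebra is already available: each carrier $(FX)_k$ is a poset, being the quotient of the preorder $(\terms_{\T,k}(X),\leq)$ by its symmetrization $\sim$, and the operations $\sigma^{FX}_k$ are monotone and independent of the choice of section $u_k$ by \autoref{L:welldefined}. Both remaining tasks hinge on a single \emph{substitution lemma}: for any context $\Delta$, any $m\le n$, and any monotone map $\iota\colon\Delta\to(FX)_m$, writing $\gamma:=u_m\cdot\iota\colon|\Delta|\to\terms_{\T,m}(X)$ for the resulting uniform substitution, one has for every $t\in\mathsf{T}_{\Sigma,k}(\Delta)$ (with $m+k\le n$) that $\iota^\#_k(t)$ is defined iff $X\vdash_{m+k}\down\ol\gamma(t)$ is derivable, and then $\iota^\#_k(t)=[\ol\gamma(t)]_{m+k}$.

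I would prove the substitution lemma by induction on the term~$t$. The variable case is immediate from $\eta_X(x)=[x]_0$ and $\ol\gamma(x)=\gamma(x)$. For $t=\sigma(f)$ with $f\colon|\arity(\sigma)|\to\mathsf{T}_{\Sigma,q}(\Delta)$, so $k=q+d(\sigma)$, the recursive clause of the evaluation map says that $\iota^\#_k(\sigma(f))$ is defined iff all $\iota^\#_q(f(i))$ are defined and $i\mapsto\iota^\#_q(f(i))$ is monotone into $(FX)_{m+q}$; feeding in the inductive hypothesis and applying \autoref{C:arities} rewrites this precisely as derivability of $X\vdash_{m+k}\down\sigma(\ol\gamma\cdot f)$, i.e.\ of $X\vdash_{m+k}\down\ol\gamma(\sigma(f))$. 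When these conditions hold, $\iota^\#_k(\sigma(f))=\sigma^{FX}_{m+q}\big(i\mapsto[\ol\gamma(f(i))]\big)$, and since $\sigma^{FX}$ is independent of the section (\autoref{L:welldefined}) one may compute this value using the lift $\ol\gamma\cdot f$, obtaining $[\sigma(\ol\gamma\cdot f)]=[\ol\gamma(\sigma(f))]$, as required. With the lemma in hand, satisfaction of an axiom $\Delta\vdash_p s\leq t\in\E$ is routine: for a monotone $\iota\colon\Delta\to(FX)_m$ with $m+p\le n$, monotonicity of $\iota$ makes all premises $X\vdash_m\gamma(x)\leq\gamma(y)$ of rule $(\mathsf{Ax1})$ derivable (with $\gamma=u_m\cdot\iota$), hence $X\vdash_{p+m}\ol\gamma(s)\leq\ol\gamma(t)$ is derivable; the Subterm Rule (\autoref{P:subterms}) then yields $X\vdash_{p+m}\down\ol\gamma(s)$ and $X\vdash_{p+m}\down\ol\gamma(t)$, so by the substitution lemma $\iota^\#_p(s)=[\ol\gamma(s)]$ and $\iota^\#_p(t)=[\ol\gamma(t)]$ are both defined with $\iota^\#_p(s)\leq\iota^\#_p(t)$ in $(FX)_{p+m}$. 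Hence $FX$ is a $(\T,n)$-model.

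For the universal property, fix a $(\T,n)$-model $A$ and a monotone map $h\colon X\to A_0$, and let $h^\#$ be the evaluation map associated with $h$ (the instance $m=0$ of the evaluation-map construction). Since $FX$ is now known to be a $(\T,n)$-model, Soundness (\autoref{T:sound}) applies to $h$ and gives: $X\vdash_k\down t$ derivable implies $h^\#_k(t)$ defined, and $X\vdash_k s\leq t$ derivable implies $h^\#_k(s)\leq h^\#_k(t)$ in $A_k$. Therefore the assignment $\hat h_k\colon(FX)_k\to A_k$, $[t]_k\mapsto h^\#_k(t)$, is a well-defined monotone map, and $\hat h_0\cdot\eta_X=h$ holds by inspection. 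That $\hat h=(\hat h_k)$ is a homomorphism of $(\Sigma,n)$-algebras follows by unwinding definitions: for monotone $f\colon\arity(\sigma)\to(FX)_k$, both $\hat h_{k+d(\sigma)}(\sigma^{FX}_k(f))$ and $\sigma^A_k(\hat h_k\cdot f)$ reduce to $\sigma^A_k(h^\#_k\cdot u_k\cdot f)$, using the identity $[u_k f(i)]=f(i)$ and the recursive clause of $h^\#$. Uniqueness is an induction on term structure: any homomorphism $g\colon FX\to A$ with $g_0\cdot\eta_X=h$ agrees with $\hat h$ on variables because $g_0([x]_0)=h(x)=\hat h_0([x]_0)$, and on $[\sigma(f)]_k$ one writes this element as $\sigma^{FX}_q\big(i\mapsto[f(i)]_q\big)$ -- legitimate because \autoref{L:subterms} guarantees that $i\mapsto[f(i)]_q$ is monotone -- and invokes the homomorphism property together with the inductive hypothesis. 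The case $n=\omega$ runs identically, with all depth bounds omitted.

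The step I expect to be the main obstacle is the substitution lemma, and within it the bookkeeping around the sections $u_k$: at every inductive step one must first check, via \autoref{C:arities} and \autoref{L:subterms}, that the relevant family $\arity(\sigma)\to(FX)_k$ is genuinely monotone before $\sigma^{FX}$ can be applied to it, and then use independence of $\sigma^{FX}_k$ from the chosen section (\autoref{L:welldefined}) in order to replace section-induced lifts by the syntactically natural lifts coming from the terms under consideration. Once the lemma is stated with exactly these hypotheses, Soundness and the Subterm Rule make the remainder essentially formal.
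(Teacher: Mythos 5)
Your proposal is correct and follows essentially the same route as the paper's proof: establish that $FX$ is a $(\T,n)$-model via an inductive claim identifying $\iota^\#_k(t)$ with $[\ol\gamma(t)]$ for $\gamma=u_m\cdot\iota$ (the paper proves exactly this, inline, in the first part of its argument), then obtain $\hat h_k$ by factoring the total monotone restriction of $h^\#_k$ (supplied by the proof of Soundness) through the quotient $q_k$, and verify homomorphy and uniqueness by the same section-juggling computations. The only difference is cosmetic: you package the inductive claim as a standalone iff-form substitution lemma, whose converse direction the paper proves separately (as \autoref{L:defnd}) for use in completeness rather than here.
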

\noindent
More precisely, $FX$ is a free $(\T,n)$-model with the universal morphism
$\eta_X$; that is, $\eta_X$ is monotone, and for every $(\T, n)$-model
$A$ and every monotone map $h\colon X\to A_0$, there exists a
unique $(\Sigma, n)$-homomorphism $h^\sharp\colon FX\to A$ such that
$h^\sharp_0\cdot\eta_X = h$.
\begin{corollary}\label{C:adj}
  The forgetful functor from the category of $(\T, n)$-models to
  $\Pos$ mapping $(A_k)_{k \leq n}$ to  $A_0$ has a left adjoint,
  which assigns to a poset $X$ the free $(\T, n)$-model. 
\end{corollary}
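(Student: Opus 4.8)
The plan is to derive \autoref{C:adj} as a straightforward consequence of \autoref{P:free}. Indeed, \autoref{P:free} already asserts the key fact: for every poset $X$, the $(\Sigma,n)$-algebra $FX$ (which, by the definition of $(\T,n)$-model and the soundness theorem \autoref{T:sound}, satisfies every axiom in $\E$ and hence \emph{is} a $(\T,n)$-model) is free on $X$ with universal morphism $\eta_X\colon X\to (FX)_0$. So the content that remains is purely formal: to package this family of universal arrows into a left adjoint to the forgetful functor $U$ sending a $(\T,n)$-model $((A_k)_{k\le n},(a^{mk}))$ to $A_0$.

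First I would recall the standard characterization of adjunctions by universal arrows (e.g.\ Mac Lane): a functor $U\colon\mathcal{D}\to\mathcal{C}$ has a left adjoint iff for every object $X$ of $\mathcal{C}$ there is an object $LX$ of $\mathcal{D}$ and a morphism $\eta_X\colon X\to U(LX)$ that is universal from $X$ to $U$, i.e.\ every $h\colon X\to UA$ factors as $h = U(h^\sharp)\cdot\eta_X$ for a unique $\mathcal{D}$-morphism $h^\sharp\colon LX\to A$. Here $\mathcal{C}=\Pos$, $\mathcal{D}$ is the category of $(\T,n)$-models, $U$ is the stated forgetful functor, and $LX := FX$. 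The universality statement is exactly the displayed property following \autoref{P:free}: $\eta_X$ is monotone, and for every $(\T,n)$-model $A$ and monotone $h\colon X\to A_0 = U A$ there is a unique $(\Sigma,n)$-homomorphism $h^\sharp\colon FX\to A$ with $h^\sharp_0\cdot\eta_X = h$; since a $(\Sigma,n)$-homomorphism between $(\T,n)$-models is the same thing as a morphism of $(\T,n)$-models, $h^\sharp$ is precisely a $\mathcal{D}$-morphism, and $U(h^\sharp) = h^\sharp_0$. Thus the universal-arrow hypothesis holds for every $X$, and the cited characterization yields the left adjoint, which by construction sends $X$ to $FX$, the free $(\T,n)$-model on $X$.

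The only minor points to check are bookkeeping: that $FX$ genuinely lies in the subcategory of $(\T,n)$-models (it satisfies every axiom, since derivable inequality is sound by \autoref{T:sound} and the axioms are among the derivable inequations), and that the assignment $X\mapsto FX$ is functorial with $\eta$ natural — but both of these come for free from the universal-arrow theorem, which produces the functor $L$ and the natural unit automatically from the pointwise universality. I do not anticipate any real obstacle here; essentially all the work was already done in establishing \autoref{P:free}, and this corollary is a one-line invocation of the universal-property formulation of adjoint functors. If one prefers a self-contained argument, the action of $L$ on a monotone map $f\colon X\to Y$ is $Lf := (\eta_Y\cdot f)^\sharp$, naturality of $\eta$ and functoriality of $L$ follow from the uniqueness clause in \autoref{P:free}, and the triangle identities are likewise forced by uniqueness; this is the routine verification that every universal-arrow situation assembles into an adjunction.
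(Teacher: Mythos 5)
Your proposal is correct and matches the paper's (implicit) route exactly: the paper offers no separate proof of \autoref{C:adj}, treating it as an immediate consequence of \autoref{P:free} via the standard assembly of pointwise universal arrows into an adjunction, which is precisely what you do. One parenthetical is slightly off — that $FX$ satisfies the axioms does not follow from \autoref{T:sound} (soundness quantifies over models, so invoking it here would be circular); rather, it is verified directly in the proof of \autoref{P:free} and is already part of that proposition's statement, so nothing in your argument actually depends on the misattribution.
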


\noindent By considering the interpretation of $\Sigma$-terms in the
free $(\T,n)$-model $FX$, we have the following:
\begin{theorem}[Soundness and Completeness]\label{T:complete}
  \mbox{ }\\
  Let $n\leq\omega$. An inequation in context is derivable iff it is
  satisfied by every $(\T, n)$-model.
\end{theorem}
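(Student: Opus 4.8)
The plan is to derive this as a combination of the already-established soundness direction (\autoref{T:sound}) and a new completeness direction, the latter obtained by exhibiting a single $(\T,n)$-model in which derivability is forced. The concrete such model is the free model $FX$ from \autoref{D:freealgebra}, instantiated at the context $\Gamma$ itself: that is, take $X=\Gamma$ and consider $F\Gamma$. The key observation is that for the identity-style insertion $\eta_\Gamma\colon\Gamma\to(F\Gamma)_0$, the induced evaluation map $\eta_\Gamma^{\#}$ sends a depth-$k$ term $t$ (that is derivably defined, i.e.\ $\Gamma\vdash_k\down t$ derivable) to its derivable-equality class $[t]_k$; one proves this by induction on the structure of $t$, using the definition of the operations $\sigma^{F\Gamma}_k$ and \autoref{C:arities} to handle the partiality (the evaluation $\eta_\Gamma^{\#}(\sigma(f))$ is defined precisely when $\Gamma\vdash_k f(i)\le f(j)$ for $i\le j$ in $\arity(\sigma)$, which by \autoref{C:arities} is exactly the condition that $\Gamma\vdash_{k+d(\sigma)}\down\sigma(f)$, i.e.\ that $\sigma(f)$ lies in $\terms_{\T,k+d(\sigma)}(\Gamma)$).

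Granting that, the completeness direction runs as follows. Suppose $\Gamma\vdash_k s\le t$ is satisfied by every $(\T,n)$-model. By \autoref{P:free}, $F\Gamma$ is a $(\T,n)$-model, so in particular it satisfies $\Gamma\vdash_k s\le t$: applying the definition of satisfaction with the monotone map $\iota=\eta_\Gamma\colon\Gamma\to(F\Gamma)_0$ (here $m=0$, so $m+k=k\le n$), we get that $\eta_\Gamma^{\#}(s)$ and $\eta_\Gamma^{\#}(t)$ are both defined and $\eta_\Gamma^{\#}(s)\le\eta_\Gamma^{\#}(t)$ in $(F\Gamma)_k$. By the observation above, $\eta_\Gamma^{\#}(s)=[s]_k$ and $\eta_\Gamma^{\#}(t)=[t]_k$, and the order on $(F\Gamma)_k=\terms_{\T,k}(\Gamma)/\mathord\sim$ is exactly derivable inequality; hence $\Gamma\vdash_k s\le t$ is derivable. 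Combined with \autoref{T:sound}, this gives the biconditional. The $n=\omega$ case is identical, using the $\omega$-variant of $F\Gamma$ and noting that the constraint $m+k\le n$ is vacuous.

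The main obstacle is the inductive claim that $\eta_\Gamma^{\#}(t)=[t]_k$ for every derivably-defined depth-$k$ term $t$, and in particular that $\eta_\Gamma^{\#}$ is \emph{defined} on exactly the derivably-defined terms. The subtlety is entirely in the partiality bookkeeping: one must check that the recursive ``definedness'' clauses in the evaluation map line up step-for-step with derivability of $\Gamma\vdash\down(-)$, which is where \autoref{C:arities} (and behind it \autoref{L:subterms} and the Subterm Rule \autoref{P:subterms}) does the real work. One also needs that $\sigma^{F\Gamma}_k$ is well-defined and monotone (\autoref{L:welldefined}) so that the recursion step $\eta_\Gamma^{\#}(\sigma(f))=\sigma^{F\Gamma}_k(\eta_\Gamma^{\#}\cdot f)=[\sigma(u_k\cdot(\eta_\Gamma^{\#}\cdot f))]_{k+d(\sigma)}$ equals $[\sigma(f)]_{k+d(\sigma)}$ — this last equality holding because $u_k\cdot q_k$ is $\sim$-equivalent to the identity and the operations of $F\Gamma$ are congruences for $\sim$ by construction. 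Once this lemma is in hand the theorem is immediate; no further calculation is required beyond unwinding the definition of satisfaction.
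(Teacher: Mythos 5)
Your proposal is correct and follows essentially the same route as the paper: completeness is obtained by instantiating satisfaction in the free model $F\Gamma$ at the universal map $\eta_\Gamma$, together with the key lemma (the paper's \autoref{L:defnd}) that whenever $\eta_\Gamma^{\#}(t)$ is defined the term $t$ is $\T$-defined and $\eta_\Gamma^{\#}(t)=[t]_k$, proved by induction on terms using the $(\mathsf{Var})$ and $(\mathsf{Ar})$ rules. The only cosmetic difference is that you phrase the lemma as an ``iff'' between definedness and derivable definedness, whereas the paper only states (and only needs) the direction from definedness to derivability, since satisfaction already supplies definedness of $\eta_\Gamma^{\#}(s)$ and $\eta_\Gamma^{\#}(t)$.
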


\subsection{From Graded Theories to Graded Monads}\label{S:theorymonad}

\noindent We will now describe how every graded theory $\T$ induces a graded
monad $M^\T$ on $\Pos$. Fujii et
al.~\cite[Section~3]{FKM16} explain%
\smnote{Note that this is not their new result but something
  apparently well-known about lax actions of monoidal categories on a
  category.}  that given an adjunction
$L \dashv R\colon \catB \to \catA$ and a (strict) action
$*\colon \N \times \catB \to \catB$ of the monoid $(\N, +, 0)$
(considered as a discrete monoidal category) on $\catB$ one has a
graded monad $((M_n), \eta, \mu^{n,m})$ given by
\[
  M_n X = R(n * LX), \ \eta_X\colon X\to RLX = R(0 * LX) = M_0 X
\]
and
$\hspace{2.7cm}
\begin{tikzcd}[column sep = 7, baseline=(B.base)]
  |[alias=B]|\llap{$\mu^{n,m}_X\colon M_nM_m X =\ $}
  R(n * LR (m* LX))
  \ar{d}{R(n * \varepsilon_{m * LX})}
  \\
  \llap{$M_{n+m} X = R((n+m)$}* LX) = R(n *(m* LX))
\end{tikzcd}
$

\noindent
where $\eta$ and $\varepsilon$ are the unit and counit of the given adjunction.

We apply this to the adjunction obtained from \autoref{C:adj} for
$n = \omega$ and the obvious action on the category of
$(\T, \omega)$-models given by
\[
  n * ((A_k), (\sigma^A_k)) = ((A_{n+k}), (\sigma^A_{n+k})).
\]
This yields a graded monad $M^\T$ on $\Pos$ with $M^\T_n X$ being the $n$th
component of the free $(\T,\omega)$-model $FX$.

\takeout{
\smnote{We have no proof details of the correspondence anywhere; and
  unlike the CALCO 2015 paper cannot cite the result. So I propose
  that we do not claim any correspondence before having a written
  proof. Maybe we manage to have the construction of a monad from a
  theory written up. If not, then the whole paper has a huge gap!}

We describe a construction of a finitary enriched graded
monad on $\Pos$ from a given graded theory $\T$. 

Each graded $\Sigma$ theory $\T$ with axioms $\E$ induces a 
finitary enriched graded monad $\mathbb{M}_{\T}$ on $\Pos$ by 
taking $M_n X$ to be the poset of depth-$n$ $\T$-defined terms with
variables in $X$, modulo derivable equality. We then take its unit to have 
components given by the inclusion of variables as terms and its multiplication 
$\mu^{nk}\colon M_nM_k\to M_{n+k}$ to have components given by the 
collapsing of layered terms, as usual.

Conversely, every finitary enriched graded monad $(M_n)_{n<\omega}$ on
$\Pos$ induces a graded theory $\mathbb{M}_{\T}$ whose signature is given
by putting $\Sigma(\Gamma, n):= M_n \Gamma$ for every context $\Gamma$
and every $n\in\omega$. Then, for a poset $X$, each depth-$n$ $\Sigma$-term 
$t$ has a canonical interpretation $\llbracket t\rrbracket\in M_n X$ defined 
recursively in the usual way. We let $\E$ consist of an inequation in context
$\Gamma\vdash_n s\leq t$ just in case
$\llbracket s\rrbracket\leq\llbracket t\rrbracket$ in $M_n\Gamma$.
}

\takeout{ 
We will now describe how we obtain a finitery graded monad on $\Pos$
from graded theory. To this end we recall the notion of a \emph{finitary Kleisli
  triple} introduced by Ad\'amek et al.~\cite{AdamekEA03} for locally
finitely presentable categories. A finitary Kleili triple on $\Pos$ consists of an object
assignment $\Gamma \mapsto  M\Gamma$ assigning to every finite poset
$\Gamma$ a poset $\M\Gamma$, a family of maps $\eta_\Gamma\colon
\Gamma \to M\Gamma$ and an operation $(-)^*$ assigning to each
monotone map $\Gamma \to M\Delta$ a monotone map $s^*\colon M\Gamma
\to \M\Delta$ subject to the following axioms
\[
  s^* \cdot \eta_\Gamma = s,\qquad (\eta_\Gamma)^* = \id_{M_\Gamma},
  \qquad (t^* \cdot s)^* = t^* \cdot s^*
\]
for all $s$ and $t\colon \Delta \to M\Upsilon$. There is an
equivalence of categories between finitary Kleisli
triples and finitary monads
on $\Pos$~\cite[Cor.~3.3]{AdamekEA03}.

Now observe that we clearly obtain a finitary Kleisli triple from
\autoref{P:free}:
\smnote[inline]{Write some more details}
Thus, we have proved
\begin{corollary}
Every graded theory induces a finitary monad on $\Pos$.
\end{corollary}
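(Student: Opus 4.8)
The plan is to obtain the asserted monad as the degree-$0$ part $M^{\T}_0$ of the graded monad constructed above (equivalently, the monad induced by the adjunction of \autoref{C:adj}), and to prove it finitary by exhibiting it as a \emph{finitary Kleisli triple} on $\Pos$ in the sense of Ad\'amek et al.~\cite{AdamekEA03}, invoking their equivalence between finitary Kleisli triples and finitary monads on the locally finitely presentable category $\Pos$~\cite{AR94}. Concretely: for a finite poset (context) $\Gamma$ set $M\Gamma := (F\Gamma)_0 = \terms_{\T,0}(\Gamma)/{\sim}$ of \autoref{D:freealgebra}. Inspecting the rules of \autoref{D:Tterms}, a derivation of a depth-$0$ judgement contains only depth-$0$ judgements, so $M\Gamma$ involves only depth-$0$ operations, axioms and substitutions; in particular it does not depend on whether $F$ is read at stage $0$ or at stage $\omega$, and by \autoref{P:free} it is the carrier of the free $(\T,0)$-model on $\Gamma$. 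Put $\eta_\Gamma\colon\Gamma\to M\Gamma$, $x\mapsto[x]_0$ (monotone by $(\mathsf{Var})$); and for monotone $s\colon\Gamma\to M\Delta$ choose a representing uniform substitution $\hat s\colon|\Gamma|\to\terms_{\T,0}(\Delta)$ (so $\Delta\vdash_0\hat s(x)\le\hat s(y)$ is derivable whenever $x\le y$ in $\Gamma$, by monotonicity of $s$), and define $s^{*}\colon M\Gamma\to M\Delta$ by $[t]_0\mapsto[\ol{\hat s}(t)]_0$, using the substitution action $\ol{(-)}$ of \autoref{S:LoIC}.

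The one substantive point, on which well-definedness and monotonicity of $s^{*}$ rest, is a \emph{substitution (cut) lemma} for the logic of inequations in context: if $\hat s\colon|\Gamma|\to\terms_{\T,0}(\Delta)$ satisfies $\Delta\vdash_0\hat s(x)\le\hat s(y)$ for all $x\le y$ in $\Gamma$ and $\Gamma\vdash_0 t_1\le t_2$ is derivable, then $\Delta\vdash_0\ol{\hat s}(t_1)\le\ol{\hat s}(t_2)$ is derivable. Specialising $t_1=t_2$ shows $\ol{\hat s}$ preserves derivable definedness (so $s^{*}$ indeed lands in $\terms_{\T,0}(\Delta)/{\sim}$, using \autoref{P:subterms}); specialising to $\hat s\sim\hat s'$ gives independence of the chosen representatives of $s$ and of $t$. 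I would prove the cut lemma by induction on the derivation of $\Gamma\vdash_0 t_1\le t_2$: $(\mathsf{Var})$-instances are discharged by the hypothesis on $\hat s$; $(\mathsf{Ar})$, $(\mathsf{Mon})$ and $(\mathsf{Trans})$ commute with substitution verbatim; and $(\mathsf{Ax1})$, $(\mathsf{Ax2})$ go through because the composite of $\hat s$ with a uniform substitution is again a uniform substitution, together with the distributivity identity $\ol{\hat s}\circ\ol\gamma=\ol{\ol{\hat s}\circ\gamma}$ (a routine term induction). Alternatively, the lemma can be read off from soundness (\autoref{T:sound}) and completeness (\autoref{T:complete}) by evaluating $t_1,t_2$ in the free model $F\Delta$ under the valuation $x\mapsto[\hat s(x)]_0$, using the easy identity $\iota^{\#}(t)=[\ol{\hat s}(t)]_0$. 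This cut lemma is the main obstacle; with it in hand the remainder is bookkeeping.

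Given $s^{*}$, the three Kleisli-triple identities are routine: $s^{*}\cdot\eta_\Gamma=s$ since $\ol{\hat s}(x)=\hat s(x)$; $(\eta_\Gamma)^{*}=\id_{M\Gamma}$ since substituting every variable by itself leaves every (defined) term unchanged, by induction on its structure; and $(t^{*}\cdot s)^{*}=t^{*}\cdot s^{*}$ for $s\colon\Gamma\to M\Delta$, $t\colon\Delta\to M\Upsilon$, from the associativity of substitution $\ol{\hat t}\circ\ol{\hat s}=\ol{\ol{\hat t}\circ\hat s}$ together with the cut lemma, which licenses taking $\ol{\hat t}\circ\hat s$ as a representing substitution of $t^{*}\cdot s$. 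By \cite[Cor.~3.3]{AdamekEA03} this finitary Kleisli triple corresponds to a finitary monad on $\Pos$, which by construction is exactly $M^{\T}_0$; it is moreover enriched (all the maps involved are monotone, cf.\ \autoref{L:welldefined}), and the same finite-support observation shows each $M^{\T}_n$ is a finitary endofunctor, so $M^{\T}$ is a finitary enriched graded monad. Finally, an even shorter route to the bare statement is available: by the remark following \autoref{D:theory} the category of $(\T,0)$-models is a variety of coherent $\Sigma$-algebras, hence by \cite[Prop.~4.6]{AdamekEA20} its forgetful functor to $\Pos$ is induced by a finitary enriched monad, which by \autoref{C:adj} is precisely $M^{\T}_0$.
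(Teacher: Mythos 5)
Your main construction is correct and lands on the right object, but it takes a more hands-on route than the paper. In the paper this corollary sits in excised draft text whose entire justification is the remark that \autoref{P:free} ``clearly'' yields a finitary Kleisli triple in the sense of Ad\'amek et al.; the surviving text instead obtains the monad as the degree-$0$ part of the graded monad produced from the adjunction of \autoref{C:adj} via the recipe of Fujii et al., so that the Kleisli-triple laws and independence of representatives come for free from the universal property of $FX$ (one sets $s^{*}=s^{\sharp}_0$ and uses uniqueness of $h^{\sharp}$), and no separate cut lemma is needed. You instead define $s^{*}$ syntactically and prove closure of derivability under uniform substitution by induction on derivations; that lemma is true and your induction goes through (your second, ``semantic'' proof of it via \autoref{T:sound}, \autoref{T:complete} and evaluation in $F\Delta$ is in effect the paper's intended argument), but it re-derives by hand what the universal property already encapsulates. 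What your route buys is an explicit, self-contained description of the Kleisli triple that does not presuppose \autoref{P:free}; what the paper's route buys is brevity and the graded structure in one stroke.

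Two caveats. First, your claim that a derivation of a depth-$0$ judgement ``involves only depth-$0$ operations, axioms and substitutions'' is not accurate: the rule $(\mathsf{Ax2})$ can export inequalities between the depth-$0$ arguments of an operation occurring inside a \emph{higher-depth} axiom down to depth $0$, since its conclusion has depth $m+k$ with $m$ the depth of the arguments rather than of the axiom. This does not affect your Kleisli triple, which is built from the full derivability relation, but it does undermine your ``even shorter route'': a $(\T,0)$-model satisfies all positive-depth axioms vacuously, so the variety of coherent algebras to which you appeal is presented by the depth-$0$ fragment only, and identifying its monad with $M^{\T}_0=(F(-))_0$ then needs an extra argument (or an appeal to \autoref{P:free} at $n=0$) precisely because of these $(\mathsf{Ax2})$ leaks. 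Second, the step from the Kleisli triple on $\Posf$ to the value of $M^{\T}_0$ on an arbitrary poset deserves one explicit sentence: each term and each derivation mentions finitely many variables, so $\terms_{\T,0}(X)/{\sim}$ is the filtered colimit of its restrictions to finite subposets of $X$, which is exactly why the left Kan extension of your triple agrees with $M^{\T}_0$.
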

\smnote[inline]{This is of course not what we need; we need a
  \emph{graded} monad.}
}
\begin{expl}\label{E:induced}
  \lsnote{@Lutz: Add proofs to appendix}
  We describe the functor part $(M_n)_{n<\omega}$ of the graded monads
  induced by the graded theories of~\autoref{E:theory}.
  \takeout{
  \renewcommand{\subsectionautorefname}{Appendix}
  for detailed proofs, see \autoref{S:app-monadtheory}%
  \renewcommand{\subsectionautorefname}{Section}:
  }
\begin{enumerate}
    
\item\label{item:induced-convexpower}
The graded theories $\JSL(\A)$ and $\JSL^{\down}(\A)$ induce the 
graded monad with $M_n=G^n$ (cf.~\autoref{E:gradedmonad}) for the functors
$G=C_{\omega}(\A\times(-))$ and $G=\pow^{\down}_{\omega}(\A\times(-)),$
respectively. 

\item\label{item:induced-synch} The graded monad induced by $\syncTh$
  is inductively described as follows. Every $M_nX$ consists of
  \emph{live} elements and a single additional element called
  \emph{deadlock}, and denoted~$0$.  The live elements of $M_0X$ are
  the elements of~$X$, ordered as in~$X$ and incomparable to deadlock;
  i.e.\ $M_0X=X+1$, with $1=\{0\}$. The live elements of
  $M_{n+1}X$ are non-empty (order-theoretically) convex sets, ordered
  by the Egli-Milner ordering, of pairs $(a,t)$ where $a\in\A$ and~$t$
  is a live element of~$M_nX$; such pairs are ordered component-wise.
  Up to identifications caused by convexity, the live elements
  of~$M_nX$ may thus be seen as unordered trees, with edges labelled
  in~$\A$ and leaves labelled in~$X$, that have uniform depth~$n$ in
  the sense that every leaf is~$n$ steps from the root. This explains
  our use of the term \emph{synchronization}: At depth~$n$, $\syncTh$
  only takes into account computation trees that proceed in synchrony
  for~$n$ steps. Note that this description implies that~$M_1X$ is
  isomorphic to $C_\omega(\A\times X)$, with the empty set as
  deadlock.

\item\label{item-gsubconvex} The ordered algebraic theory~$\mathbb{S}$
  of~\autoref{E:theory}\ref{item:theory-subconvex} induces
  the monad~$\sdist$ on $\Pos$ which assigns to a poset~$X$ the set of
  all finite subdistributions on $X$, equipped with an ordering that
  we describe below. Then, the graded theory $\gS$
  of~\autoref{E:theory}\ref{item:theory-gsubconvex} induces the graded
  monad~$\M$ with $M_n=\sdist(\A^n\times(-))$.
\end{enumerate}
   
\label{item:induced-subconvex}
\noindent We now describe the ordering of the subdistribution monad~$\sdist$.
A similar description has been provided by Jones and Plotkin \cite[Lemma 9.2]{JP89} 
in a more restricted domain-theoretic setting (directed-complete posets). Their
argument is based on the max-flow-min-cut theorem;
 \renewcommand{\subsectionautorefname}{Appendix}%
in \autoref{S:app-sdist},%
\renewcommand{\subsectionautorefname}{Section}
we give an independent, syntactic argument. We represent finite subdistributions 
$\mu\colon X \to [0,1]$ as a formal sums
$\sum_{i=1}^n p_i\cdot x_i$ where $n\in\omega$, $p_i\in(0,1]$, and
$\sum_{i=1}^np_i\leq 1$. 
\begin{defn}\label{d:sdistorder}
\begin{enumerate}
\item A \emph{subdivision} of a formal subconvex combination
  $\sum_{i=1}^n p_i\cdot x_i$ is a formal subconvex combination
  $\sum_{i=1}^n(\sum_{j=1}^{m_i}\overline{p}_j)\cdot x_i$ such that
  $\sum_{j=1}^{m_i}\overline{p}_j= p_i$ for all $i\leq n$. We then
  also say that
  $\sum_{i=1}^n(\sum_{j=1}^{m_i}\overline{p}_j)\cdot x_i$
  \emph{refines} $\sum_{i=1}^n p_i\cdot x_i$.

\item We say that a formal subconvex combination
  $\sum_{i = 1}^n p_i\cdot x_i$ is \emph{obviously below}
  $\sum_{i=1}^m q_i\cdot y_i$ if there is an injective map
  $f\colon [n]\to [m]$ such that $p_i\leq q_{f(i)}$ and
  $x_i\leq y_{f(i)}$ for all $i\leq n$. In this case, we write
  $\sum_{i=1}^n p_i\cdot x_i\sqsubseteq \sum_{i=1}^m q_i\cdot y_i$,
  and say that~$f$ \emph{witnesses} this relation. 

\item Finally, we define the relation $\preccurlyeq$ on $\sdist X$ by 
putting $\sum_{i=1}^n p_i\cdot x_i\preccurlyeq\sum_{j=1}^m q_j\cdot y_j$ 
if and only if there exist subdivisions $d_1$ and $d_2$ of $\sum_{i=1}^n p_i\cdot x_i$ 
and $\sum_{j=1}^k q_j\cdot y_j$, respectively, such that $d_1\sqsubseteq d_2$. 
\end{enumerate}
\end{defn}
\begin{theorem}\label{T:sdist}
The monad $\sdist$ given by the above theory 
assigns to a poset $X$ the set of all finitely generated
subdistributions on $X$ equipped with the order $\preccurlyeq$.
\end{theorem}
\noindent%
\renewcommand{\subsectionautorefname}{Appendix}
For a detailed proof, see \autoref{S:app-sdist}.%
\renewcommand{\subsectionautorefname}{Section}
\end{expl}
\subsection{Depth-1 Graded Monads}\label{S:depth1}
\noindent An important class of graded theories are those in which
operations and axioms are restricted to have depth at most~$1$.
Graded monads~$\M$ induced by such a graded theory are particularly
well-behaved in that their $M_n$-algebras may be constructed
compositionally from $M_1$-algebras~\cite{MPS15}. We briefly recall
the essentials, specializing to posets.

\begin{defn}
  A graded theory $\T$ is \emph{depth-1-generated} if all of its
  operations have depth at most 1. We call $\T$ \emph{depth-1} if it
  is depth-1-generated and all of its axioms are depth-1. A graded
  monad on $\Pos$ is \emph{depth-1-generated} (depth-1) if it is
  (isomorphic to) the graded monad induced by some depth-1-generated
  (depth-1) graded theory.
\end{defn}

\begin{expl}
The theories of \autoref{E:theory} are depth-1.
\end{expl}
\noindent 
We fix the following terminology: a natural transformation is an
\emph{epi-transformation} if all its components are epimorphisms. In
particular, a natural transformation between endofunctors on $\Pos$ is
an epi-transformation if its components are surjections.

Now, we have the following characterization of depth-1 graded monads
on $\Pos$, the proof of which is completely analogous to the
corresponding characterization on
$\Set$~\cite[Prop. 7.3]{MPS15} (the only difference in our setting
rests in our additional assumption that $M_1\mu^{1,n}$ is an
epi-transformation; this is automatic in the $\Set$-based case since epis split in $\Set$ and are hence preserved by all functors).

\removeThmBraces
\begin{proposition}\label{P:depth1}
  Let $\M=((M_n)_{n<\omega}, \eta, (\mu^{n,k})_{n,k<\omega})$ be a
  graded monad on $\Pos$. Then $\M$ is depth-1-generated if and only
  if $\mu^{n,k}$ and $M_1\mu^{1,n}$ are epi-transformations for all
  $n,k\in\omega$. Moreover, $\M$ is depth-1 if and only if $\M$ is
  depth-1-generated and, for all $n\in\omega$, the diagram
  \begin{equation} 
    \begin{tikzcd}[column sep = 40]
      M_1M_0M_n
      \arrow[r, "{M_1\mu^{0,n}}", shift left]
      \arrow[r, "{\mu^{1,0}M_n}"', shift right]
      &
      M_1M_n
      \arrow[r, "{\mu^{1,n}}"]
      &
      M_{n+1}
    \end{tikzcd}
  \end{equation}
  is object-wise a coequalizer diagram in the category of $M_0$-algebras.
\end{proposition}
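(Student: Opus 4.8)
The plan is to transport the known $\Set$-based argument of~\cite[Prop.~7.3]{MPS15} to the enriched setting over $\Pos$, paying attention to the two places where set-specific facts (splitting of epis, preservation of epis by all functors) were used. We treat the two equivalences separately.

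For the first equivalence, the ``only if'' direction is routine: if $\M$ is induced by a depth-1-generated theory $\T$, then in the free-model description $M^\T_n X = (FX)_n$ the multiplication maps $\mu^{n,k}$ collapse layered terms, and since every operation has depth $0$ or $1$, every depth-$(n+k)$ $\T$-defined term in $X$ is (derivably equal to) a term built by applying the outer $n$ layers of depth-$1$ (and depth-$0$) operations to depth-$k$ subterms; this exhibits $\mu^{n,k}_X$ as surjective, and likewise for $M_1\mu^{1,n}$ using that the subterms feeding the outermost layer are themselves depth-$(n+1)$ terms obtained by one depth-$1$ layer over depth-$n$ terms. For the ``if'' direction, assume $\mu^{n,k}$ and $M_1\mu^{1,n}$ are epi-transformations. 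I would build a depth-1-generated presentation of $\M$ exactly as in the $\Set$ case: take as depth-$0$ operations the elements of $M_0\Gamma$ (with arity $\Gamma$ ranging over $\Posf$) and as depth-$1$ operations the elements of $M_1\Gamma$, with the evident interpretations in any $M_n$-algebra, and take $\E$ to be the full (in)equational theory they satisfy, i.e.\ $\Gamma\vdash_k s\leq t$ whenever $\llbracket s\rrbracket \leq \llbracket t\rrbracket$ in $M_k\Gamma$. One then checks that the induced graded monad $M^\T$ agrees with $\M$; here the hypothesis that $\mu^{n,k}$ (and $M_1\mu^{1,n}$) are epi ensures that every element of $M_{n}\Gamma$ is in the image of a term built from only depth-$\le 1$ operations, so that no ``genuinely depth-$\ge 2$'' operations are needed to present $\M$. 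Finitariness of each $M_n$ is what makes the arities in $\Posf$ suffice.

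For the second equivalence, I would follow~\cite[Prop.~7.3]{MPS15} in identifying the category of $(\T,\omega)$-models of a depth-1 theory with the category whose objects are $M_0$-algebras $A_0$ equipped, for each $n$, with an $M_0$-algebra morphism $M_{n+1}A_0 \to A_{n+1}$ coequalizing the displayed parallel pair, compatibly across $n$. Concretely: given a depth-1 theory, the free construction shows $M_{n+1}X$ is the coequalizer in $M_0$-alg of $M_1\mu^{0,n}_X$ and $\mu^{1,0}_{M_nX}$ (the depth-1 axioms, being applicable ``one layer deep'', exactly identify a depth-$1$ operation applied to a $0$-coerced argument with the corresponding depth-$0$ action), which is the asserted coequalizer property. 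Conversely, if $\M$ is depth-1-generated and the displayed diagram is an object-wise coequalizer of $M_0$-algebras, then $M_{n+1}$ is determined by $M_1$ and $M_n$ via this coequalizer, and one reads off a depth-1 presentation: take the depth-1-generated presentation from the first part and add, as depth-$1$ axioms, exactly the inequations forced by the coequalizer at level $n=0$ (equivalently, the kernel of $\mu^{1,0}$ modulo the identifications coming from $M_1\mu^{0,0}$); an induction on $n$ using associativity of $\mu$ and the coequalizer at each stage shows these depth-1 axioms already generate all inequations of $\M$, so $\M$ is depth-1.

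The main obstacle I expect is in the first equivalence, specifically the point flagged in the paper: over $\Set$ one freely uses that epis split and are preserved by arbitrary functors, which is exactly what makes ``$\mu^{n,k}$ epi'' propagate through composites without extra hypotheses. Over $\Pos$ neither holds, which is why the statement carries the additional assumption that $M_1\mu^{1,n}$ is an epi-transformation. So the delicate step is to verify that, with \emph{both} $\mu^{n,k}$ and $M_1\mu^{1,n}$ assumed epi, one can still run the surjectivity bookkeeping that in $\Set$ was automatic — i.e.\ show that $\mu^{1,n}\cdot M_1\mu^{1,n-1}\cdots$ composites remain surjective and that the presentation $M^\T$ genuinely reconstructs $\M$ on all $M_n$, not just $M_0$ and $M_1$. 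I would handle this by an explicit induction on $n$: $M_{n+1} = \mathrm{im}$ of the appropriate term map follows from $\mu^{1,n}$ epi together with the inductive surjectivity of the depth-$n$ layer, and the enriched (locally monotone) structure guarantees the term interpretation maps are monotone so that the derivable-inequality order matches the order on $M_{n+1}\Gamma$. Everything else — soundness, the unit/associativity bookkeeping, and the coequalizer manipulations — is a direct, if tedious, transcription of the $\Set$ proof using \autoref{P:free} and \autoref{T:complete} in place of their $\Set$ analogues.
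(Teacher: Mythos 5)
Your proposal takes essentially the same route as the paper, which gives no independent proof of this proposition: it declares the argument ``completely analogous'' to the $\Set$-based one of~\cite{MPS15}, noting only that the additional hypothesis that $M_1\mu^{1,n}$ be an epi-transformation compensates for the failure of epis to split (and hence be preserved by all functors) in $\Pos$ --- precisely the point you isolate as the delicate step. Your plan (canonical presentation by depth-$\le 1$ operations, inductive surjectivity bookkeeping, and the coequalizer identification of free models, with \autoref{P:free} and \autoref{T:complete} replacing their $\Set$ analogues) is consistent with that intended transcription.
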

\resetCurThmBraces

\begin{rem}\label{rem:mone}
As indicated above, in depth-$1$ graded monads there is a
focus on $M_0$- and $M_1$-algebras. We list some basic observations on
such algebras.
\begin{enumerate}
\item $M_0$-algebras are Eilenberg-Moore algebras for the monad $(M_0, \eta, \mu^{0,0})$.
\item By observations in~\autoref{sec:graded}, for all $n$, $X$,
  $M_nX$ is an $M_0$-algebra with structure map
  $\mu^{0n}\colon M_0M_n X\to M_n X$.
\item\label{item:mone} Similarly, for all~$n$,~$X$, $(M_nX,M_{n+1}X)$
  form an $M_1$-algebra, with structure maps
  $\mu^{0,n},\mu^{1n},\mu^{0,n+1}$.

\item The associative law $\mu^{nk}\cdot\mu^{0n}M_k=\mu^{0,n+k}\cdot M_0\mu^{nk}$
  of~$\M$ then states precisely that
  $\mu^{n,k}_X\colon M_nM_k X\rightarrow M_{n+k}X$ is a morphism of
  $M_0$-algebras.
\item An $M_1$-algebra consists of $M_0$-algebras $(A_0, a^{00})$ and
  $(A_1, a^{01})$ and a \emph{main structure map}
  $a^{10}\colon M_1A_0\to A_1$ satisfying two instances
  of Diagram~\eqref{D:gradalg}; one of which is the square 
\begin{equation*}
  \begin{tikzcd}[column sep = 30]
    M_0M_1A_0 \ar{r}{M_0a^{10}}
    \ar{d}[swap]{\mu^{01}_{A_0}}
    &
    M_0A_1
    \ar{d}{a^{01}}
    \\
    M_1A_0
    \ar{r}{a^{01}}
    &
    A_1                        
  \end{tikzcd}
\end{equation*} 
stating that $a^{10}$ is a morphism of $M_0$-algebras from 
$(M_1A_0, \mu^{10}_{A_0})$ to $(A_1, a^{01})$, while the other 
expresses that the diagram below commutes:
  \begin{equation}\label{D:coeq}
    \begin{tikzcd}[column sep = 35]
      M_1M_0A_0
      \arrow[r, "\mu^{10}_{A_0}", shift left]
      \arrow[r, "M_1a^{00}"', shift right]
      &
      M_1A_0 \arrow[r, "a^{10}"]
      &
      A_1.
    \end{tikzcd}
  \end{equation}
\end{enumerate}
\end{rem}
%



%
\takeout{
\begin{proof}[Proofsketch]
  Given an $M_n$-algebra $((A_k), (a^{mk}))$ and a monotone map
  $f\colon X\rightarrow A_0$, define its free extension
  $f^{\sharp}=(f^{\sharp}_k)$ by putting
  $f^{\sharp}_k= a^{k0}\cdot M_kf$.
\end{proof}
}

\subsection{Canonical $M_1$-algebras}

\noindent $M_1$-algebras such that Diagram~\eqref{D:coeq} is a
coequalizer are of particular interest for use in the semantics of
graded logics; these are precisely the $M_1$-algebras that are free
over their $0$-part~\cite{DMS19}. We recall the details
presently.

\begin{defn}\label{D:canonical}
The \emph{$0$-part} of an $M_1$-algebra $A$ is the $M_0$-algebra 
$(A_0, a^{00})$. Taking $0$-parts defines a functor $Z$ from the category of 
$M_1$-algebras to the category of $M_0$-algebras which maps a morphism 
$(h_0, h_1)$ of $M_1$-algebras to $h_0$. An $M_1$-algebra $A$ is \emph{canonical} 
if it is free over its $0$-part with respect to the functor~$Z$, with
universal morphism $\id_{A_0}$.
\end{defn}

\noindent In other words, an $M_1$-algebra
$A=((A_0, A_1), (a^{00}, a^{01}, a^{10}))$ is canonical if for every
$M_1$-algebra $B$ with $0$-part $(B_0, b^{00})$ and every morphism
$h\colon A_0\rightarrow B_0$ of $M_0$-algebras, there exists a unique
$\C$-morphism $h^{\sharp}\colon A_1\rightarrow B_1$ such that
$(h, h^{\sharp})$ is a morphism of $M_1$-algebras, a property that
will later allow us to interpret modalities on~$A$.

\removeThmBraces
\begin{proposition}[{\cite[Lem.~5.3]{DMS19}}]\label{L:canonical}\label{P:coeq}
  An $M_1$-algebra $A$ is canonical iff \eqref{D:coeq} is a
  (reflexive) coequalizer diagram in the category of $M_0$-algebras.
\end{proposition}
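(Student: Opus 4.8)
The plan is to prove both implications directly, using only the universal properties involved and no explicit presentation of the free $M_1$-algebra. I would first make explicit two facts implicit in \autoref{rem:mone} and \autoref{D:canonical}: (a) an $M_1$-algebra $A$ is exactly a pair of $M_0$-algebras $(A_0,a^{00})$, $(A_1,a^{01})$ together with a map $a^{10}\colon M_1A_0\to A_1$ subject to precisely the two instances of Diagram \eqref{D:gradalg}, namely (i) $a^{10}$ is a morphism of $M_0$-algebras $(M_1A_0,\mu^{0,1}_{A_0})\to(A_1,a^{01})$, and (ii) $a^{10}\cdot\mu^{1,0}_{A_0}=a^{10}\cdot M_1a^{00}$ (that is, \eqref{D:coeq} commutes); and (b) a morphism of $M_1$-algebras $A\to B$ whose $0$-component is a prescribed $M_0$-morphism $h\colon(A_0,a^{00})\to(B_0,b^{00})$ is exactly an $M_0$-morphism $h^\sharp\colon(A_1,a^{01})\to(B_1,b^{01})$ with $h^\sharp\cdot a^{10}=b^{10}\cdot M_1h$. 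I would also note once and for all that the pair $(\mu^{1,0}_{A_0},M_1a^{00})$ is reflexive, with common section $M_1\eta_{A_0}$ (by the unit law $\mu^{1,0}\cdot M_1\eta=\id$ and the $M_0$-algebra unit law $a^{00}\cdot\eta_{A_0}=\id$), so ``coequalizer'' may be read as ``reflexive coequalizer'' throughout, and that by (i) and (ii) the morphism $a^{10}$ is in any case a cocone on this pair in the category of $M_0$-algebras; only universality is at stake.

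For the implication from \eqref{D:coeq} being a coequalizer to $A$ being canonical, given an $M_1$-algebra $B$ and an $M_0$-morphism $h\colon(A_0,a^{00})\to(B_0,b^{00})$, I would consider $g:=b^{10}\cdot M_1h\colon M_1A_0\to B_1$. A short diagram chase shows $g$ is a morphism of $M_0$-algebras $(M_1A_0,\mu^{0,1}_{A_0})\to(B_1,b^{01})$ (naturality of $\mu^{0,1}$ together with condition (i) for $B$) and that $g$ coequalizes $(\mu^{1,0}_{A_0},M_1a^{00})$ (naturality of $\mu^{1,0}$, the fact that $h$ is an $M_0$-morphism, and condition (ii) for $B$). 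The assumed universal property then produces a unique $M_0$-morphism $h^\sharp\colon A_1\to B_1$ with $h^\sharp\cdot a^{10}=g$; by (b) the pair $(h,h^\sharp)$ is precisely the unique morphism of $M_1$-algebras $A\to B$ over $h$, so $A$ is free over its $0$-part, i.e.\ canonical.

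For the converse, suppose $A$ is canonical and let an $M_0$-algebra $(C,c)$ together with a morphism of $M_0$-algebras $f\colon(M_1A_0,\mu^{0,1}_{A_0})\to(C,c)$ satisfying $f\cdot\mu^{1,0}_{A_0}=f\cdot M_1a^{00}$ be a test cocone. The key move is to read this data as a new $M_1$-algebra $B:=((A_0,a^{00}),(C,c),f)$: conditions (i) and (ii) for $B$ are exactly the two hypotheses on $f$. Applying canonicity of $A$ to $\id_{A_0}\colon(A_0,a^{00})\to Z(B)$ yields a unique $M_0$-morphism $\bar f\colon A_1\to C$ with $(\id_{A_0},\bar f)$ a morphism of $M_1$-algebras; by (b) this says $\bar f\cdot a^{10}=f\cdot M_1\id_{A_0}=f$. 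Any other $M_0$-morphism $\bar f'$ with $\bar f'\cdot a^{10}=f$ would likewise make $(\id_{A_0},\bar f')$ such a morphism and hence equal $\bar f$. So $a^{10}$ has the universal property of the coequalizer, which is moreover reflexive.

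I do not expect a genuine obstacle: the argument is bookkeeping with the graded-monad laws and the unfolding of the freeness conditions, essentially reproducing \cite[Lem.~5.3]{DMS19}. The one point needing care is the identification in (a) of the two $M_1$-algebra axioms with (i) and (ii), keeping straight that $M_1(-)$ carries the $M_0$-algebra structure $\mu^{0,1}_{(-)}$; once that is in place, the converse reduces to the observation that a coequalizer test cocone for \eqref{D:coeq} is the same thing as the main structure map of an auxiliary $M_1$-algebra with $0$-part $(A_0,a^{00})$.
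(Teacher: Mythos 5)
Your proof is correct and complete: unpacking the $M_1$-algebra axioms into your conditions (i)/(ii), inducing the comparison map from the cocone $b^{10}\cdot M_1h$ in one direction, and reading an arbitrary test cocone as the main structure map of an auxiliary $M_1$-algebra with $0$-part $(A_0,a^{00})$ in the other, is exactly the right argument, and the reflexivity observation via the common section $M_1\eta_{A_0}$ is also in order. The paper itself gives no proof here but imports the result from \cite{DMS19}; your argument is the standard proof of that lemma, carried over to $\Pos$ unchanged.
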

\resetCurThmBraces
\noindent%
\renewcommand{\propositionautorefname}{Propositions}%
Combining \autoref{L:canonical} and~\ref{P:depth1}, we have:
\renewcommand{\propositionautorefname}{Proposition}

\begin{corollary}\label{C:canonical}
  Let $\M$ be a depth-1 graded monad on $\Pos$. Then, for every
  $n\in\omega$ and and every poset $X$, the $M_1$-algebra
  $(M_nX, M_{n+1}X)$ as per
  \autoref{rem:mone}\ref{item:mone} 
is canonical.
\end{corollary}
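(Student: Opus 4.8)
The plan is to reduce \autoref{C:canonical} to the two already-proven facts it explicitly cites, namely \autoref{L:canonical} and \autoref{P:depth1}. First I would recall from \autoref{rem:mone}\ref{item:mone} that for every $n$ and every poset $X$, the pair $(M_nX, M_{n+1}X)$ carries the structure of an $M_1$-algebra, with $0$-part the $M_0$-algebra $(M_nX, \mu^{0,n}_X)$, structure map $a^{01} = \mu^{0,n+1}_X$, and main structure map $a^{10} = \mu^{1,n}_X$. By \autoref{L:canonical}, this $M_1$-algebra is canonical precisely when the fork
\[
  \begin{tikzcd}[column sep = 35]
    M_1M_0M_nX
    \arrow[r, "\mu^{1,0}_{M_nX}", shift left]
    \arrow[r, "M_1\mu^{0,n}_X"', shift right]
    &
    M_1M_nX \arrow[r, "\mu^{1,n}_X"]
    &
    M_{n+1}X
  \end{tikzcd}
\]
is a (reflexive) coequalizer diagram in the category of $M_0$-algebras. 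So the whole statement comes down to exhibiting this diagram as such a coequalizer.

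Next I would invoke the ``moreover'' part of \autoref{P:depth1}: since $\M$ is depth-1, it is in particular depth-1-generated, and for every $n \in \omega$ the diagram
\[
  \begin{tikzcd}[column sep = 40]
    M_1M_0M_n
    \arrow[r, "{M_1\mu^{0,n}}", shift left]
    \arrow[r, "{\mu^{1,0}M_n}"', shift right]
    &
    M_1M_n
    \arrow[r, "{\mu^{1,n}}"]
    &
    M_{n+1}
  \end{tikzcd}
\]
is object-wise a coequalizer in the category of $M_0$-algebras. Evaluating this natural diagram at the object $X$ yields exactly the fork above, and it is a coequalizer in $\mathsf{EM}(M_0)$ by hypothesis. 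Hence the criterion of \autoref{L:canonical} is met, and $(M_nX, M_{n+1}X)$ is canonical. One small bookkeeping point to check is that the two parallel arrows in \autoref{P:depth1} agree up to the evident identifications with the two arrows appearing in \eqref{D:coeq} instantiated at $A = (M_nX, M_{n+1}X)$: indeed $a^{00} = \mu^{0,0}_{M_nX}$ and the associative law gives $M_1 a^{00} = M_1\mu^{0,0}_{M_nX}$, which by the unit law equals $M_1\mu^{0,n}_X$ on the relevant component after composing with the iso coming from $M_0 M_0 M_n \cong$ (no — more simply, the fork in \eqref{D:coeq} for this algebra is literally the instance at $X$ of the fork in \autoref{P:depth1}, since the structure maps of $(M_nX,M_{n+1}X)$ are the multiplication components). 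Reflexivity of the coequalizer is witnessed by $M_1\eta_{M_nX}$, using the unit laws $\mu^{1,0}\cdot M_1\eta = \id$ and $\mu^{0,n}\cdot \eta M_n \cong \id$ appropriately; this is already implicit in \autoref{L:canonical}'s statement (``reflexive''), so I need not redo it.

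I do not expect any serious obstacle here: the corollary is a direct conjunction of two earlier results, and the only real content is matching up the $M_1$-algebra structure on $(M_nX, M_{n+1}X)$ with the data appearing in \autoref{P:depth1} and \autoref{L:canonical}. The one place that warrants a sentence of care — the ``main obstacle'' such as it is — is verifying that the two parallel morphisms $M_1\mu^{0,n}_X$ and $\mu^{1,0}_{M_nX}$ in the coequalizer fork of \autoref{P:depth1} are the same two morphisms that \autoref{rem:mone}\ref{item:mone} and \autoref{L:canonical} require, i.e.\ that $M_1 a^{00} = M_1\mu^{0,0}_{M_nX}$ coincides with $M_1\mu^{0,n}_X$ under the identification $M_0 M_0 M_n X$ versus the domain used in \autoref{P:depth1}; but the top-row domain in \autoref{P:depth1} is already written as $M_1 M_0 M_n$, matching $M_1 M_0 (M_n X)$, and $a^{00}$ for the algebra $(M_nX, M_{n+1}X)$ is by definition $\mu^{0,n}_X\colon M_0 M_n X \to M_n X$, so the two are literally equal. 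With that remark in place the proof is a one-line appeal to \autoref{L:canonical} and \autoref{P:depth1}.
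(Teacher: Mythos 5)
Your proof is correct and is exactly the paper's argument: the paper offers no proof beyond the phrase ``Combining \autoref{L:canonical} and~\ref{P:depth1}, we have:'', and your proposal simply spells out that combination, correctly identifying the fork~\eqref{D:coeq} for the algebra $(M_nX,M_{n+1}X)$ (whose structure maps are $\mu^{0,n}_X$, $\mu^{1,n}_X$, $\mu^{0,n+1}_X$) with the instance at~$X$ of the coequalizer diagram in \autoref{P:depth1}. The brief mid-sentence wobble about $M_1\mu^{0,0}_{M_nX}$ is harmless, since you immediately correct it to the right identification $a^{00}=\mu^{0,n}_X$.
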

%

%
\section{Behavioural Preorders via Graded Monads}\label{S:gbp}
\noindent We next discuss how graded monads induce behavioural
preorders on systems.
\begin{defn}\label{D:gbp}
  A \emph{graded behavioural preorder} on $G$-coalgebras is a pair
  $(\alpha, \M)$ consisting of a graded monad $\M$ on $\Pos$ and a 
  natural transformation $\alpha\colon G\to M_1$. The sequence
  $(\gamma^{(n)}\colon X\rightarrow M_nX)_{n<\omega}$ of
  \emph{$n$-step $(\alpha, \M)$-behaviours} for a $G$-coalgebra
  $\gamma\colon X\rightarrow GX$ is defined by
\begin{align*}
  \gamma^{(0)} &:= \eta_X,\\
  \gamma^{(n+1)}&:= (X\xrightarrow{\alpha_X\cdot\gamma} M_1X\xrightarrow{M_1\gamma^{(n)}} M_1M_nX\xrightarrow{\mu^{1, n}} M_{n+1}X).
\end{align*}
For states $x\in X$ and $y\in Y$ in $G$-coalgebras $\gamma\colon
X\rightarrow GX$ and $\delta\colon Y\rightarrow GY$, we say that $y$
\emph{$(\M,\alpha)$-refines} $x$ if
\[
  M_n{!}\cdot\gamma^{(n)}(x)\leq
  M_n{!}\cdot\delta^{(n)}(y)\qquad\text{for all $n\in\omega$}.
\]
\end{defn}
\noindent We take a look at the behavioural preorders induced by our
running examples:

\subsubsection{$G$-Similarity}\label{E:gbp}
Recall from \autoref{E:gradedmonad} that taking $M_n= G^n$ defines a
graded monad $\mathbb{G}$. It is easy to
verify that $(\id_G,\M)$-refinement is precisely finite-depth
$G$-similarity in the sense of~\autoref{D:fdsim}. Under our running
assumptions (\autoref{ass:functor}), $(\id_G,\M)$-refinement
thus captures (coalgebraic) $G$-simulation.

For example, taking $G=C_{\omega}(\A\times (-))$, we capture bisimilarity 
on $\A$-LTS. (It may seem surprising that we obtain bisimilarity as a behavioural 
preorder in this case; however, recall from \autoref{E:convex}\ref{item:convex} that the 
ordering on an $C_{\omega}(\A\times-)$-coalgebra is a bisimulation, and also that on 
discrete orders, $C_\omega(\A\times(-))$ is the standard powerset equipped with the discrete 
ordering.)

\subsubsection{Similarity}
As a further case of this type, recall the graded theory
$\JSL^{\down}(\A)$ from \autoref{E:theory}\ref{item:theory-down}.
Since 
it has no depth-$0$ operations, its induced graded monad also has the
form $M_n= G^n$, for the functor
$G=(\pow^{\down}_{\omega}(\A\times(-)))^n$
(see~\autoref{E:induced}\ref{item:induced-convexpower}). %
By taking $\alpha=\id$, we thus obtain a graded behavioural preorder
that corresponds to coalgebraic similarity on coalgebras for
$\pow^{\down}_{\omega}(\A\times(-))$, which is similarity in the
classical sense~\cite{Lev11}.

Further simulation-type behavioural preorders such as ready simulation 
and complete simulation are covered in a straightforward manner by adding components to 
the action set. For instance, ready similarity, which additionally requires states to have the same 
\emph{ready sets}, i.e.~sets of enabled actions, is covered by taking 
$G=\pow^{\down}_\omega(\pow_\omega(\A)\times\A\times (-)+1)$, with 
$\pow_\omega(\A)$ discretely ordered and~$1$ denoting deadlock. The original 
functor $\pow^{\down}_\omega(\A\times(-))$ is then transformed into~$G$ by 
mapping $S$ to $\{(\pi_1[S],a,x)\mid (a,x)\in S\}$ for $S\neq\emptyset$, where~$\pi_1$ 
denotes the first projection, and to the unique element of~$1$ otherwise. This is similar to 
decoration-oriented coalgebraic approaches to trace semantics~\cite{BonchiEA16}. 
\subsubsection{Kleisli and Eilenberg-Moore laws} Recall that Kleisli
laws and Eilenberg Moore laws for a functor~$F$ and a monad~$T$ induce
graded monads of the shape $M_xX=TF^nX$ and $M_nX=F^nTX$, respectively
(\autoref{E:gradedmonad}\ref{item:Kleisli},\ref{item:EM}). On
sets, the arising graded semantics~\cite{MPS15} are known to relate to
existing approaches to coalgebraic trace
semantics~\cite{HasuoEA07,JacobsEA15}, being slightly finer (by
considering all traces rather than only accepting ones). For the
Kleisli case, the extension to $\Pos$ is, in principle, new. It
subsumes typical notions of trace inclusion; one example of this type
is probabilistic trace inclusion, discussed further below.

\subsubsection{Synchronous bisimilarity}\label{sec:synchrony}
Recall the theory $\syncTh$
(\autoref{E:theory}\ref{item:theory-sync}) and the description of the
induced graded monad~$\M$
(\autoref{E:induced}\ref{item:induced-synch}). For the functor
$G=C_\omega(\A\times(-))$, whose coalgebras are $\A$-LTS equipped with
an ordering that is a bisimulation (\autoref{E:convex}), we define a
graded behavioural preorder by the isomorphism
$C_\omega(\A\times(-))\to M_1$ mentioned in
\autoref{E:induced}\ref{item:induced-synch}.  The induced notion of
behavioural preorder is in fact, like in the case of bisimilarity, a
notion of process equivalence that we term \emph{synchronous
  bisimilarity}: Recall that two states are (finite-depth) bisimilar
iff they have the same tree unfoldings at every finite depth, where
the relevant trees have set-based branching in the sense that every
node is identified by its set of children (more precisely the convex
closure of this set, in our present order-theoretic setting). In these
terms, two states are synchronously bisimilar if they have the same
\emph{pruned} tree unfolding at every finite depth, where the pruning
removes all branches of the tree that end in deadlocks before the
current depth is reached. One easily constructs separating examples
showing that synchronous bisimilarity lies strictly between trace
equivalence and bisimilarity.

\subsubsection{Probabilistic Trace Inclusion}\label{S:probtraceinc}

\takeout{
We capture finite probability subdistributions (which are defined like
finite distributions except the global mass is only required to be at
most~$1$ rather than exactly~$1$) by 
the algebraic theory of subconvex
algebras (also known as positive convex modules~\cite{Pumpluen03}),
whose operations are formal \emph{subconvex combinations}
$\sum_{i=1}^np_i\cdot(-)$ for $\sum p_i\le 1$, and whose equations
reflect the laws of (plain) monad algebras: The equation
$\sum \delta_{ik}\cdot x_k=x_i$, with $\delta_{ik}$ being Kronecker
delta ($\delta_{ik}=1$ if $i=k$, and $\delta_{ik}=0$ otherwise),
reflects compatibility with the unit, and the equation
scheme
\[\textstyle
  \sum_{i=1}^n p_i\cdot\sum_{k=1}^m q_{ik}\cdot x
  _k=\sum_{k=1}^m\big(\sum_{i=1}^n p_i q_{ik})\cdot x_k
\]
reflects compatibility with the monad multiplication. We form the
graded algebraic theory $\mathbb{S}$ by further 
imposing inequations of the form
\[\textstyle
  \sum_{i=1}^n p_i\cdot x_i\leq \sum_{i=1}^n
  q_i\cdot x_i\quad (\text{$p_i\leq q_i$ for all $i\leq n$})
\]
and inequations of the form
\[\textstyle
  \{x_i\leq y_i\mid i\leq n\}
  \vdash
  \sum_{i=1}^n p_i\cdot x_i \leq \sum_{i=1}^n p_i\cdot y_i,
\]
specifying monotonicity of formal
subconvex combinations. 
\smnote{The name `monotonicity (condition)' is
  used several times later on (in the appendix), and so it should be
  explained what this refers to.}
 }  
 
\takeout{
Interpreting this algebraic theory
over $\Pos$ yields the monad~$\sdist$ on $\Pos$ which assigns to a
poset $X$ the set of all finite subdistributions on $X$ equipped with
the order that we now describe. A similar description has been
provided by Jones and Plotkin \cite[Lemma 9.2]{JP89} in a more
restricted domain-theoretic setting (directed-complete posets). Their
argument is based on the max-flow-min-cut theorem; we give an
independent, more syntactic argument. We represent finite
subdistributions $\mu\colon X \to [0,1]$ as a formal sums
$\sum_{i=1}^n p_i\cdot x_i$ where $n\in\omega$, $p_i\in(0,1]$, and
$\sum_{i=1}^np_i\leq 1$. 
\begin{defn}\label{d:sdistorder}
\begin{enumerate}
\item A \emph{subdivision} of a formal subconvex combination
  $\sum_{i=1}^n p_i\cdot x_i$ is a formal subconvex combination
  $\sum_{i=1}^n(\sum_{j=1}^{m_i}\overline{p}_j)\cdot x_i$ such that
  $\sum_{j=1}^{m_i}\overline{p}_j= p_i$ for all $i\leq n$. We then
  also say that
  $\sum_{i=1}^n(\sum_{j=1}^{m_i}\overline{p}_j)\cdot x_i$
  \emph{refines} $\sum_{i=1}^n p_i\cdot x_i$.

\item We say that a formal subconvex combination
  $\sum_{i = 1}^n p_i\cdot x_i$ is \emph{obviously below}
  $\sum_{i=1}^m q_i\cdot y_i$ if there is an injective map
  $f\colon [n]\to [m]$ such that $p_i\leq q_{f(i)}$ and
  $x_i\leq y_{f(i)}$ for all $i\leq n$. In this case, we write
  $\sum_{i=1}^n p_i\cdot x_i\sqsubseteq \sum_{i=1}^m q_j\cdot y_i$,
  and say that~$f$ \emph{witnesses} this relation. If in fact
  $p_i=q_{f(i)}$ and $x_i=y_{f(i)}$ for all~$i$, then we say that~$f$
  is \emph{non-increasing}.

\item Finally, we define the relation $\preccurlyeq$ on $\sdist X$ by 
putting $\sum_{i=1}^n p_i\cdot x_i\preccurlyeq\sum_{j=1}^m q_j\cdot y_j$ 
if and only if there exist subdivisions $d_1$ and $d_2$ of $\sum_{i=1}^n p_i\cdot x_i$ 
and $\sum_{j=1}^k q_j\cdot y_j$, respectively, such that $d_1\sqsubseteq d_2$. 
\end{enumerate}
\end{defn}
\begin{theorem}\label{T:sdist}
The monad $\sdist$ given by the above theory 
assigns to a poset $X$ the set of all finitely generated
subdistributions on $X$ equipped with the order $\preccurlyeq$.
\end{theorem}
\noindent
\renewcommand{\subsectionautorefname}{Appendix} 
For a detailed proof, see \autoref{S:app-sdist}.
\renewcommand{\subsectionautorefname}{Section}
}
\takeout{
Coalgebras for the functor 
$\sdist(\A\times(-))$ on $\Pos$ are then ordered probabilistic transition 
systems with possible deadlock.
}
\takeout{
We define a graded monad on $\Pos$ by taking subconvex combinations as
operations of depth~$0$, and the actions as unary operations of
depth~$1$; besides the mentioned inequations of subconvex algebras, we
include equations stating that the actions distribute over subconvex
combinations, i.e.~for $\sum_{i=1}^k p_i\le 1$ and $a\in\A$, we
postulate depth-$1$ equations
\begin{equation*}\textstyle
  a(\sum_{i=1}^k p_i \cdot x_i) = \sum_{i=1}^kp_i\cdot a(x_i).
\end{equation*}
} \takeout{
  The induced graded monad has $M_n = \sdist(\A^n\times (-))$.  }
Recall that coalgebras for the functor $G=\sdist(\A\times(-))$ on
$\Pos$ are ordered probabilistic transition systems with possible
deadlock (\autoref{E:convex}), and moreover recall the graded
monad~$\M$ with \[M_nX=\sdist(\A^n\times(-))\]
(see \autoref{E:induced}\ref{item-gsubconvex}). The graded behavioural
preorder induced by~$\M$ (and the identity $G\to M_1$) is
\emph{probabilistic trace inclusion}: For a state~$x$ in a coalgebra
$\gamma\colon X\to GX$,
$M_n!\cdot\gamma^{(n)}(x)\in\sdist(\A^n\times 1)$ is the distribution
on \mbox{length-$n$} traces induced by~$n$ independently distributed
successive random transition steps.

%
\section{Characteristic Graded Logics for Graded Behavioural Preorders}\label{sec:logics}

\noindent We recall the general setup of graded
logics~\mbox{\cite{MPS15,DMS19}}, which are modal logics with formulae
graded according to the number of transition steps they look ahead.
Their \textbf{syntax} is parameterized over a set $\Theta$ of \emph{truth
  constants}, a set~$\mathcal{O}$ of \emph{propositional operators}
with assigned finite arities, and a set~$\Lambda$ of \emph{modalities}
with assigned finite arities, which we fix for the remainder of the
paper. We will restrict our technical discourse to unary modalities
for readability; the treatment of $n$-ary modalities for $n\geq 1$ is
achieved through additional indexing in a similar manner as we do for
$n$-ary propositional operators.

Then, sets $\mathcal{L}_n$ of \emph{formulae of uniform depth~$n$} are
defined recursively. The set $\mathcal{L}_0$ of graded formulae of
uniform depth~$0$ is generated by the grammar
\[
\varphi::= p(\varphi_1,\dots, \varphi_k)\mid c \quad (k\text{-ary } p\in\mathcal{O}, c\in\Theta)
\]
and the set $\mathcal{L}_{n+1}$ of graded formulae of uniform depth $n+1$ is generated by the grammar 
\[
\varphi::= p(\varphi_1,\dots, \varphi_k)\mid L(\psi)	\quad (k\text{-ary } p\in\mathcal{O}, L\in\Lambda)
\]
where $\psi$ is a formula of uniform depth $n$. Note that
propositional operators may be constant, i.e.\ have arity~$0$; they
then appear in every~$\mathcal{L}_n$, while truth constants appear
only in~$\mathcal{L}_0$. \emph{We assume that all $0$-ary
  propositional operators are also truth constants.}

Besides the functor~$G$, the \textbf{semantics} of a graded logic is
further parameterized by a graded behavioural preorder
$(\alpha, \M=((M_n), \eta, (\mu^{nk})))$ with $\M$ depth-1, along with
the following.
\begin{itemize}
\item An $M_0$-algebra $(\Omega, o\colon M_0\Omega\rightarrow\Omega)$
  of \emph{truth values}. By the standard theory of monad algebras,
  powers $\Omega^k$ of $\Omega$ are again $M_0$-algebras.
\item For each truth constant $c\in\Theta$, a map
  $\hat{c}\colon 1\to\Omega$.
\item For each propositional operator $p\in\mathcal{O}$, with
  arity~$k$, an $M_0$-morphism
  $\llbracket p\rrbracket\colon \Omega^k\rightarrow \Omega$.
\item For each $L\in\Lambda$, an $M_1$-algebra $A_L$ with carriers
  $(\Omega, \Omega)$ and structure maps
  $(o, o, \llbracket L\rrbracket\colon M_1\Omega\rightarrow\Omega)$.
\end{itemize}
\noindent We fix these data for the remainder of the paper. 

\begin{rem}\label{E:tobject}
  Which propositional operators can be used is thus determined
  by~$M_0$; that is, the more depth-$0$ operations there are, the
  fewer propositional operators are allowed. E.g.\ for graded monads
  $M_nX=G^nX$ (which capture $G$-similarity), $M_0$ is identity, so
  every monotone map can be used as a propositional operator. As a
  further very simple example, in the graded monad induced by the
  graded theory $\syncTh$ (\autoref{E:theory}\ref{item:theory-sync}),
  $M_0$-algebras are just partially ordered sets with a distinguished
  element~$0$, which propositional operators need to preserve
  (allowing, e.g., disjunction and conjunction but ruling out $\top$
  as a propositional operator). Further examples will be seen later.
\end{rem}
\noindent Given $L\in\Lambda$, a canonical $M_1$-algebra $A$, and an
$M_0$-morphism $f\colon A_0\rightarrow\Omega$, we write
$\llbracket L\rrbracket(f)$ to denote the unique $M_0$-morphism
extending $f$ to an $M_1$-morphism $A\rightarrow A_L$. That is,
$\llbracket L\rrbracket(f)$ is the unique $M_0$-morphism such that the
square below commutes:
\begin{equation}\label{Di:modality}
\begin{tikzcd}
M_1A_0 \arrow[r, "M_1f"] \arrow[d, "a^{10}"'] & M_1\Omega \arrow[d, "\llbracket L\rrbracket"] \\
A_1 \arrow[r, "\llbracket L\rrbracket(f)"]    & \Omega                                         
\end{tikzcd}
\end{equation}

\begin{defn}\label{D:semantics}
  The \emph{evaluation} $\llbracket\varphi\rrbracket$ of a formula
  $\varphi$ of uniform depth $n$ is defined recursively as an
  $M_0$-morphism
  $\llbracket\varphi\rrbracket\colon (M_n1, \mu^{0n}_1)\rightarrow
  (\Omega, o)$ as follows:
\begin{align*}
 \llbracket c\rrbracket & := (M_01\xrightarrow{M_0\hat{c}}
 M_0\Omega\xrightarrow{o}\Omega); \\
 \llbracket p(\varphi_1,\dots, \varphi_k)\rrbracket& := \llbracket p\rrbracket\langle \llbracket\varphi_1\rrbracket,\dots, \llbracket\varphi_k\rrbracket\rangle; \\
 \llbracket L(\varphi)\rrbracket &:= \llbracket L\rrbracket(\llbracket\varphi\rrbracket)
\end{align*}
(where $\langle\dots\rangle$ denotes tupling of maps).  The
\emph{meaning} of a graded formula $\varphi$ of uniform depth $n$ in a
$G$-coalgebra $\gamma\colon X\rightarrow GX$ is then given by the map
\[
\llbracket\varphi\rrbracket_{\gamma}:=X\xrightarrow{M_n!\cdot\gamma^{(n)}}M_n1\xrightarrow{\llbracket\varphi\rrbracket}\Omega.
\]
\end{defn}
\noindent Here, note that in the clause for
$\llbracket L(\varphi)\rrbracket$,
$\llbracket L\rrbracket(\llbracket\varphi\rrbracket)$ is actually
defined because by \autoref{C:canonical}, $(M_n1, M_{n+1}1)$ is a
canonical $M_1$-algebra. By construction, every graded logic is
preserved by the underlying behavioural preorder:
\begin{lemma}\label{lem:preservation}
  Let $x,y$ be states in $G$-coalgebras $(X,\gamma)$ and $(Y,\delta)$,
  respectively. If $y$ $(\alpha,\M)$-refines $x$, then
  $\llbracket\phi\rrbracket_\gamma(x)\le\llbracket\phi\rrbracket_\delta(y)$
  for every formula~$\phi$.
\end{lemma}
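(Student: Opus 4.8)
The plan is to prove the statement by induction on the uniform depth~$n$ of the formula~$\phi$, establishing the stronger fact that for each formula~$\phi$ of uniform depth~$n$ the square
\[
  \begin{tikzcd}[column sep=30]
    X \arrow[r, "{M_n!\cdot\gamma^{(n)}}"] \arrow[d, "{\id}"']
    & M_n1 \arrow[d, "{\llbracket\phi\rrbracket}"]\\
    ? & \Omega
  \end{tikzcd}
\]
commutes compatibly with the behavioural preorder; more precisely, we show that if $M_k!\cdot\gamma^{(k)}(x)\le M_k!\cdot\delta^{(k)}(y)$ for all~$k\le n$ (which is implied by $(\alpha,\M)$-refinement), then $\llbracket\phi\rrbracket_\gamma(x)\le\llbracket\phi\rrbracket_\delta(y)$. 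Since $\llbracket\phi\rrbracket_\gamma(x)=\llbracket\phi\rrbracket(M_n!\cdot\gamma^{(n)}(x))$ and $\llbracket\phi\rrbracket$ is a \emph{monotone} map (it is an $M_0$-morphism, hence in particular a $\Pos$-morphism), it suffices to show $M_n!\cdot\gamma^{(n)}(x)\le M_n!\cdot\delta^{(n)}(y)$ in~$M_n1$. But this is \emph{precisely} the defining condition of $(\alpha,\M)$-refinement in \autoref{D:gbp}. So the core argument reduces to observing that every evaluation map $\llbracket\phi\rrbracket$ is monotone.

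Concretely, I would argue as follows. By \autoref{D:semantics}, $\llbracket\phi\rrbracket_\gamma = \llbracket\phi\rrbracket\cdot(M_n!\cdot\gamma^{(n)})$, and likewise for~$\delta$. By hypothesis $y$ $(\alpha,\M)$-refines $x$, so by definition $M_n!\cdot\gamma^{(n)}(x)\le M_n!\cdot\delta^{(n)}(y)$ in the poset $M_n1$. It remains to check that $\llbracket\phi\rrbracket\colon M_n1\to\Omega$ is monotone; then applying it to the above inequality yields $\llbracket\phi\rrbracket_\gamma(x)\le\llbracket\phi\rrbracket_\delta(y)$, as desired. Monotonicity of $\llbracket\phi\rrbracket$ is seen by induction on~$n$ and the structure of~$\phi$: truth constants $\llbracket c\rrbracket$ are composites of the monotone maps $M_0\hat c$ and $o$; propositional operators $\llbracket p(\phi_1,\dots,\phi_k)\rrbracket = \llbracket p\rrbracket\langle\llbracket\phi_1\rrbracket,\dots,\llbracket\phi_k\rrbracket\rangle$ are monotone since $\llbracket p\rrbracket$ is an $M_0$-morphism (hence monotone) and tupling of monotone maps is monotone; and for a modality, $\llbracket L(\phi)\rrbracket = \llbracket L\rrbracket(\llbracket\phi\rrbracket)$ is by definition (see \eqref{Di:modality}) an $M_0$-morphism $M_{n+1}1\to\Omega$, hence again monotone. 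This handles all cases of the grammar.

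The only genuinely delicate point is that the clause $\llbracket L(\phi)\rrbracket=\llbracket L\rrbracket(\llbracket\phi\rrbracket)$ is well-defined: it requires that $(M_n1,M_{n+1}1)$ carry a \emph{canonical} $M_1$-algebra structure, so that the unique $M_0$-morphism extension in \eqref{Di:modality} exists. This is exactly \autoref{C:canonical}, which applies because $\M$ is depth-1. Thus the induction does not stumble on definedness, and no further coalgebraic computation is needed — the entire content is (i) unwinding \autoref{D:semantics} to expose the factorization through $M_n!\cdot\gamma^{(n)}$, (ii) the monotonicity of all components feeding into $\llbracket\phi\rrbracket$, and (iii) invoking the definition of $(\alpha,\M)$-refinement. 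I expect the main (very mild) obstacle to be purely bookkeeping: making sure the inductive hypothesis is stated at the level of the maps $\llbracket\phi\rrbracket$ (monotonicity) rather than at the level of the semantics $\llbracket\phi\rrbracket_\gamma$, since the recursion in \autoref{D:semantics} is on the former.
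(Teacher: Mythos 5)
Your proof is correct and is exactly the argument the paper intends: the lemma is stated as holding ``by construction'' with no written proof, the whole content being that $\llbracket\phi\rrbracket_\gamma$ factors as the monotone $M_0$-morphism $\llbracket\phi\rrbracket$ applied after $M_n!\cdot\gamma^{(n)}$, so the defining inequality of $(\alpha,\M)$-refinement is preserved. Your attention to well-definedness of $\llbracket L\rrbracket(\llbracket\phi\rrbracket)$ via \autoref{C:canonical} matches the remark the paper makes immediately after \autoref{D:semantics}.
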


\begin{expl}\label{E:logics}
  We describe characteristic logics for some of the graded behavioural
  preorders of \autoref{S:gbp}.  In the first two examples, we take
  the truth object $\Omega$ to be $\mathbbm{2}$, the linear order
  $\{\bot<\top\}$, and in the final example on probabilistic trace
  inclusion, we take $\Omega = [0, 1]$.
\end{expl}

\paragraph{Coalgebraic Modal Logic on $\Pos$}\label{para:cml} Recall from \autoref{E:gbp} that
the graded monad with $M_n=G^n$ captures coalgebraic similarity on
$G$-coalgebras.  As noted in \autoref{E:tobject}, we can use all
monotone maps $\mathbbm{2}^k\rightarrow\mathbbm{2}$ as propositional
operators, in particular disjunction, conjunction, truth, and falsity.
Furthermore, $M_1$-algebras are given by monotone maps
$a^{10}\colon GA_0\rightarrow A_1$; in particular, modalities are
interpreted as monotone maps $G\mathbbm{2}\rightarrow\mathbbm{2}$.
The evaluation of such maps according to \autoref{D:semantics}
corresponds precisely to the meaning of modalities in coalgebraic
modal logic~\cite{Pattinson04,Schroder08} in its poset
variant~\cite{KKV12}.

\paragraph{Bisimilarity} We can instantiate the previous example to
$G=C_\omega(\A\times(-))$ (cf.\
\autoref{E:convex}\ref{item:convex}). In this case, we can use
modalities $\Diamond_a$ \emph{and} $\Box_a$, interpreted by monotone
(!) maps $G\mathbbm{2}\to\mathbbm{2}$ given by
$\llbracket \Diamond_a\rrbracket(S)=\top$ iff $(a,\top)\in S$ and
$\llbracket \Box_a\rrbracket(S)=\top$ iff $(a,\bot)\notin S$. We can
thus define negation by taking negation normal forms, ending up with
full Hennessy-Milner logic on $\A$-LTS.

\paragraph{Similarity} We can similarly instantiate to
$G=\pow^{\down}_{\omega}(\A\times(-))$. We can then still use
modalities $\Diamond_a$, interpreted analogously as
over~$C_\omega(\A\times(-))$, but no longer~$\Box_a$, whose
interpretation would now fail to be monotone. We thus obtain the
positive fragment of Hennessy-Milner logic on $\A$-LTS, which is well
known to characterize similarity~\cite{Glabbeek90}.  The adaptation of
the logic to further simulation-type preorders such as ready
simulation is achieved by indexing the modalities with ready sets~$I$
like in earlier work on graded process equivalence~\cite{DMS19}:
$\lozenge_{a,I}\phi$ requires, besides satisfaction
of~$\lozenge_a\phi$, that the ready set is exactly~$I$.

\paragraph{Synchronous Bisimilarity} For synchronous bisimilarity
(\autoref{sec:synchrony}), we obtain almost the same logic as for
bisimilarity, except that as noted in Remark~\ref{E:tobject}, we can
no longer include truth~$\top$ as a (constant) propositional operator;
instead, it needs to be a truth constant. Since the latter appear only
at depth~$0$, this restricts the set of formulae; e.g.\
$\Diamond_a\top\land\Diamond_a\Diamond_b\top$ is no longer a formula.

\paragraph{Probabilistic Trace Inclusion}
Recall that probabilistic transition systems (with actions in $\A$) in
$\Pos$ are coalgebras for $\sdist(\A\times(-))$, and probabilistic
trace inclusion is given by the graded monad with
$M_n= \sdist(\A^n\times(-))$. In particular, we have
$M_0\cong \sdist$. We take the ensuing logic
$\mathcal{L}_{\mathsf{Prob}}$ to consist of modalities from
$\Lambda:=\{ \langle a\rangle\mid a\in\A\}$ and a single truth
constant $\top$. As mentioned above, we take $\Omega=[0,1]$ as the
object of truth values, made into a subconvex algebra by taking
expected values (i.e.~the algebra stucture
$o\colon \sdist[0,1] \to [0,1]$ maps formal subconvex combinations to
their respective arithmetic sums in $[0,1]$).  We define
$\hat\top = 1 \in[0, 1]$; using that $M_01 = \sdist 1 = [0,1]$ this
yields that $\llbracket \top\rrbracket$ is the identity map on
$[0,1]$.  \smnote{There was a mistake here saying that
  $\llbracket \top \rrbracket = 1\in [0,1]$; but this does not match
  Def.~VI.4.}  We take the interpretation of the modality
$\langle a\rangle$ to be the $M_1$-algebra on $[0,1]$ with structure
maps given by $o$ and by
\[\textstyle
  \llbracket\langle a\rangle\rrbracket\colon\sdist(\A\times[0,1])\rightarrow[0,1], \quad
  \mu\mapsto \sum_{p\in[0,1]}\mu(a, p) \cdot p. 
\]
The logic could be extended to include $M_0$-morphisms, i.e.\
morphisms of subconvex algebras, as propositional operators,
preserving \autoref{lem:preservation}. We show in the next section
that the weaker logic $\mathcal{L}_{\mathsf{Prob}}$ is already
expressive.

\section{Expressiveness of Graded Logics}\label{sec:expressiveness}

\noindent We now introduce an expressiveness criterion for graded
logics (\autoref{thm:expr}), providing a preordered variant of
previous results on process \emph{equivalences} on $\Set$
coalgebras~\cite{DMS19}. Explicitly, a graded logic $\mathcal{L}$ for
$(\alpha,\mathbb{M})$-refinement is \emph{expressive} if the converse
of \autoref{lem:preservation} holds: Given states $x, y$ in coalgebras
$\gamma\colon X\to GX$ and $\delta\colon Y\to GY$, respectively, if
$\llbracket\varphi\rrbracket_{\gamma}(x)\leq\llbracket\varphi\rrbracket_{\delta}(y)$
for all formulae~$\varphi$, then~$y$ $(\alpha,\M)$-refines~$x$. In
combination with \autoref{lem:preservation}, we then have that theory
inclusion precisely characterizes refinement.

We say that a set $\mathfrak{A}$ of monotone maps $X\rightarrow Y$ is
\emph{jointly order-reflecting} if for all $x,y\in X$ such that
$f(x)\leq f(y)$ for all $f\in \mathfrak{A}$, we have $x\leq y$. Then,
our expressiveness criterion is phrased as follows.

\begin{defn}
We say that $\mathcal{L}$ is
\begin{enumerate}
\item \emph{depth}-0 \emph{separating} if the family 
$(\llbracket c\rrbracket\colon M_01\rightarrow\Omega)_{c\in\Theta}$ 
is jointly order-reflecting;

\item \emph{depth}-1 \emph{separating} if for every canonical $M_1$-algebra~$A$ and every jointly order-reflecting set $\mathfrak{A}$ of $M_0$-homomorphisms 
$A_0\rightarrow\Omega$ that is closed under the propositional operators in $\mathcal{O}$ 
(i.e.~$\llbracket p\rrbracket\langle f_1,\dots, f_k\rangle\in\mathfrak{A}$ if $f_1,\dots, f_k\in\mathfrak{A})$, 
the set
\[
\Lambda(\mathfrak{A}):=\{\llbracket L\rrbracket(f)\colon A_1\rightarrow\Omega\mid L\in\Lambda, f\in\mathfrak{A}\}
\]
is also jointly order-reflecting.
\end{enumerate}
\end{defn}

\begin{theorem}\label{thm:expr}
  Let $\mathcal{L}$ be a depth-0 separating and depth-1 separating
  graded logic.  Then, for every $n\in\omega$, the set of evaluations
  $\llbracket\varphi\rrbracket\colon M_n1\rightarrow\Omega$ of
  formulae~$\phi$ of uniform depth~$n$ is jointly order-reflecting. In
  particular, $\mathcal{L}$ is expressive.
\end{theorem}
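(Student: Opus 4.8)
The plan is to prove the statement by induction on the uniform depth~$n$, showing that the family of evaluations $\llbracket\varphi\rrbracket\colon M_n1\to\Omega$ of formulae of depth~$n$ is jointly order-reflecting; the ``in particular'' clause then follows immediately from the definition of the semantics $\llbracket\varphi\rrbracket_\gamma = \llbracket\varphi\rrbracket\cdot M_n{!}\cdot\gamma^{(n)}$ together with \autoref{D:gbp}: if all formulae agree (in the $\le$ sense) on $x$ and $y$, then for each~$n$ the elements $M_n{!}\cdot\gamma^{(n)}(x)$ and $M_n{!}\cdot\delta^{(n)}(y)$ of $M_n1$ are identified by every $\llbracket\varphi\rrbracket$ with $\varphi$ of depth~$n$, hence are ordered, which is exactly $(\alpha,\M)$-refinement.

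For the induction, the base case $n=0$ is essentially immediate: every formula of depth~$0$ is built from truth constants $c\in\Theta$ and propositional operators, and $\llbracket c\rrbracket\colon M_01\to\Omega$ is jointly order-reflecting by depth-$0$ separation; adding the evaluations of compound propositional formulae only shrinks the set of pairs $(s,t)\in M_01\times M_01$ that fail to be separated, so the family of all depth-$0$ evaluations is still jointly order-reflecting. (One should note here that the set of depth-$0$ evaluations is closed under the propositional operators, since $\llbracket p(\varphi_1,\dots,\varphi_k)\rrbracket = \llbracket p\rrbracket\langle\llbracket\varphi_1\rrbracket,\dots,\llbracket\varphi_k\rrbracket\rangle$.)

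For the inductive step, suppose the set $\mathfrak{A}$ of all evaluations $\llbracket\psi\rrbracket\colon M_n1\to\Omega$ of depth-$n$ formulae is jointly order-reflecting. By \autoref{C:canonical}, $(M_n1,M_{n+1}1)$ is a canonical $M_1$-algebra, so we may apply depth-$1$ separation with $A = (M_n1, M_{n+1}1)$ and this~$\mathfrak{A}$: we first check that $\mathfrak{A}$ consists of $M_0$-homomorphisms $M_n1\to\Omega$ (true by \autoref{D:semantics}, where each $\llbracket\psi\rrbracket$ is defined as an $M_0$-morphism $(M_n1,\mu^{0n}_1)\to(\Omega,o)$) and that it is closed under the propositional operators in~$\mathcal{O}$ (again by the clause $\llbracket p(\varphi_i)\rrbracket = \llbracket p\rrbracket\langle\llbracket\varphi_i\rrbracket\rangle$, noting that $\llbracket p\rrbracket$ is an $M_0$-morphism and hence the composite lands in~$\mathfrak{A}$). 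Depth-$1$ separation then yields that $\Lambda(\mathfrak{A}) = \{\llbracket L\rrbracket(f)\colon M_{n+1}1\to\Omega \mid L\in\Lambda, f\in\mathfrak{A}\}$ is jointly order-reflecting. But $\llbracket L\rrbracket(\llbracket\psi\rrbracket) = \llbracket L(\psi)\rrbracket$ by \autoref{D:semantics}, so $\Lambda(\mathfrak{A})$ is precisely the set of evaluations of formulae of the form $L(\psi)$ with $\psi$ of depth~$n$. Adding the evaluations of the remaining depth-$(n+1)$ formulae, namely those of the form $p(\varphi_1,\dots,\varphi_k)$ with $p\in\mathcal{O}$ and $\varphi_i$ of depth~$n+1$, can only refine the separation (and, as in the base case, keeps the family closed under propositional operators), so the full set of depth-$(n+1)$ evaluations remains jointly order-reflecting, completing the induction.

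The main obstacle is the verification that the hypotheses of depth-$1$ separation are genuinely met at each stage — concretely, that the inductive family $\mathfrak{A}$ really is a set of $M_0$-homomorphisms closed under the propositional operators, and that canonicity of $(M_n1,M_{n+1}1)$ is available so that $\llbracket L\rrbracket(\llbracket\psi\rrbracket)$ is even well-defined; both points hinge on \autoref{C:canonical} and the careful bookkeeping in \autoref{D:semantics}, but no deeper difficulty arises once these are in place. A minor subtlety worth spelling out is that ``jointly order-reflecting'' is monotone in the set of maps (a larger family reflects at least as much order), which is what licenses freely enlarging $\Lambda(\mathfrak{A})$ by the compound propositional formulae without losing the property.
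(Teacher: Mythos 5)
Your proof is correct and follows essentially the same route as the paper's: induction on the uniform depth, with the base case given by depth-0 separation and the inductive step applying depth-1 separation to the canonical $M_1$-algebra $(M_n1,M_{n+1}1)$ and the family $\mathfrak{A}$ of depth-$n$ evaluations, which is closed under propositional operators by construction. Your explicit remarks that joint order-reflection is preserved under enlarging the family (needed to absorb the propositional combinations at each depth) and that canonicity via \autoref{C:canonical} is what makes $\llbracket L\rrbracket(\llbracket\psi\rrbracket)$ well-defined are points the paper leaves implicit, but they do not change the argument.
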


\noindent We will apply our expressiveness criterion to the graded logics introduced 
in \autoref{E:logics}. In most cases, this amounts to showing
depth-1 separation only.

\paragraph{Coalgebraic Modal Logic on $\Pos$}\smnote{Use the same
  headlines as in \autoref{E:logics}; otherwise confusing for the reader.}
Recall that the graded monad $M_n=G^n$ captures (coalgebraic) similarity. As
noted, $M_0$-algebras are just posets and modalities in the described
graded logic are maps $L: G\mathbbm{2}\rightarrow\mathbbm{2}$. Since we 
assume that~$G$ is enriched, it follows from the enriched Yoneda
Lemma that such maps are equivalent to \emph{monotone predicate
  liftings}, which in the present context we understand as
operations~$\lambda$ that lift monotone predicates $X\to\mathbbm{2}$
to monotone predicates $GX\to\mathbbm{2}$, subject to a naturality
condition. In generalization of the discrete notion of
separation~\cite{Schroder08,Pattinson04}, we say that~$\Lambda$,
understood as a set of monotone predicate liftings in this sense, is
\emph{separating} if whenever $t,s\in GX$ are such that
$\lambda(f)(t)\le\lambda(f)(s)$ for all $\lambda\in\Lambda$ and all
monotone $f:X\to\mathbbm{2}$, then $t\le s$; this is precisely the
notion of separation used by Kapulkin et al.~\cite{KKV12}. E.g.\ the
standard diamond modality on~$\pow_{\omega}^{\down} $, i.e.~the
predicate lifting $\Diamond$ given by $\Diamond(S)=\top$ iff
$\top\in S$, is separating. In this terminology, we have
\begin{theorem}\label{thn:coalg-sim-expr}
  If the set $\Lambda$ of monotone predicate liftings for~$G$ is
  separating and~$G$ is finitary and enriched, and preserves
  embeddings and surjections, then the logic $\mathcal{L}$ arising by
  taking $\top$ as the only truth constant and disjunction and
  conjunction as propositional operators is depth-0 separating and
  depth-1 separating, hence expressive for coalgebraic simulation.
\end{theorem}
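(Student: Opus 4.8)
The plan is to verify that this logic $\mathcal{L}$ is depth-$0$ and depth-$1$ separating and then invoke \autoref{thm:expr}; the resulting expressiveness for $(\id_G,\mathbb{G})$-refinement is expressiveness for $G$-simulation by \autoref{E:gbp} and \autoref{ass:functor}. Two preliminary observations streamline the argument. First, since $M_n=G^n$, the monad $M_0$ is the identity on $\Pos$, so $M_0$-algebras are just posets and $M_0$-morphisms just monotone maps; moreover, in \eqref{D:coeq} the parallel pair is $(\mu^{1,0}_{A_0},M_1a^{00})=(\id_{GA_0},\id_{GA_0})$, so by \autoref{L:canonical} a canonical $M_1$-algebra $A$ has $a^{10}\colon GA_0\to A_1$ equal to the coequalizer of a pair of identities, i.e.\ an isomorphism; we henceforth take $A_1=GA_0$ and $a^{10}=\id$. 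Second, since $G$ is enriched, the enriched Yoneda lemma identifies each modality interpretation $\llbracket L\rrbracket\colon G\mathbbm{2}\to\mathbbm{2}$ with a monotone predicate lifting $\lambda$ via $\lambda_X(f)=\llbracket L\rrbracket\cdot Gf$, and with $a^{10}=\id$ the defining square \eqref{Di:modality} gives $\llbracket L\rrbracket(f)=\lambda_{A_0}(f)\colon GA_0\to\mathbbm{2}$.

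Depth-$0$ separation is immediate, as $M_01=1$ is terminal and hence any family of maps out of it is jointly order-reflecting. For depth-$1$ separation, let $A$ be canonical (so $A_1=GA_0$), let $\mathfrak{A}$ be a jointly order-reflecting set of monotone maps $A_0\to\mathbbm{2}$ closed under the propositional operators (which include $\land$, $\lor$ and the two constant maps --- at $M_0=\id$ every monotone map may serve as a propositional operator, cf.\ \autoref{E:tobject}), and let $s,t\in GA_0$ satisfy $\lambda_{A_0}(f)(s)\le\lambda_{A_0}(f)(t)$ for all $L\in\Lambda$ (with lifting $\lambda$) and all $f\in\mathfrak{A}$; we must derive $s\le t$. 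The key step is to pass to a finite subposet. Since $G$ is finitary, it preserves the directed colimit exhibiting $A_0$ as the union of its finite subposets, and since it preserves embeddings this exhibits $GA_0$ as the directed union of the subposets $GD$ for finite $D\hookrightarrow A_0$; hence there is a finite subposet $e\colon D\hookrightarrow A_0$ and unique $s_0,t_0\in GD$ with $s=Ge(s_0)$, $t=Ge(t_0)$, and, $Ge$ being an embedding, $s\le t$ iff $s_0\le t_0$. Naturality of $\lambda$ gives $\lambda_{A_0}(f)\cdot Ge=\lambda_D(f\cdot e)$, so $\lambda_D(f|_D)(s_0)\le\lambda_D(f|_D)(t_0)$ for all $\lambda$ and all $f\in\mathfrak{A}$.

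It remains to see that the family $\mathfrak{A}|_D=\{f|_D\mid f\in\mathfrak{A}\}$ is \emph{all} of $\Pos_0(D,\mathbbm{2})$, for then separation of $\Lambda$ applied to the poset $D$ yields $s_0\le t_0$ and hence $s\le t$. Now $\mathfrak{A}|_D$ is jointly order-reflecting on $D$ (a subposet of $A_0$), closed under $\land,\lor$, and contains the constants $\bot_D,\top_D$. Identifying monotone maps $D\to\mathbbm{2}$ with up-sets of $D$ and using that $D$ (hence the set of its up-sets) is finite: for $d\in D$ the intersection $U_d$ of all members of $\mathfrak{A}|_D$ containing $d$ is again in $\mathfrak{A}|_D$ (a nonempty finite meet, as $\top_D\in\mathfrak{A}|_D$), and joint order-reflection forces $U_d=\mathord{\uparrow}d$; every nonempty up-set is a finite union of such $\mathord{\uparrow}d$, and the empty up-set is $\bot_D$, so $\mathfrak{A}|_D$ contains every up-set. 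This establishes depth-$1$ separation, and \autoref{thm:expr} delivers the claimed expressiveness.

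The main obstacle is exactly the discrepancy between the restricted predicate family $\mathfrak{A}$ on the possibly infinite carrier $A_0$ and the family of \emph{all} monotone predicates invoked in the separation hypothesis on $\Lambda$; this is what forces the detour through a finite subposet, where the finitely many relevant predicates can be recovered from the propositional closure together with the constants. The supporting facts --- that canonical $M_1$-algebras are, up to isomorphism, $\id\colon GA_0\to GA_0$, and that modalities are monotone predicate liftings --- are routine once $M_0=\id$ and enriched Yoneda are in hand; the one point deserving care is the verification that a directed colimit of embeddings in $\Pos$ is a union, so that $s$ and $t$ genuinely originate from one finite stage $GD$.
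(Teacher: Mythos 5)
Your proof is correct and takes essentially the same route as the paper's: depth-0 separation is trivial from $M_01=1$, and depth-1 separation is obtained by using finitarity of~$G$ (plus preservation of embeddings) to reduce to a finite subposet and then observing that a jointly order-reflecting family closed under the propositional operators generates every monotone map into $\mathbbm{2}$ there --- your $\mathord{\uparrow}d$ construction is exactly the paper's ``finite join of finite meets'' claim spelled out. The one caveat, which the paper's own sketch shares, is that this generation step needs $\top_D$ and $\bot_D$ (the empty meet and join) to lie in $\mathfrak{A}|_D$, so ``disjunction and conjunction'' must be read as including their nullary instances.
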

\noindent That is, we obtain the existing expressiveness result for
coalgebraic modal logic on $\Pos$~\cite{KKV12} by instantiation of
\autoref{thm:expr}.

\paragraph{Bisimilarity} It is straightforward to instantiate
\autoref{thn:coalg-sim-expr} to the case where
$G=C_\omega(\A\times(-))$ and
$\Lambda=\{\Box_a,\Diamond_a\mid a\in\A\}$, with semantics as in
\autoref{E:logics}; in this case, we obtain the classical
Hennessy-Milner theorem (slightly generalized to $\A$-LTS on $\Pos$).

\paragraph{Similarity} We can restrict the graded logic for simulation
(\autoref{E:logics}) to a logic $\mathcal{L}_{\mathsf{Sim}}$ featuring
only conjunction, diamonds, and truth. Depth-0 separation is trivial.
\begin{proposition}\label{prop:simulation}
  The logic $\mathcal{L}_{\mathsf{Sim}}$ is depth-1 separating. 
\end{proposition}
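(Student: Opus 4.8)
The plan is to make the requirement of depth-1 separation fully concrete in this instance and then discharge it by a short argument combining joint order-reflection of $\mathfrak{A}$ with closure under conjunction. Since $M_n=G^n$, the functor $M_0$ is the identity, so $M_0$-algebras are just posets and $M_0$-homomorphisms are monotone maps; moreover $\mu^{1,0}=\id$ and each $M_0$-algebra structure map is an identity, so by \autoref{L:canonical} the main structure map $a^{10}\colon GA_0\to A_1$ of a canonical $M_1$-algebra $A$ is the coequalizer of a parallel pair of copies of $\id_{GA_0}$, hence an isomorphism; we may thus assume $A_1=GA_0$ and $a^{10}=\id$. For a monotone $f\colon A_0\to\mathbbm{2}$, Diagram~\eqref{Di:modality} then reduces to $\llbracket\Diamond_a\rrbracket(f)=\llbracket\Diamond_a\rrbracket\cdot Gf\colon GA_0\to\mathbbm{2}$, and unfolding $G=\pow^{\down}_{\omega}(\A\times(-))$ together with $\llbracket\Diamond_a\rrbracket(S)=\top\iff(a,\top)\in S$ gives the pointwise description
\[
  \llbracket\Diamond_a\rrbracket(f)(S)=\top\quad\iff\quad \exists x\in A_0.\ (a,x)\in S\ \text{and}\ f(x)=\top
\]
for $S\in GA_0=\pow^{\down}_{\omega}(\A\times A_0)$, which is ordered by inclusion.

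So let $\mathfrak{A}$ be a jointly order-reflecting set of monotone maps $A_0\to\mathbbm{2}$ closed under conjunction (meet on $\mathbbm{2}$) and containing the constant map $\top$ (the value of the nullary propositional operator $\top$), and suppose $S,T\in GA_0$ satisfy $\llbracket\Diamond_a\rrbracket(f)(S)\leq\llbracket\Diamond_a\rrbracket(f)(T)$ for all $a\in\A$ and $f\in\mathfrak{A}$; I must show $S\subseteq T$. Fix $(a,x)\in S$ and, using that $T$ is finitely generated downwards closed, write $T=\down\{(a_i,y_i)\mid i\leq m\}$ and set $I=\{i\leq m\mid a_i=a\}$. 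Assume for contradiction that $(a,x)\notin T$. Then $x\not\leq y_i$ for every $i\in I$, so by joint order-reflection of $\mathfrak{A}$ there is, for each $i\in I$, some $f_i\in\mathfrak{A}$ with $f_i(x)=\top$ and $f_i(y_i)=\bot$. Put $f:=\bigwedge_{i\in I}f_i$, interpreting the empty meet as the constant map $\top$; then $f\in\mathfrak{A}$ (closure under finite conjunction together with $\top\in\mathfrak{A}$), $f(x)=\top$, and $f(y_i)=\bot$ for all $i\in I$. For any $z$ with $(a,z)\in T$ we have $(a,z)\leq(a_j,y_j)$ for some $j$, hence $a_j=a$, i.e.\ $j\in I$, and $z\leq y_j$, so monotonicity of $f$ gives $f(z)\leq f(y_j)=\bot$. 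By the pointwise description this means $\llbracket\Diamond_a\rrbracket(f)(T)=\bot$, whereas $\llbracket\Diamond_a\rrbracket(f)(S)=\top$ because $(a,x)\in S$ and $f(x)=\top$ -- contradicting the assumption. Hence $(a,x)\in T$; as $(a,x)\in S$ was arbitrary, $S\subseteq T$, so $\Lambda(\mathfrak{A})$ is jointly order-reflecting and $\mathcal{L}_{\mathsf{Sim}}$ is depth-1 separating.

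The argument is elementary: the identification of canonical $M_1$-algebras, depth-0 separation (trivial since $M_01=1$), and the unfolding of the modality semantics are all routine. The one point that needs care -- and the reason the proof is not a single line -- is that $\mathfrak{A}$ is an \emph{arbitrary} jointly order-reflecting family and need not contain the principal up-set indicators of elements of $A_0$, so one cannot directly ``name'' the condition $(a,x)\in T$; instead one must manufacture, from the finitely many generators of $T$ and closure of $\mathfrak{A}$ under finite conjunction, a single map $f\in\mathfrak{A}$ that simultaneously witnesses $(a,x)\in S$ under $\Diamond_a$ and refutes $(a,z)\in T$ for every relevant $z$.
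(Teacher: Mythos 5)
Your proof is correct and follows essentially the same route as the paper's: pick a pair $(a,x)$ witnessing the failure of the order relation, use joint order-reflection of $\mathfrak{A}$ to separate $x$ from each of the finitely many relevant successors on the other side, and conjoin the resulting maps into a single $f\in\mathfrak{A}$ that makes $\Diamond_a f$ distinguish the two elements. The only differences are cosmetic and in your favour: you observe that $a^{10}$ is in fact an isomorphism (the paper uses only surjectivity), and you explicitly cover the empty-conjunction case via $\top\in\mathfrak{A}$, which the paper leaves implicit.
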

\noindent By \autoref{thm:expr}, we recover the well-known result that
$\mathcal{L}_{\mathsf{Sim}}$ is still expressive for simulation,
improving on the corresponding instance of
\autoref{thn:coalg-sim-expr}, where the logic needed to include also
disjunction. The same holds for the characteristic logics for finer
simulation-type equivalences such as ready similarity as introduced
above. 

\paragraph{Synchronous Bisimilarity} Recall the graded logic for
synchronous bisimilarity from \autoref{E:logics}. Again, depth-$0$
separation is trivial. It is easy to show that depth-$1$ separation
holds as well, so we obtain that the logic is expressive by
\autoref{thm:expr}.\lsnote{@Lutz: Add proof to appendix}

\paragraph{Probabilistic Trace Inclusion}%
\takeout{
The application of our expressiveness condition the graded logic
$\mathcal{L}_{\mathsf{Prob}}$ for probabilistic trace inclusion is
completely analogous to the proof given for probabilistic trace
equivalence in the full version of~\cite{DMS19}; indeed, the proof of
depth-1 separation becomes slightly easier since the underlying
predicates preserve subconvex instead of only convex
combinations. Like in the set-based case, having subconvex
combinations as propositional operators in the logic is not necessary
to obtain expressiveness.}%

We will show that the logic $\mathcal{L}_{\mathsf{Prob}}$ is
expressive for probabilistic trace inclusion on probabilistic
transition systems, which we construe as coalgebras for
$\sdist(\A\times -)$.  Recall $\sdist X$ is the poset of finite
subdistributions $\mu\colon X\to [0,1]$, represented as formal sums
$\sum_{i=1}^n p_i\cdot x_i$ where $n\in\omega$, $p_i\in(0,1]$, and
$\sum_{i=1}^np_i\leq 1$; its partial ordering $\preccurlyeq$\smnote{I
  changed this from $\leq$; we should use the symbol from Thm.~V.3.}
puts $\mu\preccurlyeq\nu$ if and only if there exist subdivisions
$s_{\mu}$ and $s_{\nu}$ such that $s_{\mu}$ is obviously below
$s_{\nu}$ (see \autoref{d:sdistorder}).

By \autoref{thm:expr}, it suffices to show that
$\mathcal{L}_{\mathsf{Prob}}$ is depth-0 and depth-1
separating. Depth-0 separation is immediate:
$\mathcal{L}_{\mathsf{Prob}}$ has the truth object $\Omega= [0,1]$ and
a unique truth constant $\top$ with evaluation
$\llbracket\top\rrbracket = \id_{[0,1]}$, which is obviously
order-reflecting.

We will now see that $\mathcal{L}_{\mathsf{Prob}}$ is depth-1
separating, whence expressive, as expected. The following fact
about estimating subdistributions is immediate from the description of
the ordering $\preccurlyeq$ on $\sdist X$ given in the previous
section.

\begin{lemma}\label{l:keyexpressive}
  Let $\mu$ and $\nu$ be subdistributions of the form
  $\mu=\sum_i\sum_j p_{ij}\cdot x_{ij}$ and
  $\nu= \sum_i\sum_j q_{ij}\cdot x_{ij}$ (with ranges of indices left
  implicit) such that
\[ \textstyle
\sum_j p_{ij}\cdot x_{ij}\preccurlyeq \sum_j q_{ij}\cdot x_{ij}
\]
for all $i$. Then $\mu\preccurlyeq\nu$.
\end{lemma}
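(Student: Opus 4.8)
The plan is to unfold the definition of $\preccurlyeq$ separately for each index~$i$ and then amalgamate the resulting witnesses into a single witness for $\mu\preccurlyeq\nu$. First I would fix~$i$. By hypothesis and \autoref{d:sdistorder}, there are subdivisions $d_1^i$ of $\sum_j p_{ij}\cdot x_{ij}$ and $d_2^i$ of $\sum_j q_{ij}\cdot x_{ij}$ with $d_1^i\sqsubseteq d_2^i$. Writing this out, $d_1^i$ refines each summand $p_{ij}\cdot x_{ij}$ into $\sum_k\overline{p}^{\,i}_{jk}\cdot x_{ij}$ with $\sum_k\overline{p}^{\,i}_{jk}=p_{ij}$, likewise $d_2^i=\sum_j\sum_\ell\overline{q}^{\,i}_{j\ell}\cdot x_{ij}$ with $\sum_\ell\overline{q}^{\,i}_{j\ell}=q_{ij}$, and there is an injective map $f_i$ from the index set of~$d_1^i$ into that of~$d_2^i$ witnessing $d_1^i\sqsubseteq d_2^i$, so that $\overline{p}^{\,i}_{jk}\le\overline{q}^{\,i}_{f_i(j,k)}$ and $x_{ij}\le x_{i,\pi_1 f_i(j,k)}$ for all~$(j,k)$, where $\pi_1$ extracts the first component of the index pair.

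Next I would glue these together. Refining every summand $p_{ij}\cdot x_{ij}$ of~$\mu$ according to~$d_1^i$ produces the formal subconvex combination $d_1:=\sum_i\sum_j\sum_k\overline{p}^{\,i}_{jk}\cdot x_{ij}$, which is a subdivision of~$\mu$ precisely because $\sum_k\overline{p}^{\,i}_{jk}=p_{ij}$; analogously $d_2:=\sum_i\sum_j\sum_\ell\overline{q}^{\,i}_{j\ell}\cdot x_{ij}$ is a subdivision of~$\nu$. I would then define~$f$ on the index set of~$d_1$ by $f(i,j,k):=(i,f_i(j,k))$; it lands in the index set of~$d_2$, it is injective since it preserves the outer coordinate~$i$ and each~$f_i$ is injective, and it inherits from the~$f_i$ the required inequalities on both coefficients and elements. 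Hence $d_1\sqsubseteq d_2$ as witnessed by~$f$, and therefore $\mu\preccurlyeq\nu$ by the definition of~$\preccurlyeq$.

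The argument is essentially bookkeeping: the only load-bearing points are that refining~$\mu$ block-by-block along the~$d_1^i$ genuinely yields a subdivision of~$\mu$ (immediate from block-wise mass preservation $\sum_k\overline{p}^{\,i}_{jk}=p_{ij}$) and that the fibrewise union of injective maps over the shared outer index is again injective. Neither presents a real obstacle, so there is no hard step here — the lemma is phrased precisely so that it drops out of the syntactic description of~$\preccurlyeq$ from the previous section.
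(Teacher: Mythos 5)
Your proof is correct and follows exactly the same route as the paper's own (two-sentence) argument: unfold $\preccurlyeq$ index by index, glue the resulting subdivisions into subdivisions of $\mu$ and $\nu$, and observe that the fibrewise union of the witnessing injections is again an injection witnessing $\sqsubseteq$. You have merely made explicit the bookkeeping that the paper summarizes as ``we can combine these''.
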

\begin{proof}
  By the definition of $\preccurlyeq$ (\autoref{d:sdistorder}), the subdistributions
  $\sum_j p_{ij}\cdot x_{ij}$ and $\sum_j q_{ij}\cdot x_{ij}$ have
  subdivisions $d_i$ and $d'_i$, respectively, such that
  $d_i\sqsubseteq d'_i$ for each~$i$. We can combine these into
  subdivision $d'$ and $d$ of~$\mu$ and~$\nu$, respectively, such
  that $d\sqsubseteq d'$, showing $\mu\preccurlyeq\nu$.
\end{proof}
\begin{proposition}
  The logic $\mathcal{L}_{\mathsf{Prob}}$ is depth-1 separating.
\end{proposition}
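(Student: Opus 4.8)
The plan is to unravel what depth-$1$ separation asks for here and then reduce it, via the coequalizer presentation of canonical $M_1$-algebras, to a concrete statement about the subdistribution monad. Recall that $\mathcal{L}_{\mathsf{Prob}}$ has $M_0\cong\sdist$, $M_1\cong\sdist(\A\times(-))$, $\Omega=[0,1]$ (a subconvex algebra under expectation), no non-constant propositional operators (so the closure condition on $\mathfrak{A}$ in the definition of depth-$1$ separation is vacuous; cf.\ \autoref{E:tobject}), and modalities $\langle a\rangle$ interpreted by the $M_1$-algebra $([0,1],[0,1])$ with main structure map $\mu\mapsto\sum_p\mu(a,p)\cdot p$. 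Fix a canonical $M_1$-algebra $A=((A_0,A_1),(a^{00},a^{01},a^{10}))$ and a jointly order-reflecting set $\mathfrak{A}$ of $M_0$-homomorphisms $A_0\to[0,1]$. By \autoref{L:canonical}, $A_1$ is a reflexive coequalizer in the category of $M_0$-algebras of $\mu^{10}_{A_0},\,M_1a^{00}\colon M_1M_0A_0\rightrightarrows M_1A_0$, with projection $a^{10}$; as a coequalizer of a reflexive pair of $M_0$-algebras over $\Pos$ this projection is surjective, so any $t,s\in A_1$ can be written $t=a^{10}(\xi)$, $s=a^{10}(\zeta)$ with $\xi,\zeta\in M_1A_0=\sdist(\A\times A_0)$. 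For $a\in\A$ write $\xi_a\in\sdist A_0$ for the $a$-fibre $x\mapsto\xi(a,x)$.

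Next I would carry out the routine unfolding of $\llbracket\langle a\rangle\rrbracket(f)$. By the defining square~\eqref{Di:modality}, $\llbracket\langle a\rangle\rrbracket(f)\cdot a^{10}=\llbracket\langle a\rangle\rrbracket\cdot M_1f$; plugging in the description of $\llbracket\langle a\rangle\rrbracket$ and of $M_1f=\sdist(\id_\A\times f)$, and using that $f$ is an $M_0$-homomorphism (so it commutes with taking expectations), one gets $\llbracket\langle a\rangle\rrbracket(f)(a^{10}(\xi))=\sum_x\xi(a,x)\,f(x)=f(a^{00}(\xi_a))$. Hence the hypothesis of joint order-reflectingness for $t,s$ amounts to: $f(a^{00}(\xi_a))\le f(a^{00}(\zeta_a))$ for all $a\in\A$ and $f\in\mathfrak{A}$; since $\mathfrak{A}$ is jointly order-reflecting on $A_0$, this yields $a^{00}(\xi_a)\le a^{00}(\zeta_a)$ in $A_0$ for every $a\in\A$. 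What remains is the implication $(\star)$: if $a^{00}(\xi_a)\le a^{00}(\zeta_a)$ for all $a$, then $a^{10}(\xi)\le a^{10}(\zeta)$ in $A_1$.

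To prove $(\star)$ I would first use that $\mathfrak{A}$ embeds $A_0$ into the power $\Omega^{\mathfrak{A}}=[0,1]^{\mathfrak{A}}$, so without loss of generality we may work with $A_0$ as a sub-ordered-subconvex-algebra of a power of $[0,1]$, where the order is pointwise and scaling by a positive scalar is order-reflecting. Then, writing $w_a:=a^{00}(\xi_a)$ and $w'_a:=a^{00}(\zeta_a)$, I would replace $\xi$ and $\zeta$ by representatives $\xi^\circ,\zeta^\circ\in M_1A_0$ whose $a$-fibres $\xi^\circ_a,\zeta^\circ_a$ are subdistributions supported on a common finite set of points of $A_0$ and coefficient-wise comparable ($\xi^\circ_a\sqsubseteq$ a subdivision of $\zeta^\circ_a$), while still satisfying $a^{00}(\xi^\circ_a)=w_a$, $a^{00}(\zeta^\circ_a)=w'_a$; the extra ``room'' in a power of $[0,1]$, together with the mass bound $\sum_a\max_f(w'_a)_f\le\lvert\zeta\rvert\le 1$, is what makes it possible to do this while keeping the total masses at most $1$. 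Since $a^{00}(\xi^\circ_a)=a^{00}(\xi_a)$ for all $a$, suitable instances of the coequalizer relation $\mu^{10}_{A_0}(\Xi)\sim M_1a^{00}(\Xi)$ (with $\Xi$ chosen as single point masses $\delta_{(a,\nu)}$) give $a^{10}(\xi^\circ)=a^{10}(\xi)$, and likewise $a^{10}(\zeta^\circ)=a^{10}(\zeta)$. By construction the $a$-fibres of $\xi^\circ$ and $\zeta^\circ$ satisfy a fibre-wise $\preccurlyeq$-relation, so by \autoref{l:keyexpressive} we get $\xi^\circ\preccurlyeq\zeta^\circ$ in $M_1A_0$ (see \autoref{d:sdistorder}), and monotonicity of $a^{10}$ then yields $a^{10}(\xi^\circ)\le a^{10}(\zeta^\circ)$, i.e.\ $t\le s$, completing the argument.

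The main obstacle is the representative-swapping step: the hypothesis $a^{00}(\xi_a)\le a^{00}(\zeta_a)$ is an inequality in $A_0$ that entangles ``mass'' and ``value'', whereas the order $\preccurlyeq$ on $M_1A_0=\sdist(\A\times A_0)$ wants masses and values to be comparable separately, so one cannot simply push the inequality through $a^{10}$ by monotonicity; the coequalizer relations must be used to move to fibre-wise point-mass-like or matched-support representatives, and the bookkeeping needed to keep all total masses $\le 1$ across all actions is the delicate point. This parallels the treatment of probabilistic trace \emph{equivalence} in the full version of~\cite{DMS19}, where the analogous construction is carried out for equalities rather than inequalities; in our setting one works throughout with the order $\preccurlyeq$ of \autoref{d:sdistorder} and with ``obviously below'' witnesses instead of exact matchings.
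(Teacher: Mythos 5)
Your first half is correct and coincides with the paper's: surjectivity of $a^{10}$ from canonicity (\autoref{P:coeq}), the unravelling $\llbracket\langle a\rangle\rrbracket(f)(a^{10}(\xi))=\sum_x\xi(a,x)f(x)=f(a^{00}(\xi_a))$ via~\eqref{Di:modality} and the fact that $f$ is a morphism of subconvex algebras, and the use of joint order-reflection of $\mathfrak{A}$ to arrive at $a^{00}(\xi_a)\le a^{00}(\zeta_a)$ in $A_0$ for every $a$. You also correctly isolate the remaining implication $(\star)$ and correctly diagnose why it is the crux: the hypothesis is an inequality between barycentres in $A_0$, whereas $\preccurlyeq$ on $\sdist(\A\times A_0)$ compares masses and values separately.

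However, $(\star)$ is exactly where your proposal stops being a proof. The representative-swapping step is only announced (``I would replace $\xi$ and $\zeta$ by representatives $\xi^\circ,\zeta^\circ$\dots''), and you yourself flag the mass bookkeeping as ``the delicate point'' without resolving it. As described, the plan has concrete problems: (i)~the coequalizer relation only identifies $a^{10}(\mu^{10}(\Xi))$ with $a^{10}(M_1a^{00}(\Xi))$ for $\Xi\in\sdist(\A\times\sdist A_0)$ of total mass at most~$1$, so your proposed instances $\Xi=\delta_{(a,\nu)}$ each exhaust the entire mass budget and cannot be combined across several actions; one is forced to use something like $\sum_a|\xi_a|\cdot(a,\hat\xi_a)$ with normalized fibres, which pins any reachable representative to the fibre masses $|\xi_a|$ and fibre barycentres; (ii)~requiring $\xi^\circ_a$ to be coefficient-wise below (a subdivision of) $\zeta^\circ_a$ then forces $|\xi_a|\le|\zeta_a|$, which can fail under the hypothesis (e.g.\ a fibre of mass $1$ at $0.3$ versus a fibre of mass $0.5$ at $0.8$ in $A_0=[0,1]$, with barycentres $0.3\le 0.4$); (iii)~the ``extra room'' of $[0,1]^{\mathfrak{A}}$ is not actually available for choosing representatives, since $a^{10}$ is only defined on $\sdist(\A\times A_0)$ and $A_0$ is merely a subalgebra of that power. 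The paper closes this step by a different route that stays inside the inequational theory: it applies the monotone depth-$1$ operation $a(-)$ to the inequality $\sum_x\mu(a,x)\cdot x\le\sum_x\nu(a,x)\cdot x$ in $A_0$, distributes $a$ over the subconvex combination, reads the resulting derivable inequality off in $M_1A_0$ as $\sum_x\mu(a,x)\cdot(a,x)\preccurlyeq\sum_x\nu(a,x)\cdot(a,x)$, and then combines the fibres across actions with \autoref{l:keyexpressive} to obtain $\mu\preccurlyeq\nu$, whence $a^{10}(\mu)\le a^{10}(\nu)$ by monotonicity. Until you either carry out your construction in full (explicit witnesses, the coequalizer instances used, and the mass accounting) or adopt an argument of this kind, the proposition is not proved.
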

\begin{proof}
  Given a canonical $M_1$-algebra $A$ with main structure map
  $a^{10}\colon \sdist(\A\times A_0)\to A_1$ and a jointly
  order-reflecting family $\FA$ of subconvex algebra morphisms
  $A_0\to[0,1]$ (closure under propositional combinations will not
  actually be needed), we will show that the family
  \[
    \Lambda(\FA)
    =
    \{\llbracket \langle a\rangle \rrbracket(f)\colon A_1\to [0,1] \mid a\in\A, f\in\FA\}
  \]
  is jointly order-reflecting.

  Given $z, z' \in A_1$ such that
  $\llbracket\langle a\rangle\rrbracket(f)(z)\leq \llbracket\langle
  a\rangle\rrbracket(f)(z')$ holds in $\Omega = [0,1]$ for all
  $a\in\A$ and all $f\in\FA$, we are to prove that $z \leq z'$. First,
  note that the map $a^{10}\colon \sdist(\A\times A_0)\to A_1$ is a
  coequalizer by \autoref{P:coeq} since $A$ is canonical, whence
  $a^{10}$ is surjective. Thus, there exist subdistributions
  $\mu, \nu \in \sdist(\A \times A_0)$ with $a^{10}(\mu) = z$ and
  $a^{10}(\nu) = z'$. Since $a^{10}$ is monotone, it suffices to show that
  $\mu \preccurlyeq \nu$ holds. Using the commutativity
  of~\eqref{Di:modality}, we obtain
  \begin{align*}
    \llbracket \langle a\rangle\rrbracket (f)(z)
    &=
    \llbracket \langle a\rangle\rrbracket (f)(a^{10}(\mu))
    \\
    &=
    \llbracket\langle a\rangle\rrbracket(M_1f (\mu))
    \\
    &\textstyle=
    \llbracket\langle a\rangle\rrbracket\big(\sum_{(a,x)\in \A \times A_0}\mu(a, x)\cdot (a,f(x))\big)
    \\
    &\textstyle=
    \sum_{x\in A_0}\mu(a, x)\cdot f(x),
  \end{align*}
  and similarly for $\nu$.
  \takeout{
    Thus every element of $A_1$ is represented
    by a subdistribution on $\A\times A_0$ which we cast as a subconvex
    combination of elements of the shape $a(x)$ where
    $(a,x)\in\A\times A_0$. Now, given subdistributions
    $\mu,\nu\in\sdist(\A\times A_0)$ such that
    $\llbracket\langle a\rangle\rrbracket(f)(\mu)\leq \llbracket\langle
    a\rangle\rrbracket(f)(\nu)$ holds in $A_1$ for all $a\in\A$ and all
    $f\in\FA$, we will show that $\mu\preccurlyeq\nu$.
    Unwinding our definitions we see that\smnote{Since I don't see it,
      so wouldn't a hostile reviewer; let's explicitly demonstrate the
      unwinding!}
  }
  Hence we see that the following inequality holds in
  $[0,1]$\smnote{I strongly propose that we add where each of those
    inequality holds; otherwise the whole reasoning is harder to follow.} for
  every $f\in\FA$ and every $a\in\A$:
  \begin{equation}\label{Eq:assumption}
    \textstyle
    \sum_{x\in A_0}\mu(a, x)\cdot f(x)\leq\sum_{x\in A_0}\nu(a, x)\cdot f(x).
  \end{equation}
  In particular, we have
  \begin{align*}
    &\textstyle\phantom{=}\ f(\sum_{x\in A_0}\mu(a,x)\cdot x) \\
    &\textstyle= \sum_{x\in A_0}\mu(a,x)\cdot f(x) \tag{$f$ a homomorphism} \\
    &\textstyle\leq \sum_{x\in A_0}\nu(a, x)\cdot f(x) \tag{by \eqref{Eq:assumption}} \\
    &\textstyle= f(\sum_{x\in A_0}\nu(a, x)\cdot x) \tag{$f$ a homomorphism}
  \end{align*}
  Since $\FA$ is jointly order-reflecting, it follows that the 
  inequalities below hold in $A_0$ for all for all $a\in\A$:
  \[\textstyle
    \sum_{x\in A_0}\mu(a, x)\cdot x
    \leq
    \sum_{x\in A_0}\nu(a, x)\cdot x.
  \]
  By applying the monotone operation $a(-)$ to both sides of this
  inequality and using that actions distribute over subconvex
  combinations we obtain that for all $a\in\A$ we have the following
  inequalities in $A_1$:
  \[\textstyle
    \sum_{x\in A_0}\mu(a, x)\cdot a(x) \leq \sum_{x\in A_0}\nu(a, x)\cdot a(x).
  \]
  Since these inequalities were derived in the theory, we can
  represent them in $\sdist(\A\times A_0)=M_1A_0$ as follows (using
  surjectivity of $a^{10}\colon \sdist(\A\times A_0)\to A_1$):
  \smnote{I expanded on this step in the proof because I thought it
    was not written in a clear way. Please do not shorten.}
  \[
    \textstyle
    \sum_{x\in A_0}\mu(a, x)\cdot (a,x)
    \preccurlyeq
    \sum_{x\in A_0}\nu(a, x)\cdot (a,x).
  \]
  Hence, by \autoref{l:keyexpressive}, we conclude with the desired 
  \[
    \mu
    =
    \sum_{a\in A}\sum_{x\in A_0}\mu(a, x)\cdot a(x)
    \preccurlyeq
    \sum_{a\in A}\sum_{x\in A_0}\nu(a, x)\cdot a(x)
    =
    \nu.\qedhere
  \]
\end{proof}

%
\section{Conclusion}
\noindent We have introduced a generic framework for behavioural
preorders that combines coalgebraic parametrization over the system
type (nondeterministic, probabilistic, etc.) with a parametrization
over the granularity of system semantics. The latter is afforded by
mapping the type functor into a graded monad on the category $\Pos$ of
partially ordered sets. The framework includes support for designing
characteristic logics; specifically, it allows for identifying, in a
straightforward manner, propositional operations and modalities that
automatically guarantee preservation of formula satisfaction under the
behavioural preorder, and it offers a readily checked criterion for
the converse implication, standardly referred to as expressiveness.
Important topics for future research include the development of
generic minimization and learning algorithms modulo a given graded
semantics, as well as a generic axiomatic treatment of graded
logics.

%
%
%
%
%
\IEEEtriggeratref{25} 
%
%
\bibliographystyle{IEEEtranS}
\bibliography{gbp-via-graded-monads}

\clearpage
\appendix


\subsection{Details for \autoref{S:LoIC}}

\begin{proof}[Proof of \autoref{L:subterms}]
  By induction on derivations of $\Gamma\vdash_k s\leq t$; we restrict
  our attention to the inductive step corresponding to derivations of
  $\Gamma\vdash_k s\leq t$ in which the last rule applied was
  $(\mathsf{Ax1})$. To this end, we inductively assume that whenever
  $\Gamma'\vdash_{k'} s'\leq t'$ appears in the premise of the last rule
  applied in the derivation of $\Gamma\vdash_k s\leq t$ and 
  $\sigma'(f')\in\sub(s', t')$ where 
  $f'\colon\arity(\sigma')\to\mathsf{T}_{\Sigma, n'}(\Gamma')$, 
  then $\Gamma'\vdash_{n'} f'(i)\leq f'(j)$ is derivable for all 
  $i\leq j$ in $\arity(\sigma')$.

  Now, given a uniform substitution
  $\gamma\colon |\Delta| \to \mathsf{T}_{\Sigma,\ell}(\Gamma)$ 
  and an axiom $\Delta\vdash \overline{s}\leq\overline{t}$ such 
  that $s=\bar\gamma(\overline{s})$ and $t=\bar \gamma(\overline{t})$, 
  we verify that $\Gamma\vdash_n t_i\leq t_j$ for all $i\leq j$ in
  $\arity(\sigma)$. 
  We make a case distinction on the following basis:
  \begin{enumerate}
  \item $\sigma(f)=\gamma(x)$ for some $x\in\Delta$ or 
  \item\label{L:subterms:2} $\sigma(f)=\bar \gamma(\sigma(g))$ for some $g\colon|\arity(\sigma)|\to
    \mathsf{T}_{\Sigma, m}(\Delta)$ such that $m+\ell=n$ and 
    $\sigma(g)\in\sub(\bar s, \bar t)$. 
  \end{enumerate}
  Given $x\in\Delta$ such that $\sigma(f)=\gamma(x)$, note that
  $\Delta\vdash_0 x\leq x$ is derivable by application of
  $(\mathsf{Var})$ since the underlying order of $\Delta$ is
  reflexive. In particular, $\Gamma\vdash_k \gamma(x)\leq \gamma(x)$
  appears in the premise of the last rule applied in the derivation of 
  $\Gamma\vdash_k s\leq t$. Hence also $\Gamma\vdash_n f(i)\leq f(j)$ is 
  derivable for all $i\leq j$ in $\arity(\sigma)$ by induction.

  Otherwise, \ref{L:subterms:2} holds. Then, for the uniform substitution
  $\gamma\colon |\Delta| \to \mathsf{T}_{\Sigma,\ell}(\Gamma)$,
  the axiom $\Delta\vdash \overline{s} \leq \overline{t}$, and 
  $\sigma(g)\in\sub(\overline{s},\overline{t})$, the condition $(**)$
  of $(\mathsf{Ax2})$ applies. Thus, for all $i\leq j$ in $\arity(\sigma)$,
  $\Gamma\vdash_{n}\bar\gamma (g(i))\leq \bar\gamma (g(j))$
  is derivable via $(\mathsf{Ax2})$. As 
  $\sigma(f)=\bar\gamma(\sigma(g)) =\sigma(\bar\gamma\cdot g),$ we
  see that $f=\bar\gamma\cdot g$. Thus $\Gamma\vdash_n f(i)\leq f(j)$ 
  is derivable for all $i\leq j$ in $\arity(\sigma)$, as desired.
\end{proof}

\begin{proof}[Proof of \autoref{P:subterms}]
We distinguish cases on the basis of whether $u\in\sub(s, t)$ is a variable
in context $\Gamma$ or $u=\sigma(g)$ for some operation $\sigma$ 
and some $g\colon|\arity(\sigma)|\to\mathsf{T}_{\Sigma, n}(\Gamma)$. If the 
former holds, then $\Gamma\vdash_0 \down u$ is derivable via an application 
of $(\mathsf{Var})$ since $\Gamma$ is reflexive. If the latter holds, then 
$\Gamma\vdash_n g(i)\leq g(j)$ is derivable for all $i\leq j$ in $\arity(\sigma)$
by~\autoref{L:subterms}. In particular, 
$\Gamma\vdash_{n+d(\sigma)} \down \sigma(g)$ is derivable via 
$(\mathsf{Ar})$, as desired.
\end{proof}

\subsection{Details for \autoref{S:freemodel}}

\begin{lemma}[Subsitution Lemma]\label{L:subst}
  Let $A$ be a $(\Sigma,n)$-algebra, $\Gamma$ a context and
  $\iota\colon \Gamma \to A_m$ monotone. For every monotone map
  $\gamma\colon \Delta \to \termst k (\Gamma)$ and $t$
  in $\termst \ell (\Delta)$ we have
  \[
    \iota^\#_{\ell+k}(\ol\gamma(t)) = (\iota^\#_k \cdot \gamma)^\#_\ell(t)
  \]
  (in particular, both sides are defined). That is, the following
  diagram commutes:
  \begin{equation*}
    \begin{tikzcd}[column sep = 35]
      \termst \ell(\Delta)
      \ar{r}{(\ell^{\#}\cdot\gamma)^{\#}_{\ell}}
      \ar{d}[swap]{\bar \gamma}
      &
      A_{m+k+\ell}
      \\
      \termst {\ell+k}(\Gamma)
      \ar{ru}[swap]{\ell^{\#}_{\ell+k}}  
    \end{tikzcd}
  \end{equation*}
\end{lemma}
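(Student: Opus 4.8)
The plan is to prove the identity by structural induction on $t\in\termst{\ell}(\Delta)$, carrying along the assertion that both sides are defined. The crucial preliminary fact is that, since $A$ is a $(\T,n)$-model, it satisfies every derivable inequation in context; in particular, for any context $\Theta$ and any monotone $\kappa\colon\Theta\to A_r$ with $r+p\le n$, the evaluation map $\kappa^\#_p$ restricts to a \emph{total}, \emph{monotone} map $\termst{p}(\Theta)\to A_{r+p}$: total because $\Theta\vdash_p\down u$ is derivable for every $u\in\termst{p}(\Theta)$, and monotone because the ordering on $\termst{p}(\Theta)$ is derivable inequality. Instantiating this with $\Theta=\Gamma$, $\kappa=\iota$, $p=k$ and postcomposing the monotone map $\gamma\colon\Delta\to\termst{k}(\Gamma)$, we obtain that $\iota^\#_k\cdot\gamma\colon\Delta\to A_{m+k}$ is a legitimate monotone assignment, so that $(\iota^\#_k\cdot\gamma)^\#_\ell$ is a well-defined family of partial maps, total on $\termst{\ell}(\Delta)$. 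Thus the right-hand side is defined, and it remains to show that the left-hand side is defined and equal to it.

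For the base case $t=x\in\Delta$ (hence $\ell=0$) both sides reduce, via $\ol\gamma(x)=\gamma(x)\in\termst{k}(\Gamma)$, to $\iota^\#_k(\gamma(x))$. For the induction step let $t=\sigma(f)$ with $\sigma\in\Sigma(\ari,d)$ and $f\colon|\ari|\to\mathsf{T}_{\Sigma,q}(\Delta)$ where $q+d=\ell$. Since $\sigma(f)\in\termst{\ell}(\Delta)$, \autoref{C:arities} gives that $\Delta\vdash_q f(i)\le f(j)$ is derivable for all $i\le j$ in $\ari$; in particular each $f(i)$ lies in $\termst{q}(\Delta)$, so the induction hypothesis applies to it and yields
\[
  \iota^\#_{q+k}\bigl(\ol\gamma(f(i))\bigr)=(\iota^\#_k\cdot\gamma)^\#_q(f(i))
\]
with both sides defined, for each $i\in|\ari|$. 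By the first paragraph $(\iota^\#_k\cdot\gamma)^\#_\ell(\sigma(f))$ is defined, so by the defining clause of the evaluation map the assignment $i\mapsto(\iota^\#_k\cdot\gamma)^\#_q(f(i))$ is monotone $\ari\to A_{m+k+q}$ and $(\iota^\#_k\cdot\gamma)^\#_\ell(\sigma(f))$ equals $\sigma^A$ applied to it. By the displayed equality this assignment coincides with $i\mapsto\iota^\#_{q+k}(\ol\gamma(f(i)))$, which is precisely the data entering the recursive evaluation of $\ol\gamma(\sigma(f))=\sigma(\ol\gamma\cdot f)$ at depth $(q+k)+d$; hence $\iota^\#_{\ell+k}(\ol\gamma(\sigma(f)))$ is defined and likewise equals $\sigma^A$ applied to that assignment. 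Since $+$ on grades is associative, $(m+k)+q=m+(q+k)=m+q+k$, so the two values agree and the step is complete.

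The calculation itself is merely an unwinding of the recursive definitions of $\ol\gamma$ and of the evaluation map, plus associativity of $+$, so I expect no difficulty there. The one point that genuinely needs care is the interaction with partiality: one must check at each step that the evaluations appearing are defined, and the heart of the matter is the preliminary observation together with \autoref{C:arities} --- $\T$-definedness of a compound term $\sigma(f)$ forces exactly the order-compatibility of the substituted arguments that the partial structure map $\sigma^A$ demands. I expect this to be the main (indeed essentially the only) obstacle; everything else is bookkeeping with grades.
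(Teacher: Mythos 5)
Your inductive computation --- base case for variables, inductive step unfolding the definitions of $\ol\gamma$ and of the evaluation map and invoking the induction hypothesis componentwise, with \autoref{C:arities} supplying the order-compatibility of the arguments $f(i)$ --- is exactly the paper's argument, and that part is fine. The problem lies in how you discharge definedness. You justify totality and monotonicity of $\iota^\#_k$ on $\termst{k}(\Gamma)$ (and of $(\iota^\#_k\cdot\gamma)^\#_\ell$ on $\termst{\ell}(\Delta)$) by asserting that a $(\T,n)$-model ``satisfies every derivable inequation in context''. That assertion is not part of the definition of a model, which is only required to satisfy the \emph{axioms}; it is precisely the Soundness Theorem (\autoref{T:sound}), and in the paper's development \autoref{T:sound} is proved \emph{by means of} the Substitution Lemma --- the $(\mathsf{Ax1})$ and $(\mathsf{Ax2})$ cases of the induction on derivations are exactly where \autoref{L:subst} gets applied. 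So, as positioned in the paper, your argument is circular. The circularity is not cosmetic: it already enters in the base case (definedness of $\iota^\#_k(\gamma(x))$ for $\gamma(x)\in\termst{k}(\Gamma)$ is not covered by any induction hypothesis on $t$, since $\gamma(x)$ is not a subterm of $t$) and again in the inductive step (monotonicity of $(\iota^\#_k\cdot\gamma)^\#_q$ with respect to derivable inequality on the $f(i)$).

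You have also silently strengthened the hypothesis from ``$(\Sigma,n)$-algebra'' to ``$(\T,n)$-model''. To be fair, the unconditional definedness claim really does need something of that kind --- for an arbitrary $(\Sigma,n)$-algebra violating an axiom one can produce a $\T$-defined term whose evaluation is undefined --- and the paper's own one-line appeal to \autoref{C:arities} and \autoref{P:subterms} is too quick on this point. But the lemma is only ever applied inside the induction on derivations in \autoref{T:sound}, where the required definedness and monotonicity facts are supplied by the \emph{outer} induction hypothesis rather than by soundness. The clean repair of your proof is therefore either to weaken the conclusion to a conditional (Kleene-style) equality --- whenever the relevant inner evaluations are defined and order-compatible, both sides are defined and equal --- or to fold the definedness claims into the simultaneous induction on derivations carried out in the proof of \autoref{T:sound}, instead of importing them from soundness up front.
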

\begin{proof}
  That both sides of the desired equation are defined follows from
  \autoref{C:arities} and \autoref{P:subterms}. We proceed by
  induction on terms. For $t = x \in \Delta$ we have 
    \[
      \iota^\#_\ell(\ol\gamma(x))
      =
      \iota^\#_\ell(\gamma(x)) 
      =
      (\iota^\#_\ell  \cdot \gamma)^\#_\ell(x).
    \]
    For $t = \sigma(f)$ for an operation $\sigma$ and a monotone
    $f\colon \arity(\sigma) \to \terms_{\T,p}(X)$ with $\ell =
    p+d(\sigma)$ we have
    \begin{align*}
      \iota^\#_{\ell+k}(\ol\gamma(t))
      &= \iota^\#_{\ell+k}(\ol\gamma(\sigma(f)))
      & \text{since $t= \sigma(f)$}\\
      &= \iota^\#_{\ell+k} (\sigma(\ol\gamma \cdot f))
      & \text{def.~of  $\ol{\gamma}$} \\
      &= \sigma^A_{k+p}\big(\iota^\#_{k+d(\sigma)}(\ol\gamma \cdot f)\big)
      & \text{def.~of $\iota^\#$} \\
      &= \sigma^A_{k+p}\big((\iota^\#_k \cdot \gamma)^\#_p \cdot f\big)
      & \text{by induction} \\
      &= (\iota^\#_k \cdot \gamma)^\#_\ell\big(\sigma(f)\big)
      & \text{def.~of $\iota^\#$}\tag*{\qedhere}
    \end{align*}    
\end{proof}
\begin{proof}[Proof of \autoref{T:sound}]
  Let $A$ be any $(\T,n)$-model.
  \begin{enumerate}
  \item Soundness follows from the fact, proved in
    item~\ref{T:sound:2} below that for every monotone
    $\iota\colon X \to A_m$ the components
    $\iota^\#_p\colon \mathsf{T}_{\Sigma,p}(X) \to A_{m+p}$ of the
    evaluation map restrict to total monotone maps
    $\terms_{\T,p} \to A_{m+p}$. Indeed, suppose that the inequation
    $X \vdash_p t_1 \leq t_2$ is derivable. By \autoref{P:subterms} we
    know that $t_1, t_2$ lie in $\terms_{\T,p}(X)$ and satisfy
    $t_1 \leq t_2$ there. Then for every monotone
    $\iota\colon X \to A_m$ both $\iota^\#_k(s)$ and $\iota^\#_k(t)$
    are defined and we have $\iota^\#_k(s) \leq \iota^\#_k(t)$. Thus
    $A$ satisfies the given inequation in context.
    
  \item\label{T:sound:2} Given a monotone map $\iota\colon X \to A_m$,
    we prove that the components
    $\iota^\#_p\colon \mathsf{T}_{\Sigma,p}(X) \to A_{m+p}$ of the
    evaluation map restrict to total monotone maps
    $\terms_{\T,p} \to A_{m+p}$. Suppose that $t_1 \leq t_2$ holds in
    $\termst p(X)$; that is $X \vdash_p t_1 \leq t_2$ is derivable. We
    prove by induction on its derivation that $\iota^\#_p(t_1)$ and
    $\iota^\#_p(t_2)$ are both defined and that
    $\iota^\#_p(t_1) \leq \iota^\#_p(t_2)$.

    We proceed by case distinction on the last rule applied in the
    derivation. The cases of the $(\mathsf{Var})$, $(\mathsf{Ar})$,
    $(\mathsf{Trans})$ and $(\mathsf{Mon})$ rules are easy using
    monotonicity of the operations $\sigma^A_m$ and the respective
    induction hypotheses as well as transitivity of the order on $A_p$
    in the last case.

    For the $(\mathsf{Ax1})$ rule let $\Delta \vdash_n s \leq t$ be an
    axiom of $\T$ and $\gamma\colon |\Delta| \to \mathsf{T}_{\Sigma,k}(X)$ a uniform
    substitution so that $p = n+k$, $t_1 = \ol\gamma(s)$ and
    $t_2 = \ol\gamma(t)$. The assumption of the rule states that
    $\Gamma \vdash_k \gamma(x) \leq \gamma(y)$ for all $x \leq y$ in
    $\Delta$, which implies that $\gamma$ restricts to a monotone map
    $\Delta \to \terms_{\T,k}(X)$. The induction hypothesis
    states that for every $x\leq y$ in $\Delta$ we have
    $\iota^\#_k(\gamma(x)) \leq \iota^\#_k (\gamma(y))$. We conclude
    by the following computation in which all terms are equidefined: 
    \begin{align*}
      \iota^\#_p(t_1)
      &= \iota^\#_p(\ol\gamma(s)) &
      \text{since $t_1 = \ol\gamma(s)$}\\
      &= (\iota^\#_k \cdot \gamma)^\#_m(s)
      & \text{by~\autoref{L:subst}} \\
      &\leq (\iota^\#_k\cdot \gamma)^\#_m(t)
      & \text{$A$ satisfies $\Delta \vdash_n s \leq t$} \\
      &=\iota^\#_p(\ol\gamma(s))
      & \text{by~\autoref{L:subst}} \\
      &= \iota^\#_p(t_2) & \text{since $\ol\gamma(t) = t_2$}.
    \end{align*}
    
    For the $(\mathsf{Ax2})$, we argue similarly. Let
    $\gamma\colon |\Delta| \to\mathsf{T}_{\Sigma,k}(X)$ be a uniform
    substitution and $\Delta\vdash_n s\leq t$ be an axiom such that
    there exist an operation symbol $\sigma$ in $\Sigma$ and a map
    $f\colon|\arity(\sigma)|\to \mathsf{T}_{\Sigma, m}(\Delta)$ such
    that $\sigma(f)\in\sub(s, t)$ and, for some $i\leq j$ in
    $\arity(\sigma)$, we have $u=f(i)$ and $v=f(j)$. Then
    $p = m+k$, $t_1 = \ol\gamma(u)$ and $t_2 = \ol\gamma(v)$. Again,
    the assumption of the rule states that $\gamma$ restricts to a
    monotone map $\Delta \to \terms_{\T, k}(X)$. The induction
    hypothesis states that for every $x \leq y$ in $\Delta$ we have
    $\iota^\#_k(\gamma(x)) \leq \iota^\#_k(\gamma(y))$. Since $A$
    satisfies $\Delta \vdash_n s \leq t$ we know that for every
    monotone $h\colon \Delta \to A_k$ we have that $h^\#_n(s)$
    and $h^\#_n(t)$ are defined (and
    $h^\#_n(s) \leq h^\#_n(t)$). Unravelling the definition of
    $h^\#$ we obtain that $h^\#_{m+d(\sigma)}(\sigma(f))$ must
    be defined, which in turn implies that $h^\#_m(u)$ and
    $h^\#_m(v)$ are defined and
    \begin{equation}\label{eq:iota-uv}
      h^\#_m(u) \leq h^\#_m(v)
    \end{equation}
    Again we conclude by a computation in which all terms are
    equidefined: 
    \begin{align*}
      \iota^\#_p(t_1
      &= \iota^\#_p(\ol\gamma(u)) &
      \text{since $t_1 = \ol\gamma(u)$}\\
      &= (\iota^\#_k \cdot \gamma)^\#_m(u)
      & \text{by~\autoref{L:subst}} \\
      &\leq (\iota^\#_k\cdot \gamma)^\#_m(v)
      & \text{by~\eqref{eq:iota-uv} for $h = \iota^\#_k \cdot \gamma$} \\
      &=\iota^\#_p(\ol\gamma(v))
      & \text{by~\autoref{L:subst}} \\
      &= \iota^\#_p(t_2) & \text{since $\ol\gamma(v) = t_2$}.\tag*{\qedhere}
    \end{align*}
  \end{enumerate}
\end{proof}

\begin{nota}
  \begin{enumerate}
  \item We denote the equivalence class $[t]_k$
    by $[t]$ (i.e.~we omit the subscript) whenever confusion is unlikely.
  \item We write $\Gamma \vdash_k s = t$ as a shorthand notation for
    derivable equality (i.e.~the conjunction of $\Gamma \vdash_k s
    \leq t$ and $\Gamma \vdash_k t \leq s$).
  \end{enumerate}
\end{nota}
\begin{proof}[Proof of~\autoref{L:welldefined}]
  \begin{enumerate}
  \item\label{L:welldefined:1} We first show that $\sigma^{FX}_k(f)$ is defined for all
    $f\in\Pos(\ari, (FX)_k)$. For every $i \leq j$ in $\ari$ we have
    $f(i) \leq f(j)$ in $(FX)_k$, which implies that
    $X\vdash_k u_k(f(i))\leq u_k(f(j))$ is derivable. This fact is
    independent of the choice of $u_k$. For if $s \sim u_k(f(i))$ and
    $t \sim u_k(f(j))$, then we have $X \vdash_k s \leq t$ by two
    applications of $(\mathsf{Trans})$. Hence
    $X\vdash_{k+d(\sigma)}\down \sigma(u_k \cdot f)$ is
    derivable via $(\mathsf{Ar})$.

  \item We will now show that $[\sigma(u_k \cdot f)]$ is independent
    of the choice of $u_k$. So suppose that $v_k$ is another splitting
    of $q_k$, then we have $X\vdash_k u_k(f(i)) = v_k(f(i))$ for all
    $i \in \ari$. By two applications of $\mathsf{(Mon)}$ we obtain
    that
    $X\vdash_{k+d(\sigma)} \sigma(u_k\cdot f) = \sigma(v_k\cdot f)$,
    which implies $[\sigma(u_k\cdot f)] = [\sigma(v_k \cdot f)]$ as
    desired.

  \item Finally, we prove that $\sigma^{FX}_k$ is monotone for every
    $k \in \omega$. Given $f, g \in \Pos(\ari, (FX)_k)$ with $f \leq
    g$. By item~\ref{L:welldefined:1} we have
    \[
      X \vdash_k \down \sigma(u_k\cdot f)
      \quad\text{and}\quad
      X \vdash_k \down \sigma(u_k \cdot g),
    \]
    and similarly we have that for all $i \in \ari$, $f(i) \leq g(i)$
    implies that $X\vdash_k u_k\cdot f(i) \leq u_k\cdot g(i)$ is
    derivable. Thus, an application of $(\mathsf{Mon})$ yields
    $X\vdash_{k+d(\sigma)} \sigma(u_k\cdot f) \leq \sigma(u_k\cdot g)$, which
    implies $[\sigma(u_k \cdot f)] \leq [\sigma(u_k \cdot
    g)]$. \qedhere
  \end{enumerate}    
  \takeout{
  We will now show that $\sigma^{FX}_k$ is well
  defined w.r.t.~equivalence classes. That is, given
  $f,g\in\Pos(\ari, (FX)_k)$ such that $f(i)=g(i)$ for all $i\in\ari$,
  we verify that $\sigma^{FX}_k(f)=\sigma^{FX}_k(g)$.  To this end,
  observe that $f(i)=g(i)$ implies that $X\vdash_{k} f(i)\leq g(i)$ is
  derivable for all $i\leq j$. Moreover,
  $X\vdash_{k+d(\sigma)}\down \sigma(u_k\cdot f)$ and
  $X\vdash_{k+d(\sigma)}\down \sigma(u_k\cdot g)$ are derivable
  by the argument above. Hence,
  $X\vdash_{k+d(\sigma)}\sigma(u_k\cdot f)\leq\sigma(u_k\cdot g)$ is
  derivable via $(\mathsf{Mon})$. Similarly,
  $X\vdash_{k+d(\sigma)}\sigma(u+k\cdot f)\leq\sigma(u_k\cdot g)$ is
  derivable. Hence
  \[
    \sigma^{FX}_{k}(f)=[\sigma(u_k\cdot f)] =[\sigma(u_k\cdot g)] = \sigma^{FX}_k(g),
  \]
  as desired.}
\end{proof}

\takeout{
\begin{defn}
  Let $n \leq \omega$, $\T$ a graded theory, and $A$ a
  $(\T,n)$-model. For every monotone map $h\colon  \to A_k$ its
  \emph{extension to depth-$k$ defined terms} is the following family
  of maps $\ext h_p\colon \terms_{\T,p}(X) \to A_{k+p}$ defined
  recursively as follows:
  \begin{itemize}
  \item $\ext h_k(x) = h(x)$ for every $x \in X$, and
  \item $\ext h_p(\sigma(f))  = \sigma^A_m(\ext h_{k+n}\cdot f)$ for
    every $\sigma \in \Sigma(\ari,m)$ and every monotone map $f\colon
    P \to \terms_{\T,k+n}(X)$.
  \end{itemize}
\end{defn}

\begin{lemma}
  The maps $\ext h_p\colon \terms_{\T,p}(X) \to A_{k+p}$ are
  monotone.
\end{lemma}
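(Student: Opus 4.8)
The plan is to prove the strengthened statement that for every $p$ the map $\ext h_p$ is everywhere defined on $\terms_{\T,p}(X)$ \emph{and} monotone, i.e.\ that $X\vdash_p s\le t$ implies $\ext h_p(s)\le\ext h_p(t)$ in $A_{k+p}$. Definedness must be carried along because the recursive clause $\ext h_p(\sigma(f))=\sigma^A(\ext h\cdot f)$ only makes sense once one knows that $\ext h\cdot f$ is a \emph{monotone} map on $\arity(\sigma)$ into the relevant carrier of $A$; and that monotonicity is exactly the lower-depth instance of the claim, applied to the inequalities $f(i)\le f(j)$ for $i\le j$ in $\arity(\sigma)$, which hold in the pertinent $\terms_{\T,q}(X)$ by \autoref{L:subterms} (equivalently \autoref{C:arities}) together with the Subterm Rule \autoref{P:subterms}.

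The quickest route is then to observe that, on $\T$-defined terms, $\ext h$ coincides with the evaluation map $h^{\#}$ for the monotone assignment $h\colon X\to A_k$ (i.e.\ the instance $\iota=h$, $m=k$ of the evaluation map for $(\Sigma,n)$-algebras): the defining recursion for $\ext h$ is literally the recursion defining $\iota^{\#}$, and all the partiality side conditions occurring in the latter are automatically met on $\terms_{\T,p}(X)$ by \autoref{P:subterms} and \autoref{C:arities}. The statement is then exactly item~(2) in the proof of \autoref{T:sound}, which shows that for monotone $\iota\colon X\to A_m$ the components $\iota^{\#}_p$ restrict to total monotone maps $\terms_{\T,p}(X)\to A_{m+p}$; specialising $\iota=h$ and $m=k$ yields that $\ext h_p\colon\terms_{\T,p}(X)\to A_{k+p}$ is total and monotone.

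If a self-contained argument is preferred, one runs the induction on derivations of $X\vdash_p s\le t$ and splits on the last rule applied: $(\mathsf{Var})$ uses monotonicity of $h$; $(\mathsf{Ar})$ and $(\mathsf{Mon})$ use monotonicity of the algebra operations $\sigma^A_j$ together with the induction hypothesis, the side condition $(+)$ supplying definedness; $(\mathsf{Trans})$ uses transitivity of the orders on the $A_j$; and $(\mathsf{Ax1})$, $(\mathsf{Ax2})$ use that $A$ is a $(\T,n)$-model -- hence satisfies every axiom $\Delta\vdash_n\bar s\le\bar t$ -- together with the Substitution Lemma \autoref{L:subst}, which rewrites the $\ext h$-value of a substituted axiom term as the value of the axiom term under the induced monotone assignment into $A$, so that the required inequality falls out of satisfaction of the axiom.

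The only genuinely delicate point is this interplay between definedness and monotonicity in the operation case; once one commits to proving definedness and order-preservation together by the same induction -- exactly as in the proof of \autoref{T:sound} -- each remaining step is routine, relying only on monotonicity of the algebra operations, transitivity of the orders on the carriers $A_j$, and axiom satisfaction mediated by \autoref{L:subst}.
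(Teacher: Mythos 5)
Your proposal is correct and takes essentially the same route as the paper: the paper proves this by induction on the derivation of $X\vdash_p t_1\le t_2$ with a case split on the last rule applied, handling $(\mathsf{Var})$, $(\mathsf{Ar})$, $(\mathsf{Trans})$, $(\mathsf{Mon})$ via monotonicity of the operations and transitivity, and the axiom rules via satisfaction of the axioms in $A$ together with the substitution lemma -- exactly your ``self-contained'' argument, and indeed $\ext h$ agrees with $h^\#$ on $\T$-defined terms so the statement is the corresponding instance of the monotone-restriction claim in the proof of \autoref{T:sound}. You also correctly flag the one delicate point, namely that definedness and order-preservation must be established by the same induction.
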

\begin{proof}
  Suppose that $t_1 \leq t_2$ holds in $\terms_{\T,p}(X)$ for some
  $p \in \omega$. Eqivalently $X \vdash_p t_1 \leq t_2$. We prove by
  induction on its derivation that this implies
  $\ext h_p(t_1) \leq \ext h_p(t_2)$. We proceed by case distinction
  on the last rule applied in the derivation of
  $X \vdash_p t_1 \leq t_2$. The cases for the $(\mathsf{Var})$,
  $(\mathsf{Ar})$, $(\mathsf{Trans})$ and $(\mathsf{Mon})$ rules are
  easy using monotonicity of the operations $\sigma^A_m$ and the
  respective induction hypotheses as well as transitivity of the order
  on $A_p$ in the last case.

  For the $(\mathsf{Ax1})$ rule let $\Delta \vdash_n s \leq t$ be an
  axiom of $\T$ and
  $\gamma\colon |\Delta| \to \mathsf{T}_{\Sigma,k}(X)$ a uniform
  substitution so that $p = n+k$, $t_1 = \ol\gamma(s)$ and
  $t_2 = \ol\gamma(t)$. The assumption of the rule states that
  $\Gamma \vdash_k \gamma(x) \leq \gamma(y)$ for all $x \leq y$ in
  $\Delta$, which implies that $\gamma$ restricts to a monotone map
  $\Delta \to \terms_{\T,k}(X)$.  The induction hypothesis states
  that for every $x\leq y$ in $\Delta$ we have
  $\ext h_k(\gamma(x)) \leq \ext h_k (\gamma(y))$.  We will prove
  below that for every subterm $r \in \terms_{\T,m}(\Delta)$ of
  $s$ or $t$ we have
  \begin{equation}\label{eq:hsharp}
    \ext h_{m+k} (\ol\gamma(r)) = \ext{(\ext h_k\cdot \gamma)}_m(r), 
  \end{equation}
  which allows us to conclude
  \begin{align*}
    h^\sharp(t_1)
    &= h^\sharp_p[\ol\gamma(s)] \\
    &= h^\#(s)
    & \text{by~\eqref{eq:hsharp}} \\
    &\leq h^\#(t)
    & \text{$A$ satisfies $\Delta \vdash$}
  \end{align*}
\end{proof}}

\takeout{
\begin{rem}
  Even though $(\termst m(X))$ do not form a $(\Sigma, n)$-algebra (since
  they do not form a poset), the canonical quotient maps $q_m\colon
  \termst m (X)\epito (FX)_m$ are still ``homomorphic''. That is, for every
  operation $\sigma$ and every monotone $f\colon \arity(\sigma) \to
  \termst m (X)$ we have that
  \[
    \sigma^{FX}_m (q_m \cdot f) = q_{m+d(\sigma)}(\sigma(f)).
  \]
  To see this we use independence of the definition of $\sigma^{FX}_m$
  from the choice of the splitting $u_m$ of $q_m$. In particular,
  given $\sigma$ and $f$, we choose $u_m$ such that
  $u_m(q_m(f(i)) = f(i)$ for all $i \in\arity(\sigma)$; that means we
  have
  \begin{equation}\label{eq:uq}
    u_m \cdot q_m \cdot f = f. 
  \end{equation}
  Then we have
  \begin{align*}
    \sigma^{FX}_m (q_m \cdot f)
    &= [\sigma(u_m \cdot q_m \cdot f)]
    &\text{def.~of $\sigma^{FX}$}\\
    &= [\sigma(f)]
    &\text{by~\eqref{eq:uq}}\\
    &= q_{m+d(\sigma)} (\sigma(f)).
  \end{align*}
\end{rem}}

\begin{proof}[Proof of~\autoref{P:free}]
  We prove the proposition for $n=\omega$; the details corresponding
  to the case for $n\in\omega$ then follow immediately by making appropriate
  restrictions when necessary. Fix a poset $X$. 
  \begin{enumerate}
  \item We first prove that $FX$ is a $(\T,\omega)$-model. Given an
    axiom $\Delta\vdash_n s\leq t$ of $\T$ and a monotone map
    $\iota\colon\Delta\to (FX)_k$, we verify that both
    $\iota^\#_{n+k}(s)$ and $\iota^\#_{n+k}(t)$ are defined and
    $\iota^\#_{n+k}(s)\leq \iota^\#_{n+k}(t)$.

    We define a uniform substitution $\gamma$ as the map
    \[
      \gamma = \big(
      |\Gamma| \xra{\iota} (FX)_k \xra{u_k} \terms_{\T,k}(X)
      \hookrightarrow \mathsf{T}_{\Sigma,k}(X)
      \big).
    \]
    Since $\iota$ is monotone, we have that
    \[
      \iota^\#_0(x) = \iota(x) \leq \iota(y) = \iota^\#_0(y)
      \quad\text{for all $x\leq y$ in  $X$.}
    \]
    This implies that $X\vdash_k \gamma(x) \leq \gamma(y)$ is
    derivable. By an application of $(\mathsf{Ax1})$ we thus have
    $X \vdash_{n+k} \ol\gamma(s) \leq \ol\gamma(t)$. By
    \autoref{P:subterms} we therefore have
    $X \vdash_{m} \down \ol\gamma(r)$, which implies
    $\ol\gamma(r) \in \terms_{\T,p}(X)$, for every depth-$m$ subterm $r$
    of $s$ or $t$. It remains to prove that for every such $r$, both
    $q_p \cdot \ol\gamma(r) = [\ol\gamma(r)]$ and $\iota^\#_p(r)$ are
    defined and 
    \[
      q_{p}\cdot \ol\gamma(r)
      =
      \iota^\#_{p}(r).
    \]
    We proceed by induction on terms. For all $x \in \Delta$ we clearly
    have that both are defined and
    \[
      [\ol\gamma(x)] = q_k (\gamma(x)) = q_k (u_k(\iota(x))) =
      \iota(x) = \iota^\#_k(x).  
    \]
    Now suppose that $p = k + m+ n$,  $\sigma \in \Sigma(\ari, m)$ and
    $f\colon P \to \terms_{\T, n}(X)$ such that every $f(i) \in
    \subs(s,t)$ for every $i \in P$. By induction we know that
    $\ol\gamma(f(i)))$ and $\iota^\#_n(f(i)$ are both defined and
    $[\ol\gamma(f(i)))] = \iota^\#_n(f(i))$ holds for every $i \in P$.
    Thus, we compute
    \begin{align*}
      [\ol\gamma(\sigma(f))]
      &= q_p (\sigma(\ol\gamma \cdot f))
      & \text{def.~of $\ol\gamma$} \\
      &= \sigma^{FX}_n(q_n \cdot \ol\gamma \cdot \ol f)
      & \text{def.~of $\sigma^{FX}_n$} \\
      &= \sigma^{FX}_n(\iota^\#_n \cdot f)
      & \text{by induction} \\
      &= \iota^\#_p(\sigma(f))
      & \text{def.~$\iota^\#_p$.}
    \end{align*}
    \takeout{
      \smerror{The next sentence makes no sense. This must again involve $u_k$. 
        We will need a
        lemma, namely: $u_k \cdot (\iota^\#_k) = (u_m \cdot
        \iota)^\#_k$, of which the right-hand side is not really defined
        because $\terms_{\T,k}(\Gamma)$ is not a $(\Sigma,k)$-algebra
        since it is not a poset.}%
      Thus $X\vdash_{k+m} \iota^\#_{k+m}(s)\leq \iota^\#_{k+m}(t)$ is
      derivable via $(\mathsf{Ax1})$\cfnote{There is an issue here:
        $\iota$ is not a uniform substitution. SM: Yes, this is a problem!} hence also
      $\iota^\#_{k+m}(s)\leq \iota^\#_{k+m}(t)$.  By the Subterm Rule,
      $\iota^\#_{k+m}(s)$ and $\iota^\#_{k+m}(t)$ are defined.}
    
  \item We turn to the the universal property of
    $\eta_X\colon X \to (FX)_0$. First observe that $\eta_X$ is
    clearly monotone since for every $x \leq y$ in $X$ we have
    $X \vdash_0 x \leq y$ by the $(\mathsf{Var})$ rule, whence
    $[x] \leq [y]$ holds in $(FX)_0$. Given $h\colon X\to A_0$, we
    define the family of a maps $h^\sharp_k\colon(FX)_k\to A_{k}$
    using that $q_k\colon \termst k (X) \epito FX$ is the universal
    arrow for $X$ of the reflection of $\Pos$ in the category of
    preordered sets (i.e.~the left adjoint to the inclusion of that
    category in $\Pos$). We know every $h^\sharp_k$ restricts to a
    total monotone map $\termst k (X) \to A_k$ (see the proof of
    \autoref{T:sound}). Thus, there exists a unique monotone map
    $h^\sharp_k\colon (FX)_k \to A_k$ such that $h^\sharp_k \cdot q_k
    = h^\#_k$. In other words, we have 
    \[
      h^\sharp_k [t] = h^\#_k(t)
      \qquad
      \text{for every $t$ in $\termst k(X)$.}
    \]
    We immediately obtain that $h^\sharp_0 \cdot \eta_X = h$. Indeed,
    using the definition of $h^\#_0$ we have, for every variable
    $x \in X$ that
    \[
      h^\sharp \cdot \eta_X(x) = h^\sharp_0[x] = h^\#_0(x) = x
    \]

  \item We now prove that $h^\sharp\colon FX \to A$ is a homomorphism of
    $(\Sigma,n)$-algebras. That is, for every operation $\sigma$ in
    $\Sigma$ we show that the square below commutes:
    \[
      \begin{tikzcd}
        \Pos(\arity(\sigma), (FX)_m)
        \ar{r}{\sigma^{FX}_m}
        \ar{d}[swap]{\Pos(\arity(\sigma), h^\sharp_m)}
        &
        (FX)_{m+d(\sigma)}
        \ar{d}{h^\sharp_{m+d(\sigma)}}
        \\
        \Pos(\arity(\sigma), A_m)
        \ar{r}{\sigma^A_m}
        &
        A_{m+d(\sigma)}
      \end{tikzcd}
    \]
    Given a monotone $f\colon \arity(\sigma) \to (FX)_m$ we compute
    (abbreviating $k = m+d(\sigma)$):
    \begin{align*}
      &h^\sharp_k \cdot \sigma^{FX}_m (f)\\
      &= h^\sharp_k[\sigma(u_m \cdot f)]
      &\text{def.~of $\sigma^{FX}$} \\
      &=h^\#_k(\sigma(u_m \cdot f))
      &\text{def.~of $h^\sharp$} \\
      &=\sigma^A_m(h^\#_m \cdot u_m \cdot f)
      & \text{def.~of $h^\#$} \\
      &= \sigma^A_m(h^\sharp_m \cdot q_m \cdot u_m \cdot f)
      &\text{since $h^\sharp_m \cdot q_m = h^\#_m$}\\
      &=\sigma^A_m(h^\sharp_m \cdot f)
      & \text{since $q_m \cdot u_m = \id_{(FX)_m}$} \\
      &= \sigma^A_m\big(\Pos(\arity(\sigma), h^\sharp_m)(f)\big)
      &\text{def.~of hom-functor} \\
      &= \sigma^A_m \cdot \Pos(\arity(\sigma), h^\sharp_m)(f).
    \end{align*}

  \item It remains to prove that $h^\sharp$ is the unique
    $(\Sigma, n)$-algebra morphism with $h^\sharp_0 \cdot \eta_X = h$.
    Given a of $(\Sigma,n)$-algebra morphism $g\colon FX\to A$ 
    such that $g_0\cdot\eta_X= h$, we verify that
    $g_k[t] = h^\#_k(t)$ for all $k\in\omega$ and all
    $[t]\in\terms_{\T, k}(X)$ by induction on terms. For a variable
    $x$ we have that $k=0$ and
    \[
      g_0[x] = g_0 \cdot \eta_X(x) = h (x) = h^\#_0(x). 
    \]
    Now, suppose that $t = \sigma(f)$ for some operation $\sigma$ and
    $f\colon |\arity(\sigma)| \to \mathsf{T}_{\Sigma,m}$ with $m+d(\sigma)\leq n$. By
    \autoref{C:arities} and \autoref{P:subterms} we can assume that
    $f$ restricts to depth-$m$ $\T$-defined terms and is monotone;
    that is we can take it as a monotone function $f\colon
    \arity(\sigma) \to \terms_{\T,m}(X)$. In the following we assume
    that in the definition of $\sigma^{FX}_m$ we chose a splitting
    $u_m$ of $q_m\colon \terms_{\T,m} \epito (FX)_m$ such that
    $u_m(q_m(f(i)) = f(i)$ for all $i \in\arity(\sigma)$; that means
    we have 
    \begin{equation}\label{eq:uq2}
      u_m \cdot q_m \cdot f = f. 
    \end{equation}
    Abbreviating $n = m + d(\sigma)$ we can now compute
    \begin{align*}
      g_n[t] &= g_n[\sigma(f)] & \text{since $t = \sigma(f)$} \\
      &= g_n [\sigma(u_m \cdot q_m \cdot f)]
      & \text{by~\eqref{eq:uq2}}\\
      &= g_n (\sigma^{FX}_m(q_m \cdot f))
      &\text{def.~of $\sigma^{FX}$}\\
      &= \sigma^A_m(g_m \cdot q_m \cdot f) & \text{$g$ a homomorphism} \\
      &= \sigma^A_m(h^\sharp_m \cdot q_m \cdot f) &
      \text{by induction} \\
      &= \sigma^A_m(h^\#_m \cdot f)
      & \text{since $h^\sharp_m \cdot q_m = h^\#_m$} \\
      &= h^\#_n(\sigma(f))
      & \text{def.~of $h^\#$}
    \end{align*}
  \takeout{
  \item We turn to the the universal property of
    $\eta_X\colon X \to (FX)_0$. First observe that $\eta_X$ is
    clearly monotone since for every $x \leq y$ in $X$ we have
    $X \vdash_0 x \leq y$ by the $(\mathsf{Var})$ rule, whence
    $[x] \leq [y]$ holds in $(FX)_0$. Given $h\colon X\to A_0$, we
    define a family of a maps
    $h^\sharp_k\colon(FX)_k\to A_{k}$ recursively as follows:
    \begin{itemize}
    \item $h^\sharp_0([x]) = h(x)$ for all $x\in X$ and
    \item $h^\sharp_k([\sigma(f)]) =
      \sigma^A_{m}(h^\sharp_m\cdot q_m \cdot f)$
      where $f\colon |\arity(\sigma)|\to \terms_{\T,m}(X)$ and $k = m +
      d(\sigma)$, 
    \end{itemize}
    where $q_m\colon\terms_{\T,m}(X) \epito (FX)_m$ is the canonical
    quotient map, that is $q_m(t) = [t]$.

  \item\label{P:free:3} We prove a \emph{substitution lemma} stating
    that for every monotone
    $\gamma\colon \Delta \to \terms_{\T,k}(X)$ and depth-$m$
    $\T$-defined term $t \in \terms_{\T,m}(\Delta)$ we have
    \begin{equation}\label{eq:hsharp}
      h^\sharp_{m+k} [\ol\gamma(t)] = (h^\sharp_k\cdot q_k \cdot \gamma)^\#_m(t).
    \end{equation}
    By \autoref{P:subterms}, the function on the right-hand side is defined on every
    subterm of $t$. We proceed by induction on
    terms. For $t = x \in \Delta$ we clearly have
    \[
      h^\sharp_k[\gamma(x)]
      =
      h^\sharp_k \cdot q_k \cdot \gamma(x)
      =
      (h^\sharp_k \cdot q_k \cdot \gamma)^\#_k(x).
    \]
    For $t = \sigma(f)$ for $\sigma \in \Sigma(\ari,n)$ and a monotone
    $f\colon P \to \terms_{\T,p}(X)$ with $m = n+p$ we have
    \[
      \begin{array}{@{\,}r@{\ }l@{}r}
      h^\sharp_{m+k}[\ol\gamma(r)]
      &= h^\sharp_{m+k}[\ol\gamma(\sigma(f))]
      &\text{since $r = \sigma(f)$} \\
      &= h^\sharp_{m+k}[\sigma(\ol\gamma \cdot f)]
      & \text{def.~of $\ol\gamma$} \\
      &= \sigma^A_{p+k}(h^\sharp_{n+k} \cdot q_{n+k} \cdot \ol\gamma \cdot f)
      & \text{def.~of $h^\sharp$} \\
      &=\sigma^A_{p+k}\big(\lambda i.\, h^\sharp_{n+k}[\ol\gamma(f(i))]\big) \\
      &=\sigma^A_{p+k}\big(\lambda i.\, (h^\sharp_k \cdot q_k \cdot  \gamma)^\#_{p}(f(i))\big) 
      &\text{by induction} \\
      &=\sigma^A_{p+k}\big((h^\sharp_k \cdot q_k \cdot   \gamma)^\#_{p}\cdot f\big)
      \\
      &= (h^\sharp_k \cdot q_k \cdot \gamma)^\#_m(\sigma(f))
      & \text{def.~of $(-)^\#$}
    \end{array}
    \]

  \item\label{P:free:4} Now we show that every $h^\sharp_p$ is
    monotone. Suppose that $[t_1] \leq [t_2]$ holds in $(FX)_p$ for
    some $p \in \omega$, eqivalently $X \vdash_p t_1 \leq t_2$. We
    prove by induction on its derivation that this implies
    $h^\sharp_p([t_1]) \leq h^\sharp_p([t_2])$. We proceed by case
    distinction on the last rule applied in the derivation of
    $X \vdash_p t_1 \leq t_2$.  The cases of the $(\mathsf{Var})$,
    $(\mathsf{Ar})$, $(\mathsf{Trans})$ and $(\mathsf{Mon})$ rules are
    easy using monotonicity of the operations $\sigma^A_m$ and the
    respective induction hypotheses as well as transitivity of the
    order on $A_p$ in the last case.

    For the $(\mathsf{Ax1})$ rule let $\Delta \vdash_n s \leq t$ be an
    axiom of $\T$ and
    $\gamma\colon |\Delta| \to \mathsf{T}_{\Sigma,k}(X)$ a uniform
    substitution so that $p = n+k$, $t_1 = \ol\gamma(s)$ and
    $t_2 = \ol\gamma(t)$. The assumption of the rule states that
    $\Gamma \vdash_k \gamma(x) \leq \gamma(y)$ for all $x \leq y$ in
    $\Delta$, which implies that $\gamma$ restricts to a monotone map
    $\Delta \to \terms_{\T,k}(X)$.  The induction hypothesis
    states that for every $x\leq y$ in $\Delta$ we have
    $h^\sharp_k[\gamma(x)] \leq h^\sharp_k [\gamma(y)]$. Using the
    substituion lemma in item~\ref{P:free:3} we can conclude by
    computing
    \begin{align*}
      h^\sharp_p[t_1]
      &= h^\sharp_p[\ol\gamma(s)] &
      \text{since $t_1 = \ol\gamma(s)$}\\
      &= (h^\sharp_k\cdot q_k \cdot \gamma)^\#_m(s)
      & \text{by~\eqref{eq:hsharp}} \\
      &\leq (h^\sharp_k\cdot q_k \cdot \gamma)^\#_m(t)
      & \text{$A$ satisfies $\Delta \vdash_n s \leq t$} \\
      &=h^\sharp_p[\ol\gamma(s)]
      & \text{by~\eqref{eq:hsharp}} \\
      &= h^\sharp_p[t_2] & \text{since $\ol\gamma(t) = t_2$}.
    \end{align*}
    
    For the $(\mathsf{Ax2})$, we argue similarly. Let
    $\gamma\colon |\Delta| \to\mathsf{T}_{\Sigma,k}(X)$ be a uniform
    substitution and $\Delta\vdash_n s\leq t$ be an axiom such that
    there exist an operation symbol $\sigma$ in $\Sigma$ and a map
    $f\colon|\arity(\sigma)|\to \mathsf{T}_{\Sigma, m}(\Delta)$ such
    that $\sigma(f)\in\sub(s, t)$ and, for some $i\leq j$ in
    $\arity(\sigma)$, we have $u=f(i)$ and $v=f(j)$. Then
    $p = m+k$, $t_1 = \ol\gamma(u)$ and $t_2 = \ol\gamma(v)$. Again,
    the assumption of the rule states that $\gamma$ restricts to a
    monotone map $\Delta \to \terms_{\T, k}(X)$. The induction
    hypothesis states that for every $x \leq y$ in $\Delta$ we have
    $h^\sharp_k[\gamma(x)] \leq h^\sharp_k[\gamma(y)]$. Since $A$
    satisfies $\Delta \vdash_n s \leq t$ we know that for every
    monotone $\iota\colon \Delta \to A_k$ we have that $\iota^\#_n(s)$
    and $\iota^\#_n(t)$ are defined (and
    $\iota^\#_n(s) \leq \iota^\#_n(t)$). Unravelling the definition of
    $\iota^\#$ we obtain that $\iota^\#_{m+d(\sigma)}(\sigma(f))$ must
    be defined, which in turn implies that $\iota^\#_m(u)$ and
    $\iota^\#_m(v)$ are defined and
    \begin{equation}\label{eq:iota-uv}
      \iota^\#_m(u) \leq \iota^\#_m(v)
    \end{equation}
    Using the substitution lemma in item~\ref{P:free:3} we can
    conclude by computing
    \begin{align*}
      h^\sharp_p[t_1]
      &= h^\sharp_p[\ol\gamma(u)] &
      \text{since $t_1 = \ol\gamma(u)$}\\
      &= (h^\sharp_k\cdot q_k \cdot \gamma)^\#_m(u)
      & \text{by~\eqref{eq:hsharp}} \\
      &\leq (h^\sharp_k\cdot q_k \cdot \gamma)^\#_m(v)
      & \text{by~\eqref{eq:iota-uv} for $\iota = h^\sharp_k \cdot q_k
        \cdot \gamma$} \\
      &=h^\sharp_p[\ol\gamma(v)]
      & \text{by~\eqref{eq:hsharp}} \\
      &= h^\sharp_p[t_2] & \text{since $\ol\gamma(v) = t_2$}.
    \end{align*}
    
  \item Well-definedness of $h^\sharp_k$ follows from
    monotonicity. Indeed, if $\Gamma \vdash_k t_1 = t_2$, we have
    $h^\sharp_k[t_1] = h^\sharp_k[t_2]$ using item~\ref{P:free:4}
    twice and antisymmetry of the order on $A_k$.
        
  \item We now prove that $h^\sharp\colon FX \to A$ is a homomorphism of
    $(\Sigma,n)$-algebras. That is, for every operation $\sigma$ in
    $\Sigma$ we show that the square below commutes:
    \[
      \begin{tikzcd}
        \Pos(\arity(\sigma), (FX)_m)
        \ar{r}{\sigma^{FX}_m}
        \ar{d}[swap]{\Pos(\arity(\sigma), h^\sharp_m)}
        &
        (FX)_{m+d(\sigma)}
        \ar{d}{h^\sharp_{m+d(\sigma)}}
        \\
        \Pos(\arity(\sigma), A_m)
        \ar{r}{\sigma^A_m}
        &
        A_{m+d(\sigma)}
      \end{tikzcd}
    \]
    Given a monotone $f\colon \arity(\sigma) \to (FX)_m$ we compute 
    (abbreviating $k = m+d(\sigma)$):
    \begin{align*} 
      &h^\sharp_k \cdot \sigma^{FX}_m (f)\\
      &= h^\sharp_k[\sigma(u_m \cdot f)]
      &\text{def.~of $\sigma^{FX}$} \\
      &=\sigma^A_m(h^\sharp_m \cdot q_m \cdot u_m \cdot f)
      &\text{def.~of $h^\sharp$} \\
      &=\sigma^A_m(h^\sharp_m \cdot f)
      & \text{since $q_m \cdot u_m = \id_{(FX)_m}$} \\
      &= \sigma^A_m\big(\Pos(\arity(\sigma), h^\sharp_m)(f)\big)
      &\text{def.~of hom-functor} \\
      &= \sigma^A_m \cdot \Pos(\arity(\sigma), h^\sharp_m)(f).
    \end{align*}
    \takeout{
    Observe that whenever
    $[\sigma(f)]\in(FX)_k$ and
    $f\colon|\arity(\sigma)|\to\mathsf{T}_{\Sigma, m}(X)$ it follows
    from~\autoref{C:arities} and~\autoref{P:subterms} that
    $f(i), f(j)\in\terms_{\T, m}(X)$ and $f(i)\leq f(j)$ in
    $\terms_{\T, m}(X)$ for all $i\leq j$ in $\arity(\sigma)$. 

    Thus,
    $q \cdot f\in\Pos(|\arity(\sigma)|, \terms_{\T, m}(X))$:
    $\sigma^A_m(\iota^\#_m\cdot q \cdot f)$ is defined \smnote{No, it
      isn't because we do not know that $\iota^\sharp$ is monotone.}
    and lies in $A_{k}$ (since $[\sigma(f)]\in\terms_{\T, k}(X)$
    implies $d(\sigma)+m=k)$. Furthermore, for every
    $\sigma\in\Sigma(\Delta, m)$ and $f\in\Pos(\Delta, (F\Gamma)_k)$,
    we have \smnote{This uses monotonicity of $\iota^\sharp_k$, a fact
      which was never proved.}
    \begin{align*}
      \iota^\#_{k+m}\cdot\sigma^{F\Gamma}_k(f)
      &= \iota^\#_{k+m}([\sigma(f)]_{k+m})
      \\
      &= \sigma^A_k(\iota^\#_k\cdot f)
      \\
      &= \sigma^A_k\cdot\Pos(\Gamma, \iota^\#_k)(f).
    \end{align*}
    Hence $\iota^\#\colon FX\to A$ is a homomorphism of
    $\Sigma$-algebras.}
    
  \item It follows immediately by definition of
    $h^\sharp$ that $h^\sharp_0\cdot\eta_X= h$.  it remains to
    be seen that $h^\sharp$ is the unique $\Sigma$-homomorphism
    with this property. Given a homomorphism $g\colon FX\to A$ of
    $(\Sigma,n)$-algebras such that $g_0\cdot\eta_X= h$, we verify that
    $g_k[t] = h^\sharp_k[t])$ for all $k\in\omega$ and all
    $[t]\in\terms_{\T, k}(X)$ by induction on terms. For 
    a variable $x$ we have that $k=0$ and
    \[
      h^\sharp_0[x] = h^\sharp_0 \cdot \eta_X(x) =h (x) = g_0 \cdot
      \eta_X(x) = g_0[x].
    \]
    Now, suppose that $t = \sigma(f)$ for some operation $\sigma$ and
    $f\colon |\arity(\sigma)| \to \mathsf{T}_{\Sigma,m}$ with $m+d(\sigma)\leq n$. By
    \autoref{C:arities} and \autoref{P:subterms} we can assume that
    $f$ restricts to depth-$m$ $\T$-defined terms and is monotone;
    that is we can take it as a monotone function $f\colon
    \arity(\sigma) \to \terms_{\T,m}(X)$. In the following we assume
    that in the definition of $\sigma^{FX}_m$ we chose a splitting
    $u_m$ of $q_m\colon \terms_{\T,m} \epito (FX)_m$ such that
    $u_m(q_m(f(i)) = f(i)$ for all $i \in\arity(\sigma)$; that means
    we have 
    \begin{equation}\label{eq:uq2}
      u_m \cdot q_m \cdot f = f. 
    \end{equation}
    Abbreviating $n = m + d(\sigma)$ we can now compute
    \begin{align*}
      g_n[t] &= g_n[\sigma(f)] & \text{since $t = \sigma(f)$} \\
      &= g_n [\sigma(u_m \cdot q_m \cdot f)]
      &
      \text{by~\eqref{eq:uq}}\\
      &= g_n (\sigma^{FX}_m(q_m \cdot f))
      &\text{def.~of $\sigma^{FX}$}\\
      &= \sigma^A_m(g_m \cdot q_m \cdot f) & \text{$g$ a homomorphism} \\
      &= \sigma^A_m(h^\sharp_m \cdot q_m \cdot f) &
      \text{by induction} \\
      &= h^\sharp_n(\sigma^{FX}_m(q_m \cdot f) &
      \text{$h^\sharp$ a homomorphism} \\
      &= h^\sharp_n[\sigma(u_m \cdot q_m \cdot f)]
      &\text{def.~of $\sigma^{FX}$}\\
      &= h^\sharp_n[\sigma(f)] & \text{by~\eqref{eq:uq2}}\\
      &= h^\sharp_n[t] & \text{since $\sigma(f) = t$.}
    \end{align*}
    \takeout{
    \smnote[inline]{Old text below.}
    Using that $h$ is a homomorphism, 
    it follows that
    \[
      h_{m+d(\sigma)}\cdot\sigma^{F\Gamma}_m(f) = \sigma^A_m(h_m\cdot f)
    \]
    and, by the inductive hypothesis, we have that
    $h_m(f(i))=\iota^\#_m(f(i))$ for all $i\in\arity(\sigma)$. That is,
    $h_m\cdot f=\iota^\#_m\cdot f$. It now follows that
    $\iota^\#_{m+d(\sigma)}([\sigma(f)])
    =h_{m+d(\sigma)}([\sigma(f)])$. Hence 
    $\iota^\#=h$ by induction, as desired.}
}
This completes the proof. \qedhere
\end{enumerate}
\end{proof}
\begin{lemma}\label{L:defnd}
  Let $t \in \rterms k(X)$. For the universal map $\eta_X\colon X \to
  (FX)_0$, if  $(\eta_X)^\#_k(t)$ is
  defined, then $t$ is a $\T$-defined term, that is
  $t \in \termst k (X)$, and moreover $(\eta_X)^\#_k(t) = [t]$.
\end{lemma}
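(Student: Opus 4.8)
The plan is to prove the two assertions---that $t\in\termst k(X)$ and that $(\eta_X)^\#_k(t)=[t]_k$---simultaneously, by structural induction on the term $t\in\mathsf{T}_{\Sigma,k}(X)$. Throughout, recall that $\eta_X\colon X\to(FX)_0$ is the map $x\mapsto[x]_0$ and that the carrier $(FX)_q=\termst q(X)/\mathord\sim$ carries derivable inequality as its order, i.e.\ $[s]_q\le[r]_q$ in $(FX)_q$ iff $X\vdash_q s\le r$ is derivable.

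For the base case $t=x\in X$ we have $k=0$; then $(\eta_X)^\#_0(x)=\eta_X(x)=[x]_0$ is defined, and $X\vdash_0 x\le x$ is derivable by $(\mathsf{Var})$ (reflexivity of the order on $X$), so $x\in\termst 0(X)$ and the claim holds. For the inductive step write $t=\sigma(f)$ with $\sigma\in\Sigma(\ari,d)$, $d=d(\sigma)$, and $f\colon|\ari|\to\mathsf{T}_{\Sigma,q}(X)$, $q+d=k$. By definition of the evaluation map, definedness of $(\eta_X)^\#_k(\sigma(f))$ means exactly that every $(\eta_X)^\#_q(f(i))$ is defined and that $i\le j$ in $\ari$ implies $(\eta_X)^\#_q(f(i))\le(\eta_X)^\#_q(f(j))$ in $(FX)_q$, in which case $(\eta_X)^\#_k(\sigma(f))=\sigma^{FX}_q\big((\eta_X)^\#_q\cdot f\big)$. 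Applying the induction hypothesis to each $f(i)$ yields $f(i)\in\termst q(X)$ and $(\eta_X)^\#_q(f(i))=[f(i)]_q$; hence the monotonicity condition above says precisely that $X\vdash_q f(i)\le f(j)$ is derivable for all $i\le j$ in $\ari$. Rule $(\mathsf{Ar})$ then gives $X\vdash_k\down\sigma(f)$, i.e.\ $t\in\termst k(X)$, establishing the first assertion.

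It remains to compute the value. We have $(\eta_X)^\#_k(\sigma(f))=\sigma^{FX}_q(g)$ for the monotone map $g\colon\ari\to(FX)_q$, $i\mapsto[f(i)]_q$, and by \autoref{D:freealgebra} this equals $[\sigma(u_q\cdot g)]_k$ for any splitting $u_q$ of the quotient map $\termst q(X)\epito(FX)_q$. Since $u_q([f(i)]_q)$ and $f(i)$ are representatives of the same class, $X\vdash_q u_q([f(i)]_q)=f(i)$ is derivable for every $i$; two applications of $(\mathsf{Mon})$---one per direction, using that both $X\vdash_k\down\sigma(u_q\cdot g)$ (by \autoref{L:welldefined}, or again via $(\mathsf{Ar})$) and $X\vdash_k\down\sigma(f)$ (just shown) are available---then give $X\vdash_k\sigma(u_q\cdot g)=\sigma(f)$, hence $[\sigma(u_q\cdot g)]_k=[\sigma(f)]_k=[t]_k$, so $(\eta_X)^\#_k(t)=[t]_k$. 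This finishes the induction. The only mildly delicate point is the bookkeeping around the splitting $u_q$ hidden in the definition of $\sigma^{FX}_q$: one must move from the canonical representatives $f(i)$ to the chosen representatives $u_q([f(i)]_q)$, which is exactly what the two $(\mathsf{Mon})$ steps accomplish; everything else is a direct unfolding of the definitions of the evaluation map and of the free model together with the rules $(\mathsf{Var})$, $(\mathsf{Ar})$, and $(\mathsf{Mon})$.
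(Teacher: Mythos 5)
Your proof is correct and follows essentially the same route as the paper's: structural induction on $t$, with the base case via $(\mathsf{Var})$ and the inductive step unfolding the definedness clause of the evaluation map, applying the induction hypothesis to the arguments, and concluding via $(\mathsf{Ar})$. You are in fact slightly more thorough than the paper on the second assertion: the paper's inductive step only verifies $t\in\termst k(X)$ and leaves the computation $(\eta_X)^\#_k(t)=[t]$ implicit, whereas your handling of the splitting $u_q$ via two $(\mathsf{Mon})$ steps (equivalently, one could choose $u_q$ with $u_q\cdot q_q\cdot f=f$ and invoke independence from the splitting, as the paper does elsewhere) makes that step explicit and sound.
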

\begin{proof}
  We prove this fact by induction on $t$ abbreviating $\eta_X$ by
  $\iota$.

  If $t$ is a variable $x$, then clearly
  $X \vdash_0 \down x$ by the $(\mathsf{Var})$ rule, and we have
  \[
    \iota^\#_0(x) = \eta_X(x) = [x].
  \]

  For $t = \sigma(f)$ for some operation $\sigma$ and $f\colon
  |\arity(\sigma)| \to \rterms m (X)$ suppose that $\iota^\#_k(t)$ is
  defined. Equivalently, all $\iota^\#_m(f(i))$ are defined and for
  every $i \leq j$ in $\arity(\sigma)$ we have $\iota^\#_m(f(i)) \leq
  \iota^\#_m(f(j))$. By induction, we see that every $f(i)$ is a
  $\T$-defined term in $\termst m (X)$, and for every $i \leq j$ in
  $\arity(\sigma)$ we have
  \[
    [f(i)] = \iota^\#_k(f(i)) \leq \iota^\#_k(f(j)) = [f(j)]
    \qquad \text{(in $(FX)_m$)}.
  \]
  This implies that $f(i) \leq f(j)$ in $\termst m (X)$; that is $X
  \vdash_m f(i) \leq f(j)$. By the
  $(\mathsf{Ar})$ rule we thus conclude
  \[
    X \vdash_{m+d(\sigma)} \down \sigma(f),
  \]
  in other words $t = \sigma(f)$ is a $\T$-defined term in $\termst k(X)$.
\end{proof}
\begin{proof}[Proof of \autoref{T:complete}]
  \takeout{
  \begin{enumerate}
  \item Soundness. Let $A$ be any $(\T,n)$-model. For every
    $\iota\colon X \to A_m$ the components
    $\iota^\#_p\colon \mathsf{T}_{\Sigma,p}(X) \to A_{m+p}$ restrict to
    monotone maps $\kappa_n\colon \terms_{\T,p} \to A_{m+p}$; the
    proof is very similar but somewhat simpler than the proof of
    monotonicity of $h^\sharp$ in \autoref{P:free} using again a
    substitution lemma stating that for every monotone
    $\gamma\colon \terms_{\T,\ell}(X)$ and depth-$o$ $\T$-defined term
    $t \in \terms_{\T,o}$ we have
    \[
      \kappa_{m+k} (\ol\gamma(t)) = (\kappa_\ell \cdot \gamma)^\#_o(t).
    \]

    Now suppose that the inequation $X \vdash_k s \leq t$ is derivable. By
    \autoref{P:subterms} we know that $s, t$ lie in $\terms_{\T,k}(X)$
    and satisfy $s \leq t$ there. Then for every $\iota\colon X \to
    A_m$ both $\iota^\#_k(s)$ and $\iota^\#_k(t)$ are defined and we
    have $\iota^\#_k(s) \leq \iota^\#_k(t)$. Thus $A$ satisfies the
    given inequation in context.
    
  \item Completeness.}
  Suppose that every $(\T,n)$-model satisfies the inequation
  $X \vdash_k s \leq t$. Then, in particular, $FX$ satisfies it. Then for $\iota =
  \eta_X\colon X \to (FX)_0$ we have that $\iota^\#_k(s)$ and
  $\iota^\#_k(t)$ are defined and $\iota^\#_k(s) \leq
  \iota^\#_k(t)$.
  \takeout{
    The definedness states precisely that $s, t$ are
    $\T$-defined terms in $\termst k (X)$.\smnote{It's clear that this is true, but we did not demonstrate it
      in detail.}
    We now show that $\iota^\#_k$ restricts to
    $q_k\colon \termst k(X) \epito (FX)_k$. Indeed, from \autoref{P:free}
    we know that $(\eta_X)^\sharp_k = \id_{(FX)_k}$ and from the proof of
    this proposition $\eta^\sharp_k$ is the unique monotone map such
    that $(\eta_X)^\sharp_k \cdot q_k = \iota^\#_k$. Thus,
    \[
      \iota^\#_k = (\eta_X)^\sharp_k \cdot q_k = q_k,
    \]
    or, in other words for every $\T$-defined term $r$ in $\termst k
    (X)$ we have
    $\iota^\#_k(r) = [r]$.}
  By \autoref{L:defnd} we have that $s,t$ are $\T$-defined terms in
  $\termst k (X)$ and moreover
  \[
    [s] = \iota^\#_k(s) \leq \iota^\#_k(t) = [t]
    \qquad \text{(in $(FX)_k$)}.
  \]
  By definition of the order on $(FX)_k$, we see that $s \leq t$ holds in
  $\terms_{\T,k}(X)$, which states precisely that the given inequation
  in context is derivable: $X \vdash_k s \leq t$.
\end{proof}

\subsection{Details on the Subdistribution Monad}\label{S:app-sdist}

\noindent We proceed to give an explicit description of the monad
$\sdist$ on $\Pos$ induced by the theory in
\autoref{S:theorymonad}. Specifically, we show that~$\sdist$ is a
lifting of the usual finitely supported subdistribution monad on
$\Set$, i.e.~$|\sdist X|$ is the set of all finitely
supported subdistributions on $X$. Moreover, we characterize the
ordering on~$\sdist X$.  A similar characterization has been provided
by Jones and Plotkin \cite[Lemma 9.2]{JP89} in a more restricted
domain-theoretic setting (directed-complete posets). Their argument is
based on the max-flow-min-cut theorem; we give an independent, more
syntactic argument.

Given a poset $X$, we first note that the terms representing elements
of $\sdist X$ can be normalized (in the standard manner known from the
set-based case~\cite{Pumpluen03}) to a single layer of formal
subconvex combinations, i.e.~formal sums
$\sum_{i = 1}^n p_i\cdot x_i$ where $n\in\omega$, $x_i\in X$,
$p_i\in (0, 1]$, and $\sum_{i\in I}p_i\leq 1$.  We generally write
$[n]=\{1,\dots,n\}$ for $n\in\omega$.  A subconvex combination
$\sum_{i=1}^np_i\cdot x_i$ over~$X$ \emph{represents} the
subdistribution on~$X$ that assigns to $x\in X$ the probability
$\sum_{i\in[n], x_i=x}p_i$.

\begin{defn}\label{d:app-sdistorder}
\begin{enumerate}
\item A \emph{subdivision} of a formal subconvex combination
  $\sum_{i=1}^n p_i\cdot x_i$ is a formal subconvex combination
  $\sum_{i=1}^n(\sum_{j=1}^{m_i}\overline{p}_j)\cdot x_i$ such that
  $\sum_{j=1}^{m_i}\overline{p}_j= p_i$ for all $i\leq n$. We then
  also say that
  $\sum_{i=1}^n(\sum_{j=1}^{m_i}\overline{p}_j)\cdot x_i$
  \emph{refines} $\sum_{i=1}^n p_i\cdot x_i$.

\item We say that a formal subconvex combination
  $\sum_{i = 1}^n p_i\cdot x_i$ is \emph{obviously below}
  $\sum_{i=1}^m q_i\cdot y_i$ if there is an injective map
  $f\colon [n]\to [m]$ such that $p_i\leq q_{f(i)}$ and
  $x_i\leq y_{f(i)}$ for all $i\leq n$. In this case, we write
  $\sum_{i=1}^n p_i\cdot x_i\sqsubseteq \sum_{i=1}^m q_j\cdot y_i$,
  and say that~$f$ \emph{witnesses} this relation. If in fact
  $p_i=q_{f(i)}$ and $x_i=y_{f(i)}$ for all~$i$, then we say that~$f$
  is \emph{non-increasing}.

\item Finally, we define the relation $\preccurlyeq$ on $\sdist X$ by 
putting $\sum_{i=1}^n p_i\cdot x_i\preccurlyeq\sum_{j=1}^m q_j\cdot y_j$ 
if and only if there exist subdivisions $d_1$ and $d_2$ of $\sum_{i=1}^n p_i\cdot x_i$ 
and $\sum_{j=1}^k q_j\cdot y_j$, respectively, such that $d_1\sqsubseteq d_2$. 
\end{enumerate}
\end{defn}
\noindent Recall that in addition to the standard equational axioms,
the theory includes the inequational axiom scheme
\begin{equation}\label{eq:subconvex-ineq}\textstyle
  \{x_1,\dots, x_n\}\vdash\sum_{i=1}^n p_i\cdot x_i\leq \sum_{i=1}^n q_i\cdot x_i
\end{equation}
for $(p_i\leq q_i, i=1,\dots, n)$ and moreover that we generally
enforce that operations, in this case subconvex combinations, are
monotone. It is thus essentially immediate that $\preccurlyeq$ is
included in the ordering~$\le$ on $\sdist X$; explicitly:
\begin{lemma}
  If $d_1\preccurlyeq d_2$, then $d_1\le d_2$.
\end{lemma}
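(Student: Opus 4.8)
The plan is to unfold the definition of $\preccurlyeq$ and reduce the claim to three facts about the logic of inequations for the (plain, depth-$0$) theory $\mathbb S$ of ordered subconvex algebras: it is transitive (rule $(\mathsf{Trans})$), subconvex combinations are monotone operations (rule $(\mathsf{Mon})$), and the equational axioms of $\mathbb S$ — which reflect the monad laws, namely the unit law $\sum_k\delta_{ik}\cdot x_k = x_i$ and the flattening law — together with the inequational scheme~\eqref{eq:subconvex-ineq} are available. Recall from \autoref{D:freealgebra} that the order $\le$ on $\sdist X=(FX)_0$ is exactly derivable inequality $X\vdash s\le t$ (with the poset~$X$ as context), so it is transitive, compatible with derivable equality, and respects~\eqref{eq:subconvex-ineq} and monotonicity of subconvex combinations.

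First I would observe that passing to a subdivision does not change the denoted element, in fact yields a \emph{derivably equal} term: if $\sum_i(\sum_j\bar p_{ij})\cdot x_i$ refines $\sum_i p_i\cdot x_i$ (so $\sum_j\bar p_{ij}=p_i$), then
\[
  \sum_{i,j}\bar p_{ij}\cdot x_i
  \;=\; \sum_{i,j}\bar p_{ij}\cdot\Big(\textstyle\sum_{k}\delta_{ik}\cdot x_k\Big)
  \;=\; \sum_{k}\Big(\textstyle\sum_{i,j}\bar p_{ij}\,\delta_{ik}\Big)\cdot x_k
  \;=\; \sum_{k}p_k\cdot x_k,
\]
using the unit law and then the flattening law. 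The same flattening trick, with $\delta$ replaced by the $0/1$-matrix of an injection, shows that summands may be reindexed and that summands with coefficient $0$ may be freely inserted or deleted, again up to derivable equality.

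Next I would handle the relation $\sqsubseteq$: assume $\sum_{i=1}^n p_i\cdot x_i\sqsubseteq\sum_{j=1}^m q_j\cdot y_j$ is witnessed by an injection $f\colon[n]\to[m]$ with $p_i\le q_{f(i)}$ and $x_i\le y_{f(i)}$. Since $x_i\le y_{f(i)}$ holds in $X$, the rule $(\mathsf{Var})$ gives $X\vdash x_i\le y_{f(i)}$, so monotonicity of the operation $\sum_i p_i\cdot(-)$ yields $\sum_i p_i\cdot x_i\le\sum_i p_i\cdot y_{f(i)}$. Setting $p'_j:=p_{f^{-1}(j)}$ for $j\in\mathrm{im}(f)$ and $p'_j:=0$ otherwise, the previous paragraph gives $\sum_i p_i\cdot y_{f(i)}=\sum_{j=1}^m p'_j\cdot y_j$ (reindexing plus insertion of zero summands); and since $p'_j\le q_j$ for every $j$, the axiom scheme~\eqref{eq:subconvex-ineq} gives $\sum_j p'_j\cdot y_j\le\sum_j q_j\cdot y_j$. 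Transitivity then yields $\sum_i p_i\cdot x_i\le\sum_j q_j\cdot y_j$. Finally, given $d_1\preccurlyeq d_2$, choose subdivisions $d_1'$ of $d_1$ and $d_2'$ of $d_2$ with $d_1'\sqsubseteq d_2'$; the first step gives $d_1=d_1'$ and $d_2'=d_2$ in $\sdist X$, the second gives $d_1'\le d_2'$, and hence $d_1\le d_2$.

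There is no real obstacle — as remarked, the statement is essentially immediate. The only point requiring care is the bookkeeping in the first step: one must check that subdivision and insertion of zero-weight summands are genuine \emph{derivable} equalities (not merely equalities of represented subdistributions), which is precisely where the unit and flattening equations of $\mathbb S$, and nothing more, are used. One should also carry out all derivations in the context~$X$, rather than a discrete context, so that the relations $x_i\le y_{f(i)}$ of $X$ are available via $(\mathsf{Var})$.
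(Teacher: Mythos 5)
Your proof is correct and follows essentially the same route as the paper's: reduce to the two constituent relations, note that subdivision is a derivable \emph{equality} (the paper simply asserts this from the equational part of the theory, where you spell it out via the unit and flattening laws), and handle $\sqsubseteq$ by padding with zero-coefficient summands, applying monotonicity for the $x_i\le y_{f(i)}$ part and the axiom scheme~\eqref{eq:subconvex-ineq} for the $p_i\le q_{f(i)}$ part. The only difference is cosmetic bookkeeping (you apply monotonicity before reindexing, the paper after), so nothing further is needed.
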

\begin{proof}
  If~$d_1$ is a subdivision of~$d_2$, then the equational part of the
  theory clearly implies $d_1=d_2$; thus, it suffices to show the claim
  for the case $d_1\sqsubseteq d_2$. So let
  $d_1=\sum_{i=1}^np_i\cdot x_i$, $d_2=\sum_{i=1}^m q_i\cdot y_i$, and
  let $d_1\sqsubseteq d_2$ be witnessed by an injective map
  $f\colon[n]\to[m]$, i.e.~$p_i\le q_{f(i)}$, $x_i\le y_{f(i)}$ for
  all $i\le n$. We write $\Im(f)\subseteq [m]$ for the image
  of~$f$. Then we have
  \begin{align*}\textstyle
    d_1
    &\textstyle
    = \sum_{i=1}^np_i\cdot x_i+\sum_{j\in[m]\setminus \Im(f)}0\cdot  y_j
    \\
    &\textstyle
    \le \sum_{i=1}^np_i\cdot y_i+\sum_{j\in[m]\setminus \Im(f)}0\cdot  y_j 
    \le d_2,
  \end{align*}
  where the first inequality is by monotonicity and the
  second by~\eqref{eq:subconvex-ineq}.
\end{proof}
\noindent We proceed to show that the reverse inclusion also holds.
Since the ordering on $\sdist X$ is the smallest partial order (more
precisely, the partial order quotient of the smallest pre-order, which
however we will show to be already antisymmetric) that contains all
instances of the inequational axiom scheme~\eqref{eq:subconvex-ineq}
and moreover is preserved by subconvex combinations, it suffices to
show that $\preccurlyeq$ already has these properties. It is clear
that~$\preccurlyeq$ subsumes all instances
of~\eqref{eq:subconvex-ineq}: if $p_i\leq q_i$ for all $i\leq n$, then
$\sum_{i=1}^n p_i\cdot x_i\preccurlyeq \sum_{i=1}^n q_i\cdot x_i$ is
witnessed by the trivial subdivisions of these subconvex combinations
via the identity on~$[n]$. We now prove the monotonicity of subconvex
combinations on $\sdist X$:
\begin{lemma}
  Subconvex combinations preserve both the subdivision relation and
  the obviously-below relation $\sqsubseteq$, and are hence monotone
  w.r.t.\ $\preccurlyeq$.
\end{lemma}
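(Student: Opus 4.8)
The plan is to establish the two preservation claims separately and then deduce $\preccurlyeq$-monotonicity formally from them via the definition of $\preccurlyeq$. Fix a subconvex combination operation $\sum_{k=1}^N r_k\cdot(-)$ (so $r_k\ge 0$ and $\sum_k r_k\le 1$); without loss of generality assume $r_k>0$ for all $k$, discarding zero-weight arguments. The key preliminary observation is the standard flattening: if the $c_k=\sum_{i} p^{(k)}_i\cdot x^{(k)}_i$ are formal subconvex combinations over a poset $X$, then, by the equational part of the theory (the multiplication axiom), $\sum_{k} r_k\cdot c_k$ equals, as an element of $\sdist X$, the single-layer formal combination $\sum_{k}\sum_{i}(r_k p^{(k)}_i)\cdot x^{(k)}_i$ (with repeated variables allowed); this is a legitimate formal subconvex combination since $\sum_{k,i}r_kp^{(k)}_i=\sum_k r_k\sum_i p^{(k)}_i\le\sum_k r_k\le 1$. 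From now on I work with these flattened representatives.

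For the subdivision relation: assume $c'_k$ refines $c_k=\sum_i p^{(k)}_i\cdot x^{(k)}_i$, say $c'_k=\sum_i\sum_j \bar p^{(k)}_{ij}\cdot x^{(k)}_i$ with $\sum_j\bar p^{(k)}_{ij}=p^{(k)}_i$ for all $(k,i)$. Flattening, $\sum_k r_k\cdot c_k$ is $\sum_{k,i}(r_kp^{(k)}_i)\cdot x^{(k)}_i$ and $\sum_k r_k\cdot c'_k$ is $\sum_{k,i,j}(r_k\bar p^{(k)}_{ij})\cdot x^{(k)}_i$; since for each pair $(k,i)$ we have $\sum_j r_k\bar p^{(k)}_{ij}=r_k p^{(k)}_i$, the latter refines the former, as required. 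For the obviously-below relation: assume each $c_k\sqsubseteq d_k$ is witnessed by an injective $f_k\colon[n_k]\to[m_k]$, where $c_k=\sum_i p^{(k)}_i\cdot x^{(k)}_i$, $d_k=\sum_j q^{(k)}_j\cdot y^{(k)}_j$, $p^{(k)}_i\le q^{(k)}_{f_k(i)}$ and $x^{(k)}_i\le y^{(k)}_{f_k(i)}$. On the flattened combinations $\sum_{k,i}(r_kp^{(k)}_i)\cdot x^{(k)}_i$ and $\sum_{k,j}(r_kq^{(k)}_j)\cdot y^{(k)}_j$, the assignment $(k,i)\mapsto(k,f_k(i))$ is injective (it fixes the first coordinate and is injective in the second by injectivity of $f_k$) and satisfies $r_kp^{(k)}_i\le r_kq^{(k)}_{f_k(i)}$ (as $r_k\ge 0$) and $x^{(k)}_i\le y^{(k)}_{f_k(i)}$; hence it witnesses $\sum_k r_k\cdot c_k\sqsubseteq\sum_k r_k\cdot d_k$.

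Monotonicity w.r.t.\ $\preccurlyeq$ is then immediate from the definition of $\preccurlyeq$: given $c_k\preccurlyeq d_k$ for every $k$, choose subdivisions $e_k$ of $c_k$ and $g_k$ of $d_k$ with $e_k\sqsubseteq g_k$; then $\sum_k r_k\cdot e_k$ refines $\sum_k r_k\cdot c_k$ and $\sum_k r_k\cdot g_k$ refines $\sum_k r_k\cdot d_k$ by the first claim, while $\sum_k r_k\cdot e_k\sqsubseteq\sum_k r_k\cdot g_k$ by the second, so $\sum_k r_k\cdot c_k\preccurlyeq\sum_k r_k\cdot d_k$. I do not anticipate a genuine obstacle; the only points requiring care are the flattening step (justified by the equational part of the theory) and the routine index bookkeeping for the witnessing maps, together with the cosmetic matter of zero weights versus the convention $p_i\in(0,1]$.
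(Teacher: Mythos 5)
Your proof is correct and follows essentially the same route as the paper's: flatten the nested combination via the equational axioms and exhibit the refinement, respectively the witnessing injection, explicitly, then derive $\preccurlyeq$-monotonicity from the two preservation claims. The only (cosmetic) difference is that the paper varies one argument at a time and invokes transitivity of the subdivision and $\sqsubseteq$ relations, whereas you treat all arguments simultaneously with the product-indexed witness $(k,i)\mapsto(k,f_k(i))$.
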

\begin{proof}
  Monotonicity w.r.t.~$\preccurlyeq$ clearly follows from the first
  two claims. 

  \emph{Preservation of the subdivision relation:} Since the
  subdivision relation is clearly transitive, it suffices to prove
  preservation in one argument of a subconvex combination, which by
  commutativity we can assume to be the first. So let
  $d_1=\sum_{i=1}^n(\sum_{j=1}^{m_i}\bar p_j)\cdot x_i$ be a
  subdivision of $d_2=\sum_{i=1}^np_i\cdot x_i$, i.e.\
  $\sum_{j=1}^{m_i}\bar p_j=p_i$ for $i=1,\dots,n$, and take a
  subconvex combination $\sum_{i=1}^{k} q_i\cdot y_i$. Then
  \begin{align*}\textstyle
    q_1\cdot d_1+\sum_{i=2}^kq_i\cdot y_i &\textstyle =
     \sum_{i=1}^n\sum_{j=1}^{m_i}q_1\bar p_j\cdot x_i +\sum_{i=2}^kq_i\cdot y_i\\
    \textstyle q_1\cdot d_2+\sum_{i=2}^kq_l\cdot y_l & \textstyle=
       \sum_{i=1}^n q_1p_i\cdot x_i  +\sum_{i=2}^kq_i\cdot y_i,
  \end{align*}
  and the first of these subconvex combinations is clearly a
  subdivision of the second.

  \emph{Preservation of~$\sqsubseteq$:} Let
  $d_1=\sum_{i=1}^np_i\cdot x_i$, $d_2=\sum_{i=1}^mq_i\cdot y_i$ such
  that $d_1\sqsubseteq d_2$; that is, we have an injection
  $f\colon[n]\to[m]$ such that $p_i\le q_{f(i)}$ and $x_i\le y_{f(i)}$
  for all $i\in[n]$. Again, it suffices to show preservation in the
  first argument of a subconvex
  combination~$\sum_{j=1}^kr_j\cdot z_j$. Indeed, we have
  \begin{align*}\textstyle
    r_1\cdot d_1+\sum_{j=2}^kr_j\cdot z_j
    & \textstyle = \sum_{i=1}^nr_1p_i\cdot x_i+\sum_{j=2}^kr_j\cdot z_j\\
    \textstyle r_1\cdot d_2+\sum_{j=2}^kr_j\cdot z_j
    & \textstyle = \sum_{i=1}^mr_1q_i\cdot y_i+\sum_{j=2}^kr_j\cdot z_j,
  \end{align*}
  and the first of these subconvex combinations is obviously below the
  second, as witnessed by the injection that acts like~$f$ on the
  indices of $\sum_{i=1}^nr_1p_i\cdot x_i$ (note that
  $r_1p_i\le r_1 q_{f(i)}$ for $i\in[n]$ since $r_1\ge 0$) and as
  identity on the indices of $\sum_{j=2}^kr_j\cdot z_j$.
\end{proof}
\noindent It remains to show that $\preccurlyeq$ is a partial
ordering. To this end, it will be convenient to work with two
alternative presentation of partitions of real numbers, introduced
presently.

For $p\in [0,1]$, we denote by $\summands(p)$ the set of \emph{summand
  representations} of partitions of~$p$, i.e., the set of sequences
$p_1,\dots, p_n\in (0,1]$ (with~$n$ referred to as the \emph{length}
of the sequence) such that $\sum_{i=1}^n p_i = p$. We define a
partial order~$\rightarrow$ (\emph{refines}) on $\summands(p)$ by
$(p_1,\dots, p_n)\rightarrow(q_1,\dots, q_k)$ if and only if
$(q_1,\dots, q_k)$ arises from $(p_1,\dots, p_n)$ by summing
consecutive summands, formally if there exist indices
$1\le j_1<\dots<j_n<j_{k+1}=n+1$ such that
$q_i=\sum_{l=j_i}^{j_{i+1}-1}p_l$ for $i=1,\dots,k$. Observe that a
subdivision of $\sum_{i=1}^n p_i\cdot x_i$ may be specified by the
choice of a sequence $s_i\in\summands(p_i)$ for each $i\leq n$.%

Alternatively, we may represent partitions in terms of sequences of
partial sums: The set $\partition(p)$ of \emph{partial-sum
  representations} of partitions of~$p$ consists of all finite subsets
$P\subseteq [0,p)$ such that $0\in s$. When we write
$P=\{q_0,\dots,q_{n-1}\}$, we mean to imply that
$0=q_0<q_1<\dots<q_{n-1}$, and implicitly understand $q_n=p$; we refer
to~$n$ as the \emph{length} of~$s$. We note the equivalence of the two
representations explicitly:

\begin{lemma}\label{L:sdist:bij}
  For every $p\in[0,1]$, we have a length-preserving order isomorphism
  \[(\summands(p), \rightarrow)\cong(\partition(p), \supseteq),\]
  which in the left-to-right direction assigns to a summand
  representation $(p_1,\dots,p_n)\in\summands(p)$ the partial-sum
  representation $\{\sum_{i=1}^k p_i\mid k=0,\dots,n-1\}$, and in the
  right-to-left direction assigns to a partial-sum representation
  $\{q_0,\dots,q_{n-1}\}$ the summand representation $(p_1,\dots,p_n)$
  given by $p_i=q_i-q_{i-1}$.
\end{lemma}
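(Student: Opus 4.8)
The plan is to check directly that the two assignments in the statement are mutually inverse, that each preserves length, and that the left-to-right assignment carries the relation $\rightarrow$ to the relation $\supseteq$. Write $\Phi\colon\summands(p)\to\partition(p)$ for the left-to-right map sending $(p_1,\dots,p_n)$ to the set of its proper partial sums, and $\Psi$ for the right-to-left map sending $\{q_0,\dots,q_{n-1}\}$ (listed increasingly, with $q_0=0$ and $q_n:=p$) to the sequence of successive differences $(q_1-q_0,\dots,q_n-q_{n-1})$. First I would dispatch well-definedness and length preservation: for $(p_1,\dots,p_n)\in\summands(p)$ the partial sums $s_k:=\sum_{i=1}^k p_i$ form a strictly increasing chain $0=s_0<s_1<\dots<s_{n-1}<s_n=p$ (strictness uses $p_i>0$), so $\Phi(p_1,\dots,p_n)=\{s_0,\dots,s_{n-1}\}$ is a finite subset of $[0,p)$ of cardinality exactly $n$ containing $0$; dually, the differences of a partial-sum representation of length $n$ are strictly positive and telescope to $p$, so $\Psi$ lands in $\summands(p)$ and preserves length as well. (The degenerate case $p=0$, where each side carries a single, empty, description, is immediate.)

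Next I would verify that $\Psi$ and $\Phi$ are mutually inverse, a pair of one-line telescoping identities: applying $\Psi$ to $\{s_0,\dots,s_{n-1}\}$ recovers $s_i-s_{i-1}=p_i$ and $p-s_{n-1}=p_n$, while the proper partial sums of $(q_1-q_0,\dots,q_n-q_{n-1})$ are exactly $q_0,\dots,q_{n-1}$. Hence $\Phi$ is a length-preserving bijection, and it only remains to show it is an order isomorphism onto $(\partition(p),\supseteq)$; since $\Phi$ is already bijective, it suffices to prove the equivalence
\[
  (p_1,\dots,p_n)\rightarrow(q_1,\dots,q_k)
  \iff
  \Phi(q_1,\dots,q_k)\subseteq\Phi(p_1,\dots,p_n),
\]
and then $\Psi=\Phi^{-1}$ is automatically an order isomorphism too.

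This last equivalence is the one genuinely non-routine step. For the forward direction, assume $q_m=\sum_{l=j_m}^{j_{m+1}-1}p_l$ for $1=j_1<j_2<\dots<j_k<j_{k+1}=n+1$ (note $j_1=1$ is forced, since both sequences sum to $p$). With $s_l=\sum_{i=1}^l p_i$ and $t_m=\sum_{i=1}^m q_i$, regrouping the blocks gives $t_m=s_{j_{m+1}-1}$, so $\Phi(q_1,\dots,q_k)=\{t_0,\dots,t_{k-1}\}=\{s_{j_m-1}\mid 1\le m\le k\}$, a subset of $\{s_0,\dots,s_{n-1}\}=\Phi(p_1,\dots,p_n)$ because $1\le j_m\le n$. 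Conversely, if $\Phi(q_1,\dots,q_k)\subseteq\Phi(p_1,\dots,p_n)$, list $\Phi(q_1,\dots,q_k)$ as $s_{b_0}<\dots<s_{b_{k-1}}$ with $b_0=0$ and each $b_m\le n-1$; matching this increasing list against the increasing list $t_0<\dots<t_{k-1}$ of partial sums of $q$ forces $t_m=s_{b_m}$, whence $q_m=t_m-t_{m-1}=\sum_{l=b_{m-1}+1}^{b_m}p_l$ for $m<k$ and $q_k=p-t_{k-1}=\sum_{l=b_{k-1}+1}^{n}p_l$. Setting $j_m:=b_{m-1}+1$ (and $j_{k+1}:=n+1$) then exhibits $(p_1,\dots,p_n)\rightarrow(q_1,\dots,q_k)$.

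I expect the only real difficulty to be notational: keeping the two index systems in sync — the block boundaries $j_m$ of a refinement versus the index set $\{b_0,\dots,b_{k-1}\}$ picking out $\Phi(q)$ inside $\Phi(p)$ — and remembering the implicit endpoints $s_n=p$ and $j_{k+1}=n+1$ throughout. No analysis enters beyond the single observation that a sum of strictly positive reals is strictly increasing, which is exactly what makes both the length count and the order correspondence come out cleanly.
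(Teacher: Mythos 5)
Your proof is correct, and it is considerably more detailed than what the paper offers: the paper states \autoref{L:sdist:bij} as an explicit ``note'' and gives no proof at all, treating the telescoping bijection and the translation between block boundaries and partial sums as routine. Your verification of the two directions of the order correspondence (blocks $j_m$ versus the index set $\{b_0,\dots,b_{k-1}\}$) is exactly the argument the authors are implicitly relying on, and your observation that $j_1=1$ is forced because both sequences sum to $p$ with strictly positive summands is the one point genuinely worth recording.

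One microscopic quibble with your parenthetical on the degenerate case: for $p=0$ the two sides are not literally both singletons under the paper's definitions --- $\summands(0)$ contains the empty sequence, while $\partition(0)$ is empty, since no subset of $[0,0)=\emptyset$ can contain $0$. This is a definitional artifact with no bearing on the application (every coefficient $p_i$ in a formal subconvex combination lies in $(0,1]$, so only $p>0$ ever arises), but strictly speaking the isomorphism should be asserted for $p\in(0,1]$, or the case $p=0$ should be excluded rather than called immediate.
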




\noindent It follows that the set of subdivisions of a subconvex
combination $\sum_{i=1}^n p_i\cdot x_i$ is in bijection with the set
of families $(P_i)_{i\leq n}$ such that $P_i\in\partition(p_i)$. We
next establish the key properties that will be needed in order to
prove that $\preccurlyeq$ is a partial order. We first note that given
subdivisions always have a joint refinement:

\begin{lemma}\label{L:jointsubd}
  Let $d_0$ and $d_1$ be subdivisions of $\sum_{i=1}^n p_i\cdot
  x_i$. Then there exists a common subdivision of $d_0$ and $d_1$.
\end{lemma}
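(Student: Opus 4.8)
The plan is to reduce the claim to a coordinatewise lattice observation, using the partial-sum encoding of partitions. I would first fix notation: write the common subconvex combination as $\sum_{i=1}^n p_i\cdot x_i$, and invoke the bijection recorded just after \autoref{L:sdist:bij} to present the given subdivisions $d_0$ and $d_1$ as families $(P^0_i)_{i\le n}$ and $(P^1_i)_{i\le n}$ with $P^0_i,P^1_i\in\partition(p_i)$ for all $i\le n$.

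Next I would take as candidate common subdivision the one specified by the family $(P_i)_{i\le n}$, where $P_i:=P^0_i\cup P^1_i$. Since $P^0_i$ and $P^1_i$ are finite subsets of $[0,p_i)$ containing $0$, so is $P_i$; hence $P_i\in\partition(p_i)$, so this family does specify a subdivision $d$ of $\sum_{i=1}^n p_i\cdot x_i$. As $P_i\supseteq P^0_i$ and $P_i\supseteq P^1_i$ for every $i$, the order isomorphism $(\summands(p_i),\rightarrow)\cong(\partition(p_i),\supseteq)$ of \autoref{L:sdist:bij} tells us that, in each block, the partition of $p_i$ induced by $P_i$ refines (in the sense of $\rightarrow$, i.e.\ by splitting consecutive summands) the partitions induced by $P^0_i$ and by $P^1_i$. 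Reading this back through the bijection between subdivisions and such families, and using transitivity of the subdivision relation (already noted above), we get that $d$ is a subdivision of $d_0$ and also of $d_1$, which is the assertion.

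The only step with any content is this last translation, and it is precisely where \autoref{L:sdist:bij} is needed: the coordinatewise inclusions $P_i\supseteq P^\epsilon_i$ have to be matched with the statement ``$d$ is a subdivision of $d_\epsilon$'', i.e.\ with the operation of replacing each summand of $d_\epsilon$ by a partition of it; in the partial-sum picture that operation is exactly the insertion of additional breakpoints inside the blocks, i.e.\ passing to a superset of each $P^\epsilon_i$, so the two descriptions coincide. I do not expect any real obstacle here --- in particular no max-flow/min-cut argument is needed (unlike in the domain-theoretic treatment of Jones and Plotkin), since the point is simply that each $\partition(p)$ is closed under finite unions. The remaining work is the routine verification that $P_i\in\partition(p_i)$ and that the induced formal subconvex combination $d$ is well-formed.
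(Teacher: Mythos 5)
Your proof is correct and follows essentially the same route as the paper's: both pass to partial-sum representations via \autoref{L:sdist:bij} and take the union $P^0_i\cup P^1_i$ in each component, using that $\partition(p_i)$ is closed under finite unions and that $\supseteq$ corresponds to refinement. (The paper merely phrases this as an immediate reduction to the case $n=1$; your appeal to transitivity of the subdivision relation is superfluous, since the coordinatewise inclusions already exhibit $d$ directly as a subdivision of each $d_\epsilon$.)
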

\begin{proof}
  One immediately reduces to the case $n=1$; we then omit the
  indices. By \autoref{L:sdist:bij}, it suffices to consider
  partial-sum representations of partitions. But for subdivisions
  represented by $P,Q\in\partition(p)$, we obtain a common subdivision
  from $P\cup Q\in\partition(p)$. 
\end{proof}
\noindent Next, we show that subdivisions can be pulled back and
pushed forward along the obviously-below relation:
\begin{lemma}\label{L:obviouslybelow}
  Suppose that
  $\sum_{i=1}^n p_i\cdot x_i\sqsubseteq\sum_{i=1}^m q_i\cdot y_i$.
  Then:
  \begin{enumerate}
  \item\label{L:obviouslybelow:1} For every subdivision $d$ of
    $\sum_{i=1}^n p_i\cdot x_i$ there exists a subdivision $d^*$ of
    $\sum_{i=1}^n q_i\cdot y_i$ such that $d\sqsubseteq d^*$;

  \item\label{L:obviouslybelow:2} For every subdivision $d$ of
    $\sum_{i=1}^n q_i\cdot y_i$ there exists a subdivision $d_*$ of
    $\sum_{i=1}^n p_i\cdot x_i$ such that $d_*\sqsubseteq d$.
  \end{enumerate}
\end{lemma}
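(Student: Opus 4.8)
The plan is to fix once and for all an injection $f\colon[n]\to[m]$ witnessing the hypothesis $\sum_{i=1}^n p_i\cdot x_i\sqsubseteq\sum_{i=1}^m q_i\cdot y_i$, so that $p_i\le q_{f(i)}$ and $x_i\le y_{f(i)}$ for all $i\le n$, and then to prove the two parts by giving explicit constructions of the required subdivisions. In both cases I would work componentwise in the coefficients (recalling, as noted after \autoref{L:sdist:bij}, that a subdivision of a subconvex combination is exactly a choice of a partition of each of its coefficients), and the witnessing injection for the obviously-below relation will send each block of one subdivision to a ``corresponding'' block of the other, in such a way that the coefficient inequalities hold either on the nose or by construction, while the carrier inequalities are inherited for free from $x_i\le y_{f(i)}$.

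For part~\ref{L:obviouslybelow:1}, I would write the given subdivision $d$ of $\sum_{i=1}^n p_i\cdot x_i$ as the one splitting each $p_i$ into blocks $\bar p_{i1},\dots,\bar p_{i,m_i}$ with $\sum_j\bar p_{ij}=p_i$. Then I build $d^*$ by splitting $q_{f(i)}$ into exactly the same blocks $\bar p_{i1},\dots,\bar p_{i,m_i}$ together with one remainder block $q_{f(i)}-p_i\ge 0$ (retained only when it is strictly positive), and by leaving $q_\ell$ undivided for $\ell\in[m]\setminus\Im(f)$. This $d^*$ is a subdivision of $\sum_{i=1}^m q_i\cdot y_i$, and the map sending the block $\bar p_{ij}$ sitting over $x_i$ in $d$ to the block $\bar p_{ij}$ sitting over $y_{f(i)}$ in $d^*$ is an injection witnessing $d\sqsubseteq d^*$.

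For part~\ref{L:obviouslybelow:2}, given a subdivision $d$ of $\sum_{i=1}^m q_i\cdot y_i$ that splits each $q_i$ into $\bar q_{i1},\dots,\bar q_{i,k_i}$, I would, for each $i\le n$, ``greedily fill up'' $p_i$ using the blocks of $q_{f(i)}$: since $0<p_i\le q_{f(i)}=\sum_j\bar q_{f(i),j}$ there is a unique index $\ell_i$ with $\sum_{j<\ell_i}\bar q_{f(i),j}<p_i\le\sum_{j\le\ell_i}\bar q_{f(i),j}$, and I split $p_i$ into the blocks $\bar q_{f(i),1},\dots,\bar q_{f(i),\ell_i-1}$ together with the final block $p_i-\sum_{j<\ell_i}\bar q_{f(i),j}$, which is strictly positive by the choice of $\ell_i$ and bounded above by $\bar q_{f(i),\ell_i}$. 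Carrying this out for every $i$ produces a subdivision $d_*$ of $\sum_{i=1}^n p_i\cdot x_i$, and sending the $j$-th block of $p_i$ over $x_i$ to the $j$-th block of $q_{f(i)}$ over $y_{f(i)}$ gives the injection witnessing $d_*\sqsubseteq d$.

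I expect the main obstacle to be the bookkeeping in part~\ref{L:obviouslybelow:2}: one has to check that the greedy filling of $p_i$ really terminates at an actual block boundary of $q_{f(i)}$ with a strictly positive final remainder, so that every block of $d_*$ has coefficient in $(0,1]$ as required of a formal subconvex combination, and that the resulting block-to-block assignment is genuinely injective across all $i$ simultaneously (it is, since $f$ is injective and the per-$i$ assignments land in disjoint groups of blocks of $d$). Part~\ref{L:obviouslybelow:1} is comparatively routine. If a cleaner write-up is preferred, both constructions can be phrased entirely in terms of the partial-sum representations $\partition(p)$ and the order isomorphism of \autoref{L:sdist:bij}, under which ``greedy filling'' becomes simply intersecting and unioning finite subsets of $[0,p)$.
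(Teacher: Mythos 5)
Your proof is correct and is essentially the paper's argument: for part (1) you reuse the blocks of each $p_i$ as blocks of $q_{f(i)}$ with a nonnegative remainder, and for part (2) you truncate the partition of $q_{f(i)}$ at $p_i$, which is exactly the paper's construction. The paper merely streamlines the bookkeeping you worry about by first reducing to the case $n=m=1$ and then phrasing both constructions in partial-sum representations (taking $P\in\partition(p)\subseteq\partition(q)$ for the first part and $Q\cap[0,p)$ for the second) --- precisely the cleaner write-up you suggest in your closing remark.
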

\begin{proof}
  One immediately reduces to the case that $n=m=1$; we then omit the
  indices. Both claims are now seen straightforwardly via
  partial-sums representations, as we show below. Given
  $p\cdot x\sqsubseteq q\cdot y$, we have $p\le q$ and $x\le y$.
  \begin{enumerate}
  \item We are given a subdivision of $p\cdot x$, say in terms of
    $P\in\partition(p)$. But then~$P$ is also in $\partition(q)$, and
    as such induces the desired subdivision $d^*$ of $q\cdot y$, with
    the relation $d\sqsubseteq d^*$ witnessed by the identity map.

      \item We are given a subdivision of~$q\cdot y$, say in terms
    of $Q\in\partition(q)$. We then obtain $d_*$ as desired from
    $Q\cap[0,p)\in\partition(p)$, with the relation $d_*\sqsubseteq d$
    witnessed by the inclusion of the respective index sets.\qedhere
    \end{enumerate}
\end{proof}

\begin{proposition}
  The relation $\preccurlyeq$ is a preorder.
\end{proposition}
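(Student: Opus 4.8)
The plan is to establish reflexivity and transitivity of $\preccurlyeq$ separately, reflexivity being immediate and transitivity resting on the two amalgamation lemmas just proved (\autoref{L:jointsubd} and \autoref{L:obviouslybelow}).

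For reflexivity, given a subconvex combination $d=\sum_{i=1}^n p_i\cdot x_i$, I would take the trivial subdivision of~$d$ on both sides (each $p_i$ represented by the one-element partition); then $d\sqsubseteq d$ is witnessed by the identity injection $[n]\to[n]$, since $p_i\le p_i$ and $x_i\le x_i$ for all $i\le n$. Hence $d\preccurlyeq d$.

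For transitivity, suppose $d_1\preccurlyeq d_2$ and $d_2\preccurlyeq d_3$. Unfolding \autoref{d:app-sdistorder}, I obtain subdivisions $a_1$ of $d_1$ and $a_2$ of $d_2$ with $a_1\sqsubseteq a_2$, and subdivisions $b_2$ of $d_2$ and $b_3$ of $d_3$ with $b_2\sqsubseteq b_3$. Since $a_2$ and $b_2$ are both subdivisions of the common term $d_2$, by \autoref{L:jointsubd} they admit a common refinement $c_2$. Now apply \autoref{L:obviouslybelow}\ref{L:obviouslybelow:2} to the relation $a_1\sqsubseteq a_2$ and the subdivision $c_2$ of~$a_2$: this yields a subdivision $c_1$ of $a_1$ with $c_1\sqsubseteq c_2$; as refinement is transitive and $a_1$ refines $d_1$, the combination $c_1$ is in fact a subdivision of~$d_1$. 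Symmetrically, apply \autoref{L:obviouslybelow}\ref{L:obviouslybelow:1} to $b_2\sqsubseteq b_3$ and the subdivision $c_2$ of~$b_2$, obtaining a subdivision $c_3$ of~$b_3$ — hence of~$d_3$ — with $c_2\sqsubseteq c_3$. Finally, $\sqsubseteq$ is transitive (compose the witnessing injections, using transitivity of $\le$ on coefficients in $[0,1]$ and on elements of~$X$), so $c_1\sqsubseteq c_3$, which is precisely a witness for $d_1\preccurlyeq d_3$.

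The only mild difficulty — and hence the main obstacle — is the bookkeeping: one must check that $c_2$ genuinely refines both $a_2$ and $b_2$ so that the two parts of \autoref{L:obviouslybelow} are invoked on the correct side of the obviously-below relation, and that transitivity of refinement legitimately promotes $c_1$ and $c_3$ from subdivisions of $a_1$ and $b_3$ to subdivisions of $d_1$ and $d_3$. Both points are routine once \autoref{L:jointsubd} and \autoref{L:obviouslybelow} are in place; antisymmetry (yielding a genuine partial order, as needed for \autoref{T:sdist}) would then be treated in a separate step.
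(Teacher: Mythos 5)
Your proof is correct and follows essentially the same route as the paper's: reflexivity via the trivial subdivision, and transitivity by taking a common refinement of the two middle subdivisions (\autoref{L:jointsubd}), pulling it back and pushing it forward along the obviously-below relations via the two parts of \autoref{L:obviouslybelow}, and composing the witnessing injections. The bookkeeping points you flag (transitivity of refinement and of $\sqsubseteq$) are used implicitly in the paper as well and are indeed routine.
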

\begin{proof}
\begin{enumerate}
\item  We first note that $\preccurlyeq$ is reflexive since every subconvex combination is a 
subdivision of itself. 

\item We now show that $\preccurlyeq$ is transitive. To this end,
  suppose that (a)
  $\sum_{i=1}^{\ell} p_i\cdot x_i\preccurlyeq\sum_{j=1}^m q_j\cdot
  y_j$ and (b)
  $\sum_{j=1}^m q_j\cdot y_j\preccurlyeq\sum_{k=1}^n r_k\cdot z_k$
  hold; we will show that
  $\sum_{i=1}^{\ell} p_i\cdot x_i\preccurlyeq\sum_{k=1}^n r_j\cdot
  z_j$. By (a), we have subdivisions $s_p\sqsubseteq s_{pq}$ of
  $\sum_{i=1}^{\ell} p_i\cdot x_i$ and $\sum_{j=1}^m q_j\cdot y_j$,
  respectively, and, by (b), we are given subdivisions
  $s_{qr}\sqsubseteq s_r$ of $\sum_{j=1}^m q_j\cdot y_j$ and
  $\sum_{k=1}^n r_k\cdot z_j$, respectively. This is all summarized in
  the diagram below

\begin{center}
\begin{tikzcd}[column sep = 0]
s_p  \arrow[r, symbol = \sqsubseteq] \arrow[d]      & s_{pq} \arrow[rd] &                          & s_{qr} \arrow[ld] \arrow[r, symbol = \sqsubseteq] & s_r \arrow[d]             \\
\sum_{i=1}^{\ell}p_i\cdot x_i &                   & \sum_{j=1}^mq_j\cdot y_j &                             & \sum_{k=1}^n r_k\cdot z_k
\end{tikzcd}
\end{center}
where an arrow from $x$ to $y$ denotes that $x$ is a subdivision of
$y$. Let $s_q$ denote the joint subdivision of $s_{pq}$ and $s_{qr}$
given by \autoref{L:jointsubd}. Then, by
\autoref{L:obviouslybelow}\ref{L:obviouslybelow:2}, we obtain a
subdivision $(s_p)_*$ of $s_p$ such that $(s_p)_*\sqsubseteq s_q$ and,
by \autoref{L:obviouslybelow}\ref{L:obviouslybelow:1} we obtain a
subdivision $(s_r)^*$ of $s_r$ such that $s_q\sqsubseteq (s_r)^*$. Let
$f_{pq}$ and $f_{qr}$ be
injections witnessing these relations.  Then the composite
$f_{qr}\cdot f_{pq}$ is a injection witnessing that
$(s_p)_*\sqsubseteq (s_r)^*$.  Since $(s_p)_*$ and $(s_r)^*$ are
moreover subdivisions of $\sum_{i=1}^{\ell}p_i\cdot x_i$ and
$\sum_{k=1}^n r_k\cdot z_k$, respectively, it immediately follows that
$\sum_{i=1}^{\ell}p_i\cdot x_i \preccurlyeq \sum_{k=1}^n r_k\cdot
z_k$, as desired.\qedhere
\end{enumerate}
\end{proof}
\noindent It remains to show that~$\sdist X$ really consists of
subdistributions, i.e.~that no identifications are forced by
antisymmetry. We begin with a technical lemma.
\begin{lemma}\label{lem:obviously-below-self}
  Let $d_1=\sum_{i=1}^n p_i \cdot x_i$ and
  $d_2=\sum_{i=1}^m q_i\cdot y_j$ be subconvex combinations with
  non-zero coefficients, and let~$f\colon[n]\to[m]$ witness
  $d_1\sqsubseteq d_2$. If $d_1$ and $d_2$ represent the same
  subdistribution, then~$f$ is bijective and \emph{non-increasing}:
  \smnote{The meaning of non-increasing is not quite clear because we
    don't mean the obvious order on the (co)domain of $f$ but $X$ and $[0,1]$.}
  \[
    x_i =  y_{f(i)} 
    \qquad\text{and}\qquad
    p_i = q_{f(i)}
    \qquad
    \text{for all $i$}.
  \]
\end{lemma}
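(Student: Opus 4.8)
The plan is to run a mass-counting argument, first globally and then localized to upward-closed subsets of~$X$, using crucially that the coefficients~$q_j$ of~$d_2$ are strictly positive. Write $\mu$ for the common subdistribution represented by $d_1$ and $d_2$, and for a subset $U\subseteq X$ abbreviate $\mu(U)=\sum_{z\in U}\mu(z)$ (a finite sum, as $\mu$ has finite support); note that $\mu(U)=\sum_{i\,:\,x_i\in U}p_i=\sum_{j\,:\,y_j\in U}q_j$ for every~$U$, directly from the definition of ``represents''. In particular, taking $U=X$ gives $\sum_{i=1}^n p_i=\mu(X)=\sum_{j=1}^m q_j$. Throughout, I would reindex $d_2$ as $\sum_{j=1}^m q_j\cdot y_j$ to keep the two index sets apart.

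First I would settle bijectivity of~$f$ and the equality of coefficients. Since $f$ is injective and $p_i\le q_{f(i)}$ for all~$i$, we get $\sum_{i=1}^n p_i\le\sum_{i=1}^n q_{f(i)}\le\sum_{j=1}^m q_j$, and the two outer quantities coincide by the previous paragraph; hence both inequalities are equalities. Equality $\sum_i p_i=\sum_i q_{f(i)}$ with $p_i\le q_{f(i)}$ termwise forces $p_i=q_{f(i)}$ for every~$i$. Equality $\sum_{i}q_{f(i)}=\sum_j q_j$ together with injectivity of~$f$ and $q_j>0$ for all~$j$ forces~$f$ to be surjective, hence bijective (and $n=m$).

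Next I would localize the count. For an upward-closed $U\subseteq X$ put $A_U=\{i\in[n]\mid x_i\in U\}$ and $B_U=\{j\in[m]\mid y_j\in U\}$. Since $x_i\le y_{f(i)}$ and $U$ is upward closed, $i\in A_U$ implies $f(i)\in B_U$; thus $f[A_U]\subseteq B_U$. On the other hand $\sum_{i\in A_U}p_i=\mu(U)=\sum_{j\in B_U}q_j$, while $\sum_{i\in A_U}p_i=\sum_{i\in A_U}q_{f(i)}=\sum_{j\in f[A_U]}q_j$ by injectivity of~$f$ and the equalities $p_i=q_{f(i)}$ just established. So $\sum_{j\in f[A_U]}q_j=\sum_{j\in B_U}q_j$ with $f[A_U]\subseteq B_U$ and $q_j>0$, whence $f[A_U]=B_U$.

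Finally I would extract $x_i=y_{f(i)}$. Fix $i_0\in[n]$, set $j_0=f(i_0)$, and take $U=\{z\in X\mid z\ge y_{j_0}\}$, an upward-closed set containing~$y_{j_0}$, so $j_0\in B_U$. By the previous step $j_0\in f[A_U]$, i.e.\ $j_0=f(i)$ for some $i\in A_U$; injectivity of~$f$ gives $i=i_0$, so $i_0\in A_U$, that is $x_{i_0}\ge y_{j_0}$. Combined with $x_{i_0}\le y_{f(i_0)}=y_{j_0}$ (which holds because~$f$ witnesses $d_1\sqsubseteq d_2$) this yields $x_{i_0}=y_{j_0}=y_{f(i_0)}$, as required. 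I do not anticipate a real obstacle; the one point requiring care is that the hypothesis ``$d_1$ and $d_2$ represent the same subdistribution'' is used as literal equality of the induced functions $X\to[0,1]$, so that $\mu(U)$ may be evaluated from either representation, and it is the strict positivity of the~$q_j$ that makes all the counting inequalities collapse to equalities.
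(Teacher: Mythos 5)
Your argument is correct, and it takes a genuinely different route from the paper's. The paper proceeds by induction on the number of summands: it observes that the two combinations have the same support, picks a \emph{maximal} element $x$ of that common support, notes that $x_i=x$ forces $y_{f(i)}=x$ by maximality, uses the equality of the masses at~$x$ (plus positivity of the $q_j$) to get $p_i=q_{f(i)}$ there and to see that $f$ restricts to a witness between the remaining summands, and then invokes the induction hypothesis. You instead avoid induction entirely: a single global mass count gives $p_i=q_{f(i)}$ for all~$i$ and bijectivity of~$f$ at once, and the pointwise identity $x_i=y_{f(i)}$ is then extracted by localizing the count to the principal up-sets $\mathord{\uparrow}y_{j_0}$, where the inclusion $f[A_U]\subseteq B_U$ (from $x_i\le y_{f(i)}$ and upward closure) collapses to an equality by positivity, yielding the reverse inequality $x_{i_0}\ge y_{f(i_0)}$. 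Both proofs hinge on the same two ingredients — equality of represented subdistributions evaluated on suitable sets, and strict positivity of the coefficients to turn sum equalities into index-set equalities — but your decomposition (global count first, then up-set localization) replaces the paper's induction-with-maximal-element peeling and is arguably more direct, since it never has to check that the restricted map remains a witness between the leftover summands. One small presentational point: as stated in the lemma the second combination is written $\sum_{i=1}^m q_i\cdot y_j$ with clashing indices; your systematic reindexing to $\sum_{j=1}^m q_j\cdot y_j$ is the right reading and matches the paper's intent.
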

\begin{proof}
  We proceed by induction on the number of summands. Since $d_1$ and
  $d_2$ represent the same subdistribution we have
  \[
    \{x_i\mid i\in[n]\}=\{y_i\mid i\in[m]\}.
  \]
  Let $x \in X$ be a maximal element of that set.  Then necessarily
  $y_{f(i)}=x_i$ whenever $x_i=x$. Using once again that $d_1$ and
  $d_2$ represent the same subdistribution, we see that the sum of
  coefficients of $x$ and $d_1$ and $d_2$ are equal:
  \[
    \sum_{\substack{i\in[n] \\ x = x_i}} p_i
    =
    \sum_{\substack{i\in[m]\\ x =        y_i}} q_i.
  \]
  Since for all~$i$ we have $p_i\le q_{f(i)}$, it follows that
  $p_i=q_{f(i)}$ for all~$i \in [n]$ with $x_i = x$. This in turn
  implies that $y_j=x$ only if $j=f(i)$ for some~$i \in [n]$ such that
  $x_i=x$. Thus,~$f$ restricts to a witness of
  \[
    \sum_{\substack{i\in[n]\\ x_i\neq x}} p_i\cdot x_i
    \sqsubseteq
    \sum_{\substack{i\in[m]\\ y_i\neq x}} q_i\cdot y_i,
  \]
  and the two sides in this relation represent the same
  subdistribution, so we are done by induction.
\end{proof}
\begin{proposition}
  The relation $\preccurlyeq$ is antisymmetric on the set of
  subdistributions on a set~$X$.
\end{proposition}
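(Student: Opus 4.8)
The plan is to show directly that if $d_1\preccurlyeq d_2$ and $d_2\preccurlyeq d_1$ for formal subconvex combinations $d_1,d_2$ over $X$, then $d_1$ and $d_2$ represent the same subdistribution. First, unfolding the definition of $\preccurlyeq$ (\autoref{d:app-sdistorder}) on both hypotheses yields subdivisions $a$ of $d_1$ and $b$ of $d_2$ with $a\sqsubseteq b$, and subdivisions $b'$ of $d_2$ and $a'$ of $d_1$ with $b'\sqsubseteq a'$. The main obstacle is that $d_1\preccurlyeq d_2$ and its converse refer to $d_2$ via two a priori unrelated subdivisions; reconciling them is precisely what \autoref{L:jointsubd} and \autoref{L:obviouslybelow} are built for, and this reconciliation is the heart of the argument.

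Concretely, I would use \autoref{L:jointsubd} to pick a common subdivision $B$ of $b$ and $b'$; since subdivision is transitive, $B$ is in particular a subdivision of $d_2$. Applying the pull-back direction \autoref{L:obviouslybelow}\ref{L:obviouslybelow:2} to $a\sqsubseteq b$, with $B$ the chosen subdivision of $b$, gives a subdivision $A$ of $a$ with $A\sqsubseteq B$; applying the push-forward direction \autoref{L:obviouslybelow}\ref{L:obviouslybelow:1} to $b'\sqsubseteq a'$, with $B$ the chosen subdivision of $b'$, gives a subdivision $A'$ of $a'$ with $B\sqsubseteq A'$. Again by transitivity of subdivision, $A$ and $A'$ are subdivisions of $d_1$, and therefore both represent the subdistribution of $d_1$. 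We have thus produced a chain $A\sqsubseteq B\sqsubseteq A'$ in which the two ends represent the same subdistribution and all three combinations have positive coefficients.

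Finally, let $f$ and $g$ be injections witnessing $A\sqsubseteq B$ and $B\sqsubseteq A'$ respectively; then $g\circ f$ witnesses $A\sqsubseteq A'$. Since $A$ and $A'$ represent the same subdistribution, \autoref{lem:obviously-below-self} applies and shows that $g\circ f$ is a bijection matching each summand of $A$ to an identical summand of $A'$. Counting indices, $f$ is injective and $g\circ f$ is bijective, so the index sets of $A$, $B$, $A'$ all have the same size and $f$ itself is a bijection; moreover, for a summand $p\cdot x$ of $A$ that $f$ sends to $q\cdot y$ in $B$ and $g$ then sends to $r\cdot z$ in $A'$, the equalities forced by \autoref{lem:obviously-below-self} give $r\cdot z = p\cdot x$, whence $p\le q\le r=p$ and $x\le y\le z=x$; antisymmetry of the orders on $[0,1]$ and on the poset $X$ then yields $q=p$ and $y=x$. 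Hence $B$ is just a reindexing of $A$, so $B$ represents the subdistribution of $d_1$; but $B$ is also a subdivision of $d_2$, so it represents the subdistribution of $d_2$. Therefore $d_1$ and $d_2$ represent the same subdistribution. Combined with the preceding propositions (that $\preccurlyeq$ is a preorder containing every instance of~\eqref{eq:subconvex-ineq} and preserved by subconvex combinations), this identifies $\sdist X$ with the set of subdistributions on $X$ ordered by $\preccurlyeq$, completing the proof of \autoref{T:sdist}.
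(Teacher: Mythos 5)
Your proof is correct and follows essentially the same route as the paper's: form a common subdivision via \autoref{L:jointsubd}, transport it along the two $\sqsubseteq$-witnesses using \autoref{L:obviouslybelow}, and apply \autoref{lem:obviously-below-self} to the composite injection between two subdivisions of the same combination. The only cosmetic difference is that you take the common subdivision on the $d_2$ side and compare two subdivisions of $d_1$, whereas the paper does the symmetric thing; the argument is otherwise identical.
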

\begin{proof}
  Suppose that
  $\sum_{i=1}^m p_i\cdot x_i\preccurlyeq \sum_{j=1}^n q_j\cdot y_j$
  and
  $\sum_{j=1}^n q_j\cdot y_j\preccurlyeq \sum_{i=1}^m p_i\cdot x_i;$
  we proceed to show that $\sum_{i=1}^m p_i\cdot x_i$ and
  $\sum_{j=1}^n q_j\cdot y_j$ represent the same subdistribution.
  Unwinding our assumptions, we have subdivisions $s_{p0}, s_{p1}$ and
  $s_{q0}, s_{q1}$ of $\sum_{i=1}^m p_i\cdot x_i$ and
  $\sum_{j=1}^n q_j\cdot y_j$, respectively, such that
  $s_{p0}\sqsubseteq s_{q0}$ and $s_{q1}\sqsubseteq s_{p1}$, all as in
  indicated in the diagram below:
\begin{center}
\begin{tikzcd}[column sep = 5]
s_{q1} \arrow[r, symbol = \sqsubseteq] \arrow[rrdd] & s_{p1} \arrow[rd] &                           & s_{p0} \arrow[r, symbol = \sqsubseteq] \arrow[ld] & s_{q0} \arrow[lldd] \\
                              &                   & \sum_{j=1}^m p_i\cdot x_i &                             &                     \\
                              &                   & \sum_{j=1}^n q_j\cdot y_j &                             &                    
\end{tikzcd}
\end{center}
Applying \autoref{L:jointsubd} to $s_{p0}$ and $s_{p1}$, we obtain a common subdivision $s_p$. Then, by 
\autoref{L:obviouslybelow}, there exist subdivisions $(s_p)_*$  and $(s_q)^*$ of $s_{q1}$ and $s_{q0}$, respectively, such that 
\[
(s_p)_*\sqsubseteq s_p \sqsubseteq (s_p)^*.
\]
Let the left hand relation be witnessed by an injection~$f$ and the
right hand relation by~$g$. Then the composite $g\cdot f$ witnesses
the relation $(s_p)_*\sqsubseteq (s_p)^*$, which runs between two
subdivisions of the subconvex combination $\sum_{j=1}^n q_j\cdot
y_j$. Assuming w.l.o.g.\ that all coefficients in the relevant
subconvex combinations are non-zero, we apply
\autoref{lem:obviously-below-self} to $g\cdot f$ and obtain that
$g\cdot f$ is bijective and non-increasing. Since~$f$ and~$g$ are
already injective, the same then follows for~$f$ and~$g$, so
that~$(s_p)_*$ and~$s_p$ differ only by reordering of summands; hence,
$\sum_{i=1}^m p_i\cdot x_i$ and $\sum_{j=1}^n q_j\cdot y_j$ represent
the same subdistribution.
\end{proof}
\noindent In summary, we have shown \autoref{T:sdist} repeated below:
\smnote{Isn't there something missing? Namely that we obtain the monad
  multiplication out of the theory that we would expect and further
  use below.}
\begin{theorem}
  The monad $\sdist$ given by the theory in \autoref{S:probtraceinc}
  assigns to a poset $X$ the set of all finitely generated
  subdistributions on $X$ equipped with the order $\preccurlyeq$.
\end{theorem}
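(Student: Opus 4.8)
The plan is to establish the two assertions of the theorem separately: first that the underlying set $|\sdist X|$ is exactly the set of finite subdistributions on~$X$, and then that the partial order carried by $\sdist X$ coincides with $\preccurlyeq$. For the underlying set, I would invoke the standard normalization for subconvex (positive convex) algebras: using only the equational part of~$\mathbb S$ (the unit law and the multiplication scheme), every depth-$0$ $\mathbb S$-term over~$X$ is equationally equal to a single-layer formal subconvex combination $\sum_{i=1}^n p_i\cdot x_i$ with $p_i\in(0,1]$, $\sum_{i=1}^n p_i\le 1$, which represents the subdistribution $x\mapsto\sum_{i\colon x_i=x}p_i$, and any two such normal forms representing the same subdistribution are equationally identified. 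Since the axioms of~$\mathbb S$ subsume these equations, this gives a surjection from the carrier of the free $\mathbb S$-model onto the set of finite subdistributions; that it is injective — i.e.\ that no further identifications are forced by antisymmetry of the quotient order — will fall out of the antisymmetry argument below, and the induced unit and multiplication are then the expected ones.

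For the order, the strategy is a sandwich. On one side, the inclusion $\preccurlyeq\,\subseteq\,\le$ is the content of the lemma ``$d_1\preccurlyeq d_2$ implies $d_1\le d_2$'' (reduce to $d_1\sqsubseteq d_2$, pad with zero-coefficient summands, apply monotonicity of subconvex combinations to raise each $x_i$ to $y_{f(i)}$, then apply the inequational axiom scheme~\eqref{eq:subconvex-ineq} to raise $p_i$ to $q_{f(i)}$). On the other side, the order on $\sdist X$ produced by the free-model construction is the least partial order on equivalence classes that contains all instances of~\eqref{eq:subconvex-ineq} and is preserved by every subconvex combination; hence it suffices to check that $\preccurlyeq$ has these properties. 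It contains each instance of the scheme (witnessed by trivial subdivisions and the identity injection on $[n]$); it is preserved by subconvex combinations, which follows by showing — one argument at a time using commutativity, and distributing a scalar $q_1\ge 0$ through a subdivision, resp.\ through a witnessing injection, noting $q_1p_i\le q_1q_{f(i)}$ — that subconvex combinations preserve both the subdivision relation and $\sqsubseteq$; and it is a preorder. This yields $\le\,\subseteq\,\preccurlyeq$, so together with the first inclusion, $\le\,=\,\preccurlyeq$.

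It remains to verify that $\preccurlyeq$ is a preorder and that it is antisymmetric on subdistributions. Reflexivity is immediate. For transitivity I would use \autoref{L:jointsubd} (any two subdivisions of the same combination have a common refinement — reduce to one summand; in the partial-sum picture the union $P\cup Q$ works) and \autoref{L:obviouslybelow} (subdivisions can be pulled back and pushed forward along $\sqsubseteq$ — reduce to $p\cdot x\sqsubseteq q\cdot y$, so $p\le q$; then $P\in\partition(p)$ already lies in $\partition(q)$, and $Q\in\partition(q)$ restricts to $Q\cap[0,p)\in\partition(p)$), chasing them through the zig-zag of subdivisions produced by $d_1\preccurlyeq d_2\preccurlyeq d_3$. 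For antisymmetry, from $d_1\preccurlyeq d_2\preccurlyeq d_1$ one likewise produces two subdivisions of $d_2$ related by a composite $g\cdot f$ of witnessing injections; \autoref{lem:obviously-below-self} (a self-witness between combinations representing the same subdistribution is bijective and non-increasing, proved by induction on the number of summands, peeling off a maximal element of the common support) forces $g\cdot f$, hence $f$ and $g$, to be bijective and non-increasing, so the two subdivisions differ only by a reordering of summands and $d_1,d_2$ represent the same subdistribution — which simultaneously completes the injectivity claim from the first paragraph.

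The main obstacle is the bookkeeping in this last part: tracking subdivisions together with the injections witnessing $\sqsubseteq$ through the diagram chase, and getting the ``non-increasing'' conclusion exactly right, where it refers to the orders on~$X$ and on $[0,1]$ rather than to the combinatorial order on index sets. The device that tames the combinatorics is the length-preserving order isomorphism $(\summands(p),\rightarrow)\cong(\partition(p),\supseteq)$ of \autoref{L:sdist:bij}: under it ``common refinement'' becomes union and ``restriction of a subdivision'' becomes intersection with $[0,p)$, which makes \autoref{L:jointsubd} and \autoref{L:obviouslybelow} essentially trivial. One should also check that every reduction ``to the case $n=1$'' (or $n=m=1$) is legitimate, which it is because subdivision, $\sqsubseteq$ and $\preccurlyeq$ are all defined summand-wise over the $p_i\cdot x_i$.
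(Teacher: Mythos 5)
Your proposal follows essentially the same route as the paper's proof: normalization to single-layer subconvex combinations, the sandwich argument reducing $\le\,=\,\preccurlyeq$ to showing that $\preccurlyeq$ contains the inequational axiom scheme, is preserved by subconvex combinations, and is a preorder, and then the partial-sum representation, joint-refinement and pull-back/push-forward lemmas for transitivity, with the self-witness lemma delivering antisymmetry and the injectivity of the representation. The details you sketch (padding with zero coefficients, reduction to one summand, $P\cup Q$ for common refinements, $Q\cap[0,p)$ for restriction, peeling off a maximal element) all match the paper's argument.
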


\takeout{
\subsection{Details for \autoref{E:induced}}\label{S:app-monadtheory}

We verify that the descriptions $(M_n)_{n<\omega}$ proposed in
\autoref{E:induced} of the graded monads induced by the graded
theories from~\autoref{E:theory} are correct. Each of these theories
has a corresponding signature consisting entirely of operations with
discretely ordered arities.

\subsubsection{Details for $\JSL(\A)$}

Recall that the signature $\Sigma$ of $\JSL(\A)$ consists of 
$n$-ary depth-1 operations $a_1(-)+\cdots+a_n(-)$ for every word 
$a_1\cdots a_n\in\A^*$; it will be convenient to denote this operation
by $a_1+\cdots+a_n(-)$ for the remainder of this section. Note that every
depth-0 $\Sigma$-term in a given context $X$ is a variable $x\in X$, 
and a depth-$(k+1)$ term has the form $a_1+\cdots+a_n(f)$ where
$f\colon [n]\to\mathsf{T}_{\Sigma}(X)$. 

\begin{defn}
  The \emph{normalization} $\mathcal{N}(t)$ of a $\JSL(\A)$-defined
  term $t$ with variables in $X$ is defined by induction as follows:\smnote{Please stop using enumerate with manual
      \texttt{bullet}; you should use itemize!}
  \begin{itemize}
  \item $\mathcal{N}(x)= x$ for all $x\in X$ and
  \item $\mathcal{N}(a_1+\cdots+a_n(f))=\conv\{(a_i, \mathcal{N}(f(i)))~|~i\in [n]\}$
    for all $a_1\cdots a_n\in\A^*$ and all functions $f\colon[n]\to\mathsf{T}_{\Sigma, k}(X)$.
  \end{itemize}
\end{defn}

Recall that our goal is to show that the $\JSL(\A)$ induces the graded 
monad with $M_n$ being the $n$-fold composition of the functor
$C_{\omega}(\A\times -)$. That is, $M_0 X= X$ and 
$M_{n+1}X= C_{\omega}(\A\times M_nX)$. We have the following:

\begin{lemma}
For all $k\in\omega$ and all $t\in\mathsf{T}_{\Sigma, k}(X),$
we have $\mathcal{N}(t)\in M_k X$.
\end{lemma}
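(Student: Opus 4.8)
The plan is to argue by structural induction on the term $t\in\mathsf{T}_{\Sigma,k}(X)$ (equivalently, by induction on its uniform depth $k$); the two clauses in the preceding definition of $\mathcal{N}$ correspond exactly to the base case and the inductive step. Note first that $\mathcal{N}$ really is defined on all of $\mathsf{T}_{\Sigma,k}(X)$, not merely on $\JSL(\A)$-defined terms: since every operation of $\Sigma$ has a discrete arity, \autoref{C:arities} together with the $(\mathsf{Var})$ rule shows by induction that every $\Sigma$-term over $X$ is $\JSL(\A)$-defined, so the recursion defining $\mathcal{N}$ never stalls.

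For the base case, if $t=x$ is a variable then $k=0$ and $\mathcal{N}(t)=x\in X=M_0X$, and there is nothing further to check. For the inductive step, let $t=a_1+\cdots+a_n(f)$ for a word $a_1\cdots a_n\in\A^*$ (possibly $n=0$, i.e.\ the depth-$1$ constant $0$) and a function $f\colon[n]\to\mathsf{T}_{\Sigma,m}(X)$, so that $t$ has depth $k=m+1$. By the induction hypothesis, $\mathcal{N}(f(i))\in M_mX$ for every $i\in[n]$; hence each pair $(a_i,\mathcal{N}(f(i)))$ lies in $\A\times M_mX$, and $\{(a_i,\mathcal{N}(f(i)))\mid i\in[n]\}$ is a finite subset of $\A\times M_mX$. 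By definition, its convex hull $\mathcal{N}(t)=\conv\{(a_i,\mathcal{N}(f(i)))\mid i\in[n]\}$ is then a finitely generated convex subset of $\A\times M_mX$, i.e.\ an element of $C_\omega(\A\times M_mX)=M_{m+1}X=M_kX$, as claimed.

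The only points needing a moment's care are the degenerate case $n=0$, where $\mathcal{N}(t)=\conv(\emptyset)=\emptyset$ is still a (vacuously convex, finitely generated) element of $C_\omega(\A\times M_mX)$ — which matches the reading of the depth-$1$ constant $0$ as the empty set in $C_\omega(\A\times(-))$ — and the observation that the convex hull of a finite set is, by the very definition of $C_\omega$, automatically a legitimate element of $C_\omega$, so no separate well-definedness verification is required. I do not expect any genuine obstacle here: the content of the lemma is purely organisational, setting up the bookkeeping needed to subsequently identify the graded monad induced by $\JSL(\A)$ with the graded monad $\mathbb{G}$ for $G=C_\omega(\A\times(-))$.
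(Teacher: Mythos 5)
Your proof is correct and follows essentially the same route as the paper's: structural induction on terms, with variables landing in $M_0X=X$ and a term $a_1+\cdots+a_n(f)$ of depth $m+1$ landing in $C_\omega(\A\times M_mX)=M_{m+1}X$ via the induction hypothesis. The extra observations you add (that all $\Sigma$-terms are $\JSL(\A)$-defined since arities are discrete, and that the $n=0$ case yields the empty set, which is indeed the deadlock element of $C_\omega$) are sound and merely make explicit what the paper's terser argument leaves implicit.
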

\begin{proof}
By induction on terms. As a base case, we note that $\mathcal{N}(x)=x\in X=M_0 X$
for every variable $x$. Assume that $t= a_1+\cdots+a_n(f)$ for some word 
$a_1\cdots a_n\in\A^*$ and some $f\colon[n]\to\mathsf{T}_{\Sigma, m}(X)$. By induction, 
$\mathcal{N}(f(i))\in M_mX$. Hence 
$\mathcal{N}(t)= \conv\{(a_i, \mathcal{N}(f(i)))~|~i\in[n]\}\in C_{\omega}(\A\times M_mX).$ 
Since $d(a_1+\dots+a_n)+m= 1+m$, this concludes the proof.
\end{proof}

In particular, we may view the normalization map $\mathcal{N}$ as a family of 
functions
\[
\mathcal{N}_k:\mathsf{T}_{\Sigma, k}(X)\to M_kX.
\]

\begin{proposition}\cfnote{TODO}
  $\mathcal{N}(s)=\mathcal{N}(t)$ implies $s\sim t$.
\end{proposition}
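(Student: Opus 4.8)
The plan is to prove, by induction on $k$, the stronger statement

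$(\star_k)$: for all $\JSL(\A)$-defined terms $u,v\in\mathsf{T}_{\Sigma,k}(X)$, if $\mathcal{N}(u)\le\mathcal{N}(v)$ in $M_kX$ then $X\vdash_k u\le v$ is derivable.

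The proposition follows immediately: $\mathcal{N}(s)=\mathcal{N}(t)$ gives $\mathcal{N}(s)\le\mathcal{N}(t)$ and $\mathcal{N}(t)\le\mathcal{N}(s)$, hence $X\vdash_k s\le t$ and $X\vdash_k t\le s$, i.e.\ $s\sim t$. I expect the passage to the one-sided $(\star_k)$ to be forced: $\mathcal{N}(u)=\mathcal{N}(v)$ is equality of \emph{convex hulls}, which read off on generators yields only Egli-Milner comparabilities in both directions rather than a generator-by-generator equality, so a ``direct'' induction on the equality statement cannot close.

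The base case $k=0$ is immediate: depth-$0$ $\Sigma$-terms are variables, $\mathcal{N}$ is the identity on them, $M_0X=X$, and $x\le y$ in $X$ gives $X\vdash_0 x\le y$ by $(\mathsf{Var})$. For $k=m+1$, I would first note that every $\JSL(\A)$-defined term of depth $m+1$ has the shape $\sum_{i=1}^n a_i(f(i))$ for a tuple $(a_1,\dots,a_n)\in\A^*$ and $f\colon[n]\to\mathsf{T}_{\Sigma,m}(X)$ with each $f(i)$ again defined (by \autoref{C:arities}; the case $n=0$ is the constant $0$). So take $u=\sum_i a_i(f(i))$, $v=\sum_j b_j(g(j))$, and set $F_i=\mathcal{N}(f(i))$, $G_j=\mathcal{N}(g(j))$ in $M_mX$, so that $\mathcal{N}(u)=\conv\{(a_i,F_i)\mid i\le n\}$ and $\mathcal{N}(v)=\conv\{(b_j,G_j)\mid j\le\ell\}$. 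Since the order on $M_{m+1}X=C_\omega(\A\times M_mX)$ is Egli-Milner and membership in a finitely generated convex set is witnessed by a generator above and a generator below, the hypothesis $\mathcal{N}(u)\le\mathcal{N}(v)$ can be checked on generators; as $\A$ is discrete and the order on $\A\times M_mX$ is componentwise, this produces maps $\phi\colon[n]\to[\ell]$ and $\psi\colon[\ell]\to[n]$ with $a_i=b_{\phi(i)}$, $F_i\le G_{\phi(i)}$ for all $i$ and $a_{\psi(j)}=b_j$, $F_{\psi(j)}\le G_j$ for all $j$. The induction hypothesis $(\star_m)$ then yields derivable inequations $X\vdash_m f(i)\le g(\phi(i))$ and $X\vdash_m f(\psi(j))\le g(j)$.

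The remaining task --- the step I expect to require care --- is to assemble these into a derivation of $X\vdash_{m+1}u\le v$ \emph{without} using the unsound ``$w\le w+w'$'' (there is no axiom in $\JSL(\A)$ linking $+$ with $\le$; soundness forbids it). The trick is to pad both terms to a common shape using the idempotency/commutativity axioms before invoking monotonicity. Concretely, I would put $w_1=\sum_i a_i(f(i))+\sum_j a_{\psi(j)}(f(\psi(j)))$ and $w_2=\sum_i b_{\phi(i)}(g(\phi(i)))+\sum_j b_j(g(j))$. Since every $\psi(j)$ lies in $[n]$, the terms $\sum_i a_i(x_i)$ and $\sum_i a_i(x_i)+\sum_j a_{\psi(j)}(x_{\psi(j)})$ have identical sets of (action, variable) pairs, so their equality is an axiom; instantiating it along $x_i\mapsto f(i)$ via $(\mathsf{Ax1})$ (whose side condition is mere reflexivity of derivable inequality on the defined $f(i)$) gives $X\vdash_{m+1}u=w_1$, and symmetrically $X\vdash_{m+1}w_2=v$ using $\phi(i)\in[\ell]$. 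Finally $w_1$ and $w_2$ are $\sigma(e)$ and $\sigma(e')$ for the \emph{same} operation symbol $\sigma$, namely the one for the tuple $(a_1,\dots,a_n,a_{\psi(1)},\dots,a_{\psi(\ell)})$ --- the equalities $a_i=b_{\phi(i)}$, $a_{\psi(j)}=b_j$ being what let one recognise $\sigma(e')$ as $w_2$ --- with $e=(f(1),\dots,f(n),f(\psi(1)),\dots,f(\psi(\ell)))$ and $e'=(g(\phi(1)),\dots,g(\phi(n)),g(1),\dots,g(\ell))$ satisfying $X\vdash_m e(r)\le e'(r)$ for every $r$ by the inequations extracted above; a single application of $(\mathsf{Mon})$ (its definedness premises coming from \autoref{C:arities} and \autoref{P:subterms}) gives $X\vdash_{m+1}w_1\le w_2$, and $(\mathsf{Trans})$ chains $u=w_1\le w_2=v$. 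The degenerate case $n=0$ or $\ell=0$ forces $u=v=0$ and is trivial.

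In summary, the routine part is the induction and the rule manipulation; the genuine content is (i) realising the equality statement must be strengthened to the inequational $(\star_k)$ before the induction closes, and (ii) seeing that the naive monotone derivation of $u\le v$ is unsound, so the $+$-idempotency axioms must first bring both sides to a common shape. I would expect (ii) to be the easiest place to slip.
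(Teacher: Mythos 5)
There is nothing in the paper to compare your argument against: this proposition sits in a block of source that is excised from the compiled document (via the paper's \texttt{takeout} macro), its proof environment is empty, and it carries an explicit TODO note. It is one half of the completeness direction of the unproved claim in \autoref{E:induced} that $\JSL(\A)$ presents the graded monad with $M_n=(C_\omega(\A\times(-)))^n$. So I can only judge your proof on its own terms --- and it is correct.

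Both of the moves you identify as the genuine content are indeed necessary and are executed properly. The strengthening to the one-sided statement $(\star_k)$ is forced for exactly the reason you give: equality of $\conv\{(a_i,F_i)\}$ and $\conv\{(b_j,G_j)\}$ only yields, for each generator on one side, a generator above it and a (possibly different) generator below it on the other, so an equational induction hypothesis never becomes applicable. Your reduction of the Egli--Milner comparison of two finitely generated convex sets to the corresponding condition on generating sets is sound, since by the paper's definition of $\conv$ every element of $\conv(S)$ lies between two elements of~$S$. The padding step is likewise essential and correct: $(\mathsf{Mon})$ only compares terms carrying the same operation symbol, and the duplication $\sum_i a_i(x_i)=\sum_i a_i(x_i)+\sum_j a_{\psi(j)}(x_{\psi(j)})$ is a legitimate instance of the $\JSL(\A)$ schema because $\psi(j)\in[n]$ adds no new action--variable pairs; the identities $a_i=b_{\phi(i)}$ and $a_{\psi(j)}=b_j$ then make the two padded terms instances of the \emph{same} arity-$(n+\ell)$ operation, at which point the inequalities supplied by the induction hypothesis feed directly into $(\mathsf{Mon})$. (Your implicit observation that the absorption $a(x)+a(y)\le a(y)$ for $x\le y$ is derivable from monotonicity plus idempotency, while its converse is not, is precisely why the induced functor is $C_\omega$ rather than a plain finite powerset, and your derivation never uses the unsound converse.) The side conditions discharge as you say, via \autoref{C:arities}, \autoref{P:subterms}, and reflexivity for the discrete contexts in $(\mathsf{Ax1})$; the empty case is handled by noting that $\emptyset$ is Egli--Milner below only $\emptyset$. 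For a full justification of \autoref{E:induced} one would still need the converse soundness half, $s\sim t\Rightarrow\mathcal{N}(s)=\mathcal{N}(t)$, but that is not part of the stated proposition.
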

\begin{proof}

\end{proof}
}

\subsection{Details for \autoref{sec:logics}}\label{S:app-logics}

\begin{proof}[Details for \autoref{E:logics}]
  We prove that the interval $[0,1]$ equipped with the structure maps
  $o\colon \sdist[0,1] \to [0,1]$ and $\llbracket\langle
  a\rangle\rrbracket\colon\sdist(\A\times[0,1])\rightarrow[0,1]$ that
  we abbreviate as $\alpha$ and whose definition, in terms of formal
  convex combinations, is given by
  \[\textstyle
    \alpha\big(\sum_{i=1}^n p_i\cdot (a_i, r_i)\big) = \sum_{i=1}^n
    p_i r_i.
  \]
  is indeed an $M_1$-algebra. We abbreviate
  $\llbracket\langle a\rangle\rrbracket$ as $\alpha$. It is well-known
  that $o$, taking expected values, satisfies the laws of an
  Eilenberg-Moore algebra for $\sdist = M_0$. It remains to prove that
  $o$ and $\alpha$ are monotone and the two required instances of the
  squares in \eqref{D:gradalg} related $o$ and $\alpha$; one, called
  \emph{homomorphy}, stating that $\alpha$ is a homomorphism of
  $M_0$-algebras (i.e.~subconvex modules) and the other, called
  \emph{coequalization}, stating that the following diagram is a fork
  (i.e.~$\alpha$ merges the two parallel morphisms):
  \[
    \begin{tikzcd}[column sep = 15]
      M_1M_0 [0,1]
      \ar[yshift=3]{rr}{\mu^{1,0}_{[0,1]}}
      \ar[yshift=-3]{rr}[swap]{M_1o}
      &&
      M_1[0,1]
      \arrow{r}{\alpha}
      &
      {}[0,1]
    \end{tikzcd}
  \]
  \begin{enumerate}
  \item Monotonicity of $o$. Let $d_1, d_2 \in \sdist [0,1]$. Since
    subdivisions of a subconvex combination have the same expected
    value, it suffices to shows that $o(d_1) \leq o(d_2)$ whenever
    $d_1 \sqsubseteq d_2$. This is obvious from the definition of $\sqsubseteq$.

  \item Monotonicity of $\alpha$. Let $d_1 = \sum_{i=1}^n p_i\cdot
    (a_i,r_i)$ and $d_2 = \sum_{i=1}^m q_i \cdot (b_i, s_i)$ in $\sdist(\A
    \times [0,1])$. Since we clearly have $\alpha(d') = \alpha(d)$
    whenever $d'$ is a subdivision of a subconvex combination $d$, it suffices to show that
    $\alpha(d_1) \leq \alpha(d_2)$ whenever $d_1 \sqsubset d_2$. So
    suppose we have an injective map $f\colon [n] \to [m]$ with $p_i
    \leq q_{f(i)}$ and $r_i \leq s_{f(i)}$ for every $i \in [n]$. Then
    we clearly have
    \begin{align*}
      \alpha(d_1)
      &\textstyle= \sum_{i=1}^n p_ir_i \\
      &\textstyle\leq \sum_{i=1}^n q_{f(i)}  s_{f(i)} \\
      &\textstyle\leq \sum_{i=1}^m q_i s_i = \alpha(d_2).
    \end{align*}
    
  \item Homomorphy. We are to prove that the following square
    commutes:
    \[
      \begin{tikzcd}
        M_0M_1 [0,1]
        \ar{r}{\mu^{0,1}_{[0,1]}}
        \ar{d}[swap]{M_0\alpha}
        &
        M_1[0,1]
        \ar{d}{\alpha}
        \\
        M_1[0,1]
        \ar{r}{\alpha}
        &
        {}[0,1]
      \end{tikzcd}
    \]
    Let $\sum_{i=1}^n p_i \cdot d_i$ be an element of $M_0M_1[0,1]$ where $d_i = \sum_{j=1}^{m_i}
    q_{ij} \cdot (a_{ij},r_{ij})$ lies in $M_1[0,1] = \sdist(\A \times
    [0,1])$. We have
    \[\textstyle
      \mu^{0,1}_{[0,1]}\big(\sum_{i=1}^n p_i \cdot d_i\big) =
      \sum_{i=1}^n\sum_{j=1}^{m_i} p_iq_{ij} \cdot (a_{ij}, r_{ij}),
    \]
    which is mapped by $\alpha$ to
    \begin{equation}\label{eq:sum}
      \sum_{i=1}^n\sum_{j=1}^{m_i} p_iq_{ij}r_{ij}.
    \end{equation}
    We also have
    \begin{align*}
      \textstyle
      M_0\alpha\big(\sum_{i=1}^n p_i \cdot d_i\big)
      &\textstyle= \sum_{i=1}^n p_i \cdot \alpha(d_i)\\
      &\textstyle= \sum_{i=1}^n p_i \cdot \big(\sum_{j=1}^{m_i}
      q_{ij}r_{ij}\big),
    \end{align*}
    which is mapped by $\alpha$ to the sum in~\eqref{eq:sum} again.

  \item Coequalization. Let $\sum_{i=1}^n p_i \cdot (a_i,d_i)$ be an
    element of $M_1M_0[0,1]$ with $d_i = \sum_{j=1}^{m_i} q_{ij}\cdot
    r_{ij}$ in $M_0 [0,1] = \sdist[0,1]$. We have
    \[\textstyle
      \mu^{0,1}_{[0,1]}\big(\sum_{i=1}^n p_i \cdot (a_i,d_i)\big)
      =
      \sum_{i=1}^n\sum_{j=1}^{m_i} p_iq_{ij} \cdot (a_i,r_{ij}),
    \]
    which is mapped by $\alpha$ to the sum in~\eqref{eq:sum}. We also
    have
    \begin{align*}
      \textstyle
      M_1o\big(\sum_{i=1}^n p_i \cdot (a_i,d_i)\big)
      &\textstyle= \sum_{i=1}^n p_i \cdot (a_i, o(d_i)) \\
      &\textstyle= \sum_{i=1}^n p_i \cdot \big(\sum_{j=1}^{m_i} q_{ij}r_{ij}\big),
    \end{align*}
    which is mapped by $\alpha$ to the sum~\eqref{eq:sum} once again.\qedhere
  \end{enumerate}
\end{proof}

\subsection{Details for \autoref{sec:expressiveness}}\label{S:app-expressiveness}

\begin{proof}[Proof of \autoref{thm:expr}]
We proceed by induction on $n$ to show that, for every $n\in\omega$,
the family $\llbracket\varphi\rrbracket\colon M_n1\rightarrow\Omega$
of evaluations of graded formulae $\varphi\in\mathcal{L}_n$ is jointly
order-reflecting. The base case $n=0$ follows immediately from
depth-0 separation. Let $\mathfrak{A}$ denote the family of evaluation
maps $\llbracket\varphi\rrbracket\colon M_n1\rightarrow\Omega$ of
formulae $\varphi\in\mathcal{L}_n$. By the inductive
hypothesis, $\mathfrak{A}$ is jointly order-reflecting. 
 Moreover, for every $k$-ary propositional operator $p$ and every 
 $\varphi_1,\dots, \varphi_k\in\mathcal{L}_n$, we have that 
 $p(\varphi_1,\dots, \varphi_k)\in\mathcal{L}_n$ whence $\mathfrak{A}$ 
 is closed under propositional operators in $\mathcal{O}$. By depth-1 separation,
 it follows that the family 
\[
\Lambda(\mathfrak{A})= \{\llbracket L\rrbracket(\llbracket \varphi\rrbracket)\mid L\in\Lambda, \varphi\in\mathcal{L}_n\}
\]
is jointly order-reflecting. In other words, the family of evaluations of 
depth-($n+1$) formulae is jointly order-reflecting, as desired.

Finally, we will prove that $\mathcal{L}$ is expressive. Let $x,y$ be
states in $G$-coalgebras $\gamma\colon X\rightarrow GX$ and
$\delta\colon Y\rightarrow GY$ such that
$\llbracket\varphi\rrbracket_{\gamma}(x)\leq\llbracket\varphi\rrbracket_{\delta}(y)$
for all $\varphi\in\mathcal{L}$. We must prove that
$M_n!\cdot\gamma^{n}(x)\leq M_n!\cdot\delta^{n}(y)$ for all
$n\in\omega$. By 
assumption, we know that 
\begin{align*}
  \llbracket\varphi\rrbracket(M_n!\cdot\gamma^{(n)}(x))
  &= \llbracket\varphi\rrbracket_{\gamma}(x)\\
  &\leq \llbracket\varphi\rrbracket_{\delta}(y)\\
  &= \llbracket\varphi\rrbracket(M_n!\cdot\delta^{(n)}(y))
\end{align*}
for all $\varphi\in\mathcal{L}$ of uniform depth $n$ and for all $n\in\omega$, and we 
have just shown that the family $\llbracket\varphi\rrbracket\colon M_n1\rightarrow\Omega$ of 
evaluations of such formulae is jointly order-reflecting. Hence $x$ is an $(\alpha,\M)$-refinement of $y$,
as desired.
\end{proof}

\begin{proof}[Proof of \autoref{thn:coalg-sim-expr}]
Depth-0 separation is trivial, since $M_01=1$. For depth-1 separation,
we note first that for finitary~$G$, it suffices to check the
condition for finite $A_0$. But for finite~$A_0$ and a given jointly
order-reflecting set $\mathfrak A$ of monotone maps
$A_0\to\mathbbm 2$, \emph{every} monotone $h:A_0\to\mathbbm 2$ can be
written as a finite join of finite meets of elements of $\mathfrak A$,
so depth-1 separation is immediate from the separation condition
assumed in the theorem.
\end{proof}

\begin{proof}[Proof of \autoref{prop:simulation}]
\noindent Let $A$ be a canonical $M_1$-algebra with carriers $A_0, A_1$ and suppose that~$\mathfrak{A}$ 
is a jointly order-reflecting family of $M_0$-morphisms $A_0\rightarrow\mathbbm{2}$ closed under 
conjunction. 

Now, suppose that $x\not\leq y$ in $A_1$. Since $A$ is canonical, the map $a^{10}\colon \pow_{\omega}^{\down} (\A\times A_0)\rightarrow A_1$ 
is surjective, so $x,y$ have the form $x=a^{10}(\down \{(a_1, x_1),\dots, (a_n, x_n)\})$ and $y=a^{10}(\down \{(b_1, y_1),\dots, (b_n, y_n)\})$ for some $(a_i, x_i), (b_j, y_j)\in\A\times A_0$. By monotonicity of $a$, there exists $i\leq n$ such that $x_i\not\leq y_j$ for all $j$ such that $a_i=b_j$. As $\mathfrak{A}$ is jointly order-reflecting, it follows that for each~$j$ there is $f_j\in\mathfrak{A}$ such that $f_j(x_i)=\top$ and $f_j(y_j)=\bot$. Indeed, as $x_i\not\leq y_j$ it follows that there exists $f_j\in\mathfrak{A}$ such that $f_j(x_i)\not\leq f_j(y_j)$ in $\mathbbm{2}$, and this holds if and only if $f_j(x_i)=\top$ and $f_j(y_j)=\bot$. Now take $f\in\mathfrak{A}$ to be the conjunction of all~$f_j$.

Recall that the modal operator $\llbracket\lozenge_{a_i}\rrbracket\colon\pow_{\omega}^{\down} (\A\times\mathbbm{2})\to\mathbbm{2}$ is defined by $\llbracket\lozenge_{a_i}\rrbracket(S)=\top$ iff $(a_i, \top)\in S$, and $\llbracket \lozenge_{a_i}\rrbracket(f)$ is defined by commutation of the diagram
\[
\begin{tikzcd}[column sep = 40]
\pow^{\down}_{\omega}(\A\times A_0) \arrow[r, "\pow_{\omega}^{\down} (\A\times f)"] \arrow[d, "a^{10}"'] & \pow^{\down}_{\omega}(\A\times\mathbbm{2}) \arrow[d, "\llbracket\lozenge_{a_i}\rrbracket"] \\
A_1 \arrow[r, "\llbracket \lozenge_{a_i}\rrbracket(f)"]    & \mathbbm{2}                                         
\end{tikzcd}
\]
(an instance of Diagram~\eqref{Di:modality}). Thus,
$\llbracket\lozenge_{a_i}\rrbracket(f)(x)=\llbracket\lozenge_{a_i}\rrbracket\down (\{(a_1,
f(x_1)),\dots$, $(a_n, f(x_n))\}) = \top$, and similarly
$\llbracket\lozenge_{a_i}\rrbracket(f)(y)=\bot$,
i.e.~$\llbracket\lozenge_{a_i}\rrbracket(f)(x)\not\le\llbracket\lozenge_{a_i}\rrbracket(f)(y)$,
as required.
\end{proof}

\end{document}